\newcommand{\yes}{{\color{teal}\checkmark}}
\newcommand{\no}{{\color{red} \text{\sffamily X}}}
\newtheorem{theorem}{Theorem}[section]
\newtheorem{proposition}[theorem]{Proposition}
\newtheorem{lemma}[theorem]{Lemma}
\theoremstyle{definition}
\newtheorem{definition}[theorem]{Definition}
\newtheorem{example}[theorem]{Example}
\newtheorem{remark}[theorem]{Remark}
\newcommand*{\innerproofname}{Proof}
\newenvironment{innerproof}[1][\innerproofname]{\begin{proof}[#1]}{\end{proof}}
\newcommand{\util}{\textsc{Util}}
\newcommand{\egal}{\textsc{Egal}}
\newcommand{\avg}{\textsc{Avg}}
\newcommand{\maxrule}{\textsc{Max}}
\newcommand{\minrule}{\textsc{Min}}
\newcommand{\med}{\textsc{Med}}
\newcommand{\geo}{\textsc{Geo}}
\newcommand{\im}{\textsc{IM}}
\newcommand{\ladder}{\textsc{Ladder}}
\newcommand{\pu}{\textsc{PU}}
\renewcommand{\epsilon}{\varepsilon}
\title{\bf Settling the Score: Portioning with Cardinal Preferences}
\author[1]{Edith Elkind}
\author[2]{Matthias Greger}
\author[3]{Patrick Lederer}
\author[4]{Warut Suksompong}
\author[5]{Nicholas Teh}
\affil[1]{Northwestern University, USA}
\affil[2]{LAMSADE, Université Paris Dauphine - PSL, France}
\affil[3]{University of New South Wales, Australia}
\affil[4]{National University of Singapore, Singapore}
\affil[5]{University of Oxford, UK}
\newcommand{\mathI}{\mathcal{I}}
\date{\vspace{-10mm}}
\begin{document}

\maketitle

\begin{abstract}
We study a portioning setting in which a public resource such as time or money is to be divided among a given set of candidates,
and each agent proposes a division of the resource. We consider two families of aggregation rules for this setting---those based on coordinate-wise aggregation and those that optimize some notion of welfare---as well as the recently proposed independent markets rule. We provide a detailed analysis of these rules from
an axiomatic perspective, both for classic axioms, such as strategyproofness and Pareto optimality, and for novel axioms, some of which aim to capture proportionality in this setting. 
Our results indicate that a simple rule that computes the average of the proposals satisfies many of our axioms and fares better than all other considered rules in terms of fairness properties. 
We complement these results by presenting two characterizations of the average rule.
\end{abstract}

\section{Introduction}
A town council has just received its annual funding from the government, and it needs to determine how to split the budget among constructing new facilities, maintaining clean streets, and ensuring public safety.
The mayor is in favor of making decisions democratically, so she asks each resident of the town to propose a division of the budget.
After collecting the proposals, how should the council aggregate them into an actual allocation?

In the problem of \emph{portioning}, the aim is to divide a homogeneous resource among a given set of candidates.
Besides dividing money, another important application of portioning is the division of time---for example, a conference needs to distribute its time among research talks, panels, and social gatherings.
Several prior works on portioning assumed that each agent submits her preferences in the form of either an approval ballot \citep{bogomolnaia2005collective,duddy2015fairsharing,aziz2020fairmixing} or an ordinal ranking~\citep{airiau2019portioning}.
However, in many portioning scenarios, these preference formats are not expressive enough to fully describe agents' intentions.
For instance, if a citizen wants the budget to be used both for constructing new facilities and for cleaning the streets, but with twice as much money spent on the former than the latter, her preference cannot be captured by a ranking or an approval set.
Likewise, a conference attendee who would like 75\% of the time to be spent on research talks, 15\% on panels, and 10\% on social gatherings ranks these activities in the same way as another attendee who prefers a 40\%--35\%--25\% split, but the actual preferences of these two attendees are quite different.

In an important recent work,~\cite{freeman2021truthfulbudget} studied portioning with cardinal preferences, where every agent is asked to propose a division of the resource.
This input format allows each agent to specify what she views as the ideal portioning outcome, and is therefore much more descriptive than the two formats discussed earlier.
Assuming that an agent's disutility is given by the $\ell_1$ distance between her ideal distribution and the actual outcome,\footnote{\citet{freeman2021truthfulbudget} noted that $\ell_1$ preferences arise naturally when agents are endowed with separable uniform utilities over candidates together with a funding cap.} Freeman et al.~observed that even though the rule that maximizes the utilitarian social welfare is known to be strategyproof (for a specific tie-breaking convention) \citep{lindner2008allocating,goel2019knapsack}, it tends to put too much weight on majority preferences.
In light of this observation, they introduced the \emph{independent markets} (\im) rule, which is strategyproof and, in some sense, more proportional.
Their work inspired a number of follow-up papers in this fundamental social choice setting, mostly focusing on strategyproofness \citep{caragiannis2024truthful,deberg2024truthful,freeman2024project,brandt2024optimal}.\footnote{For an overview of this and other related lines of work, we refer to the survey by \citet{suksompong2026voting}.}
However, while strategyproofness is an important consideration, there may be scenarios where other features of aggregation rules are just as---if not more---desirable. 
Thus, to help decision-makers identify suitable aggregation rules for their applications, it would be useful to \emph{(i)} build catalogues of axioms and popular aggregation rules for the portioning setting, \emph{(ii)} determine which of these axioms are satisfied by each of the aggregation rules, and \emph{(iii)} characterize some of the most important rules in terms of these axioms. 

\subsection{Overview of Contributions}

We consider a diverse set of axioms for portioning with cardinal preferences. Besides classic axioms such as strategyproofness and Pareto optimality, we put forward new axioms including score-unanimity and score-representation (see Section~\ref{sec:preliminaries} for definitions).
Several of our axioms do not depend on the underlying utility functions; for those where the utility functions matter, following most prior works in this domain, we assume $\ell_1$ utilities.
We then conduct a systematic study of aggregation rules with respect
to these axioms. 
We focus on two families of portioning rules---those that are based on coordinate-wise aggregation and those that optimize some notion of welfare---as well as the recently proposed IM rule \citep{freeman2021truthfulbudget}.
We also include observations regarding relationships between the axioms.
\Cref{tab:summary} summarizes our results. 

\begin{table}[t] 
\begin{center}
\begin{tabular}{ |c||c|c|c|c|c||c|c||c| } 
 \hline
 \multicolumn{1}{|c||}{} & 
 \multicolumn{5}{|c||}{ Coordinate-wise } &
 \multicolumn{2}{|c||}{Welfare-based} &
 \multicolumn{1}{|c|}{Other}
 \\
 \hline
 $F$ & \avg{} & \maxrule{} & \minrule{} & \med{} & \geo{} & \util{} & \egal{} & \im{} \\
 \hline
Pareto Optimality & \no$^{\ddagger}$  & \no$^\dagger$ & \no$^\dagger$ & \no$^\ddagger$ & \no$^\dagger$ & \yes & \yes &  \no$^\dagger$ \\
  \hline
Range-respect & \yes & \no$^\dagger$ & \no$^\dagger$ & \no$^\ddagger$  & \no$^\dagger$ & \yes & \yes & \no$^\dagger$ \\
 \hline
 Score-unanimity & \yes & \no$^\dagger$ & \no$^\dagger$ &  \no$^\ddagger$ & \no$^\dagger$ & \yes & \yes & \no$^\dagger$ \\
 \hline
  Score-representation & \yes & \no & \no & \no$^*$ & \no & \no & \no$^*$ &  \no$^\dagger$
 \\
  \hline
   Single-minded Proportionality & \yes & \no$^*$ & \no & \no$^*$ &   \no & \no$^*$ &  \no$^*$ & \yes
 \\
  \hline
  Independence & \yes & \no$^\dagger$ & \no$^\dagger$ & \no$^\ddagger$ & \no$^\dagger$ & \no$^\dagger$ & \no$^\ddagger$ & \no$^\dagger$ 
  \\
   \hline
 Score-monotonicity & \yes & \yes & \yes & \yes & \yes & \yes & \no$^\ddagger$ & \yes \\
 \hline
  Reinforcement & \yes & \yes & \yes & \no & \yes & \yes & \yes & \yes \\
 \hline
 Strategyproofness & \no & \no & \no & \no & \no & \yes & \no & \yes \\
  \hline
 Participation & \yes & \yes & \yes & \no & \yes & \yes & \yes &  \yes \\
 \hline
\end{tabular}
\caption[Summary]{Summary of our results.
The asterisk symbol ($^*$) means that the axiom is satisfied for $n=2$, but may fail when $n \geq 3$ (even if $m = 2$).
The dagger symbol ($\dagger$) indicates that the axiom is satisfied for $m=2$, but may fail when $m \geq 3$ (even if $n = 2$).
The double dagger symbol ($\ddagger$) indicates that the axiom is satisfied when $\min(n,m) = 2$.   
The results on single-minded proportionality, score-monotonicity, reinforcement, strategyproofness, and participation for \util{} and \im{} were obtained by \citet{freeman2021truthfulbudget}.
All results on \geo{} and on independence, as well as reinforcement and participation for \med{} and score-monotonicity for \egal{}, are new compared to the conference version of this paper \citep{elkind2023portioning}.
In \Cref{app:moving-phantoms}, we consider two moving phantoms rules that have been proposed more recently---the \emph{piecewise uniform} rule \citep{caragiannis2024truthful} and the \emph{ladder} rule \citep{freeman2024project}---and show that their axiomatic behavior is similar to that exhibited by \im{}.
}
\label{tab:summary}
\end{center}
\end{table}

Our findings offer several insights on portioning rules.
As shown in Table 1, the most promising rules with respect to the axioms that we study are the average rule (\avg), which simply returns the average of all the proposals, and the utilitarian welfare-maximizing rule (\textsc{Util}), with the trade-off being that \avg{} fails strategyproofness and Pareto optimality whereas \textsc{Util} fails fairness and consistency notions such as single-minded proportionality,\footnote{This property was simply called ``proportionality'' by \citet{freeman2021truthfulbudget}. 
However, the property only applies to instances in which all agents are ``single-minded'', thereby making it rather weak compared to proportionality notions in other settings (e.g., fair division \citep{procaccia2016cake}). Hence, in this paper we call this property ``single-minded proportionality''.} score-representation, and independence. 
While \im{} satisfies both strategyproofness and single-minded proportionality, it fails other intuitive properties such as score-unanimity and score-representation; these failures can lead to highly counterintuitive outcomes.
The axiomatic properties of various rules can be used to inform decision-making in a wide range of settings.
For instance, consider again the scenario where a conference organizer needs to divide time among different activities at a conference. 
In this case, it is likely difficult for an attendee to accurately predict what other attendees’ preferences are,
making strategyproofness arguably less relevant as a consideration. 
On the other hand, strategyproofness could be more important in smaller-scale settings where agents know each other well, e.g., portioning within a family or a small organization.  
Moreover, intuitive properties such as score-unanimity and range-respect may be essential in settings where votes are revealed: for example, if all agents allocate $80\%$ of the budget to a certain activity, but the rule allocates $60\%$ to it, this may well lead to dissatisfaction among agents regarding the use of that rule.

In addition to fulfilling several axioms, the average rule is intuitive and easy to explain to laypeople.
We further strengthen the case for using this rule by providing two characterizations of it.
Specifically, we show that the average rule is the only aggregation rule satisfying score-unanimity (i.e., if all agents allocate a $\gamma$ fraction of the resource to a candidate, then the rule also assigns a $\gamma$ fraction of the resource to it), independence (i.e., the fraction allocated to a candidate only depends on the fractions that the agents allocate to this candidate), and a mild fairness condition called anonymity, when there are at least three candidates.
We also prove that, within the class of coordinate-wise rules, the average rule is the unique rule that satisfies score-unanimity, anonymity, and continuity whenever the number of candidates is at least four.

\subsection{Further Related Work} 
\label{sec:relatedwork}

Portioning can be viewed as a variant of \emph{participatory budgeting}, a framework that allows citizens to democratically decide how the public budget should be spent.
Participatory budgeting has been used in over 7,000 cities around the world~\citep{PB2025} and received much recent interest in computational social choice---see, for example, the surveys by \citet{aziz-shah} and \citet{RSM25a}.
Nevertheless, most of the participatory budgeting literature focuses on the discrete setting, where each project is either implemented in full or not implemented at all, and projects may have varying costs (see, however, the recent work of \citet{goyal2023pb}).
This makes the nature of the problem quite different from that of portioning.

As mentioned earlier, \citet{freeman2021truthfulbudget} investigated portioning with cardinal preferences and introduced \im{}, which is strategyproof and single-minded proportional under $\ell_1$ utilities.
In fact, \im{} belongs to a class of \emph{moving phantoms} mechanisms, all of which are strategyproof.\footnote{However, not all strategyproof rules are moving phantoms mechanisms \citep{deberg2024truthful}.}
\citet{caragiannis2024truthful} followed up on their work by examining the deviation of moving phantoms mechanisms from the average rule according to the $\ell_1$ distance, while \citet{freeman2024project} explored a similar question using the $\ell_\infty$ distance.
\citet{brandt2024optimal} showed that no rule can simultaneously be strategyproof, single-minded proportional, and Pareto optimal under $\ell_1$ or $\ell_\infty$ utilities, but such a rule exists for an alternative utility model. 
\citet{schmidt2025discrete} studied a discrete version of portioning in which only integral amounts can be assigned to candidates.

Portioning also bears similarities to the domain of \emph{probabilistic social choice}, where the output is likewise a fractional allocation.
However, unlike in portioning, in probabilistic social choice 
the fraction allocated to each candidate is interpreted as the probability of eventually choosing this candidate as the unique winner.
As a consequence, it is usually desirable for a rule to minimize the use of randomness, with fairness and proportionality being relatively less important.
Moreover, much of the work in probabilistic social choice assumes that agents' preferences are given as ordinal rankings (see, e.g., the survey by \citet{brandt2017rolling}).
A notable exception is the work of \citet{intriligator1973probsc}, which postulates that each agent has an ideal distribution over the candidates, but does not consider utility functions of agents.
Intriligator gave a characterization of the average rule, which was later identified as incorrect and fixed by \citet{rice1977comment}.
In particular, Intriligator's (and Rice's) characterization relies on two axioms: \emph{(i)}~loser-unanimity, which requires that a candidate receiving a score of~$0$ from all agents should also receive~$0$ from the actual distribution, and \emph{(ii)} strict and equal sensitivity to individual preferences, which requires that if an agent has more ``power'' and allocates that power to a particular candidate, then the outcome should only change for this candidate, and this change should only depend on the absolute magnitude of the change in the agent's preference.
While loser-unanimity is weaker than score-unanimity, which we use for our characterizations, strict and equal sensitivity to individual preferences is an extremely strong condition---for example, it is violated even by a dictatorial or a constant aggregation rule.

Finally, another related topic is \emph{probabilistic opinion pooling}, where the aim is to aggregate probabilistic beliefs representing, for example, weather forecasts \citep{genest1986combining,clemen1989forecasts}.
The focus of probabilistic opinion pooling is mainly to preserve epistemic and stochastic properties, which again leads to different axioms being considered.

\section{Preliminaries} \label{sec:preliminaries}

We present the model of portioning with cardinal preferences, and introduce the rules and axioms that we will study.

\subsection{Model}

Let $[t] := \{1,\dots,t\}$ for any positive integer~$t$.
Assume that there is a set of $n \ge 2$ agents, $N = [n]$, who report their preferences as ideal distributions of a homogeneous resource among a set $C=\{c_1,\dots, c_m\}$ of $m \ge 2$ candidates. 
Specifically, letting $\Delta^m := \{\mathbf{x} \in \mathbb{R}^m_{\ge 0} \colon \sum_{j \in [m]}x_j=1\}$ denote the set of probability distributions over $C$, we assume that each agent $i\in N$ reports her preferences as a distribution $\mathbf{s}_i \in \Delta^m$. We typically refer to $\mathbf{s}_i$ as agent $i$'s \emph{score vector}, and write $\mathbf{s}_i=(s_{i,1}, \dots, s_{i,m})$ to specify this vector. 
An {\em instance} $\mathcal{I}$ of our problem is the collection of the preferences of all agents, i.e., $\mathcal{I} = (\mathbf{s}_1,\dots,\mathbf{s}_n)$. For each vector $\mathbf{x} = (x_1,\dots,x_m)$, agent $i$'s \emph{disutility} is defined as $d_i(\mathbf{x}) := \sum_{j \in [m]} |s_{i,j} - x_j|$, which is the $\ell_1$ distance between the agent's score vector $\mathbf{s}_i$ and~$\mathbf{x}$. 
Given an instance~$\mathcal I$, we aim to find a vector $\mathbf{x} \in \Delta^m$ that reflects the agents' collective preferences. To this end, we use \emph{aggregation rules}, which are defined as follows.

\begin{definition}[Aggregation rule]
    An \emph{aggregation rule} $F$ is a function $F : \Delta^{m \times n} \to \Delta^m$ that maps every instance $\mathcal{I} \in \Delta^{m \times n}$ to an outcome vector $\mathbf{x} \in \Delta^m$.
\end{definition}

We will frequently use the notation $F(\mathcal{I})_j$ to denote the probability that the aggregation rule $F$ assigns to candidate $c_j$ on instance $\mathcal{I}$.

\subsection{Aggregation Rules}\label{subsec:rules}

In this paper, we will focus on two natural classes of aggregation rules, namely, coordinate-wise rules and welfare-optimizing rules. In addition, we will also consider the independent markets rule
of~\cite{freeman2021truthfulbudget}.

\subsubsection{Coordinate-wise Aggregation Rules}

We first introduce the class of \emph{coordinate-wise} aggregation rules. The idea behind these rules is to aggregate the reported scores for each candidate individually and then normalize the aggregated scores so that they sum up to $1$.

\begin{definition} \label{defn:coord_rules}
    An aggregation rule $F$ is \emph{coordinate-wise}
    if, for each $n$ and each $j\in [m]$, there exist {\em coordinate-aggre\-ga\-tion functions} $f_j^n: (\mathbb{R}_{\ge 0})^n \to \mathbb{R}_{\ge 0}$
     such that $F(\mathI)_j= \frac{f_j^n(s_{1,j}, \dots, s_{n,j})}{\sum_{k \in [m]} f_{k}^n(s_{1,k}, \dots, s_{n,k})}$ for all instances $\mathcal{I}$.
\end{definition}

We extend Definition~\ref{defn:coord_rules} to allow for the possibility that, for certain instances, $f_j^n(s_{1,j},\dots, s_{n,j})=0$ for all $j\in [m]$:
in this case, we assign each candidate the same probability $\frac1{m}$. 
We remark that our negative results do not depend on this tie-breaking convention.
When $f_j^n$ is the same for all $j\in [m]$, we omit the subscript~$j$ and write $f^n$.
Furthermore, we may omit
the superscript~$n$ when it is clear from the context, and simply write $f_j$ or $f$.

We will focus on five natural coordinate-wise aggregation rules, where $f$ is the \emph{average}, \emph{maximum}, \emph{minimum}, 
\emph{median} (if the number of agents is even, we take the average of the two middle scores),\footnote{This is the most commonly used definition of the median in practice. However, even when $m = 2$, this definition does not lead to a strategyproof rule, whereas other definitions do (e.g., taking the smaller of the two middle scores) \citep{Moul80a}.} or \emph{geometric mean} function. 
For brevity, we refer to these rules as \avg, \maxrule{}, \minrule{}, \med{}, and \geo{}, respectively.
The advantage of these rules is that they are intuitive and easily computable.
The proposition below identifies two special cases where the normalization factor 
$\sum_{k \in [m]} f_{k}^n(s_{1,k}, \dots, s_{n,k})$ is equal to~$1$ and hence normalization is not required.
\begin{proposition}\label{prop:nonorm}
    Consider an instance $\mathI$, and let $\mathbf x$ and $\mathbf y$ be the outputs of \avg{} and \med{} on $\mathI$, respectively.
    \begin{itemize}
    \item[(a)]
    For each $j\in [m]$, it holds that $x_j=\frac{1}{n}\cdot\sum_{i\in N} s_{i, j}$.
    \item[(b)]
    If $m=2$, then for $j\in[2]$, it holds that
    $y_j = \text{\em med}(s_{1, j}, \dots, s_{n, j})$.
    \end{itemize}
    \end{proposition}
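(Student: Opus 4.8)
The plan is to show in both cases that the normalization factor $\sum_{k\in[m]} f^n(s_{1,k},\dots,s_{n,k})$ equals~$1$. Once this is established, the coordinate-aggregated scores already form a probability distribution, so the defining formula of a coordinate-wise rule collapses to $F(\mathI)_j = f^n(s_{1,j},\dots,s_{n,j})$, which is exactly the claimed expression.

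For part~(a), I would interchange the order of summation in the normalization factor. Since $f^n$ is the average, the factor becomes $\sum_{k\in[m]}\frac{1}{n}\sum_{i\in N}s_{i,k} = \frac{1}{n}\sum_{i\in N}\sum_{k\in[m]}s_{i,k}$. Each inner sum equals $1$ because $\mathbf{s}_i\in\Delta^m$, so the whole expression equals $\frac{1}{n}\cdot n = 1$. This immediately yields $x_j = \frac{1}{n}\sum_{i\in N}s_{i,j}$.

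For part~(b), the starting observation is that when $m=2$ we have $s_{i,2} = 1 - s_{i,1}$ for every agent~$i$, so the two coordinate-wise profiles are reflections of each other under $x\mapsto 1-x$. The central step is to prove that the median commutes with this reflection, namely $\text{med}(1-s_{1,1},\dots,1-s_{n,1}) = 1 - \text{med}(s_{1,1},\dots,s_{n,1})$. This holds because $x\mapsto 1-x$ is order-reversing: sorting the values $s_{i,1}$ increasingly is the same as sorting the values $1-s_{i,1}$ decreasingly, so the middle order statistic (for odd~$n$) or the average of the two middle order statistics (for even~$n$) transforms correctly under the reflection. Given this identity, the normalization factor is $\text{med}(s_{1,1},\dots,s_{n,1}) + \text{med}(1-s_{1,1},\dots,1-s_{n,1}) = 1$, and hence $y_j = \text{med}(s_{1,j},\dots,s_{n,j})$.

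The only place requiring care is the reflection identity for the median in part~(b), and in particular the even-$n$ case, where one must verify that the two middle order statistics of the reflected profile are precisely the reflections of the two middle order statistics of the original profile. This is a short but slightly fiddly index-tracking argument, and it is the one step where the tie-breaking convention for the median (averaging the two middle values) genuinely matters; everything else is an immediate consequence of $\mathbf{s}_i$ being a probability distribution.
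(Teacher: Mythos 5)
Your proposal is correct and follows essentially the same route as the paper: part~(a) is the identical interchange-of-summation argument, and your reflection identity $\text{med}(1-s_{1,1},\dots,1-s_{n,1}) = 1 - \text{med}(s_{1,1},\dots,s_{n,1})$ is just a repackaging of the paper's observation that sorting coordinate~$1$ decreasingly sorts coordinate~$2$ increasingly, so the median agent(s) coincide and the two medians sum to~$1$. Your flagged ``fiddly'' even-$n$ index-tracking step goes through exactly as you describe, so there is no gap.
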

    \begin{proof}
        For part~(a), it suffices to note that
        $$
        \sum_{j \in [m]} \frac{1}{n}\left(\sum_{i\in N} s_{i, j}
        \right) = \frac{1}{n}\sum_{i\in N}\sum_{j\in [m]} s_{i, j} = \frac{1}{n}\sum_{i\in N} 1 = 1.
        $$
        For part~(b), assume without loss of generality that 
        $s_{1, 1}\ge s_{2, 1}\ge\dots\ge s_{n, 1}$. As $s_{i, 1}=1-s_{i, 2}$ for all $i\in N$, we have 
        $s_{1, 2}\le s_{2, 2}\le\dots\le s_{n, 2}$.
        Consequently, if $n$ is odd, we have 
        $\text{med}(s_{1, 1}, \dots, s_{n, 1}) = s_{\frac{n+1}{2}, 1}$ and 
        $\text{med}(s_{1, 2}, \dots, s_{n, 2}) = s_{\frac{n+1}{2}, 2}$, while
        if $n$ is even, we have 
        $\text{med}(s_{1, 1}, \dots, s_{n, 1}) = (s_{\frac{n}{2}, 1}+s_{\frac{n+2}{2}, 1})/2$ and 
        $\text{med}(s_{1, 2}, \dots, s_{n, 2}) = (s_{\frac{n}{2}, 2}+s_{\frac{n+2}{2}, 2})/2$.
        In both cases, since $s_{i, 1}+s_{i, 2}=1$ for all $i\in N$, it holds that
        $\text{med}(s_{1, 1}, \dots, s_{n, 1})+\text{med}(s_{1, 2}, \dots, s_{n, 2})=1$.
    \end{proof}

\subsubsection{Welfare-based Aggregation Rules}

For our second class, we consider rules that are based on welfare optimization. In particular, we focus on two popular welfare criteria:\footnote{There is a third popular welfare criterion called \emph{Nash welfare}, which is defined based on the {product} of utilities. However, this welfare notion is not well-defined in our setting, as we are considering disutilities. For example, it has been observed that there is no natural equivalent of Nash welfare in the fair allocation of chores \citep{freeman2020chores,ebadian2022chores}.} the \emph{utilitarian welfare} $-\sum_{i\in N}d_i(\mathbf{x})$ and the \emph{egalitarian welfare} $\min_{i\in N}(-d_i(\mathbf{x}))$. 
The minus sign ensures that these definitions indeed capture the welfare, as $d_i$ measures the disutility (rather than the utility) of agent $i$. 
The \emph{utilitarian rule} (\util{}) and the \emph{egalitarian rule} (\egal{}) then return an outcome that maximizes the utilitarian and egalitarian welfare, respectively. 
More formally, for each instance $\mathI$, \util{} chooses an outcome $\mathbf{x}_\util{}$ such that $\mathbf{x}_\util{}\in \arg\max_{\mathbf{x}} (-\sum_{i\in N}d_i(\mathbf{x}))$, while \egal{} returns 
$\mathbf{x}_\egal{}$ such that $\mathbf{x}_\egal{}\in \arg\max_{\mathbf{x}} \min_{i\in N} (-d_i(\mathbf{x}))$.

For both of these rules, tie-breaking is important. 
Following~\cite{freeman2021truthfulbudget}, we consider the variant of \util{} that breaks ties in favor of the maximum-entropy division. Specifically, we assume that \util{} outputs the utilitarian welfare-maximizing outcome $\mathbf{x}$ that minimizes the quantity $\sum_{j \in [m]} (x_j - \frac1m)^2$,
i.e., the $\ell_2$ distance to the uniform distribution $\mathbf{x}_u = (\frac1m, \dots, \frac1m)$.
This tie-breaking choice is neutral for candidates and ensures strategyproofness \citep{lindner2008allocating}.

For \egal{}, if there are multiple outcomes that maximize the egalitarian welfare, then we break ties in a ``leximin'' manner.
Formally, given two vectors $v, v'\in \mathbb R^t$ for some $t\in{\mathbb N}$, we write $v>_\mathit{lex} v'$ if there is an integer $k$ such that $v_{\ell}=v'_\ell$ for all $\ell\in [k-1]$ and $v_{k}>v_{k}'$. Given an outcome ${\mathbf x}$ for an $n$-agent instance $\mathI$, let $v({\mathbf x})$ be the list of all agents' disutilities from $\mathbf x$ (as non-negative numbers), sorted in non-increasing order. 
We require that, if \egal{} outputs $\mathbf x$ on $\mathI$, then for each outcome ${\mathbf x}'$ it holds that either 
$v({\mathbf x})=v({\mathbf x}')$ 
or $v({\mathbf x}')>_\mathit{lex} v({\mathbf x})$.
This type of leximin tie-breaking is standard when dealing with egalitarian welfare \citep[e.g.,][]{bogomolnaia2004random,kurokawa2018leximin}.
However, even after this tie-breaking process, there may still be multiple \egal{} outcomes. 
We show next that if $n=2$, \avg{} always returns an \egal{} outcome, so we assume that the \egal{} rule coincides with \avg{} in this case. By contrast, our results for $n\ge 3$ will not depend on this latter tie-breaking rule, and we allow \egal{} to break such ties in any consistent manner (i.e., if $\mathbf{x}$ and $\mathbf{x}'$ are both \egal{} outcomes for two instances $\mathI$ and $\mathI'$, then we assume that if \egal{} chooses $\mathbf{x}$ for $\mathI$, it does not choose $\mathbf{x}'$ for $\mathI'$).

\begin{proposition} \label{prop:welfare-egal_is_sum}
    When $n=2$, the output of \avg{} is an \emph{\egal{}} outcome.
\end{proposition}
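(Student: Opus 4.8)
The plan is to use \Cref{prop:nonorm}(a), which tells us that on a two-agent instance $\mathI=(\mathbf{s}_1,\mathbf{s}_2)$ the rule \avg{} simply returns the midpoint $\mathbf{x}=\tfrac12(\mathbf{s}_1+\mathbf{s}_2)$. My first step is to compute the two agents' disutilities from this midpoint explicitly. Writing $D:=\sum_{j\in[m]}|s_{1,j}-s_{2,j}|$ for the $\ell_1$ distance between the two score vectors, a coordinate-wise calculation gives $d_1(\mathbf{x})=\sum_{j\in[m]}\bigl|s_{1,j}-\tfrac{s_{1,j}+s_{2,j}}{2}\bigr|=\tfrac12\sum_{j\in[m]}|s_{1,j}-s_{2,j}|=\tfrac{D}{2}$, and by symmetry $d_2(\mathbf{x})=\tfrac{D}{2}$ as well. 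Hence under \avg{} both agents incur exactly the same disutility $\tfrac{D}{2}$, so the largest disutility equals $\tfrac{D}{2}$ and the egalitarian welfare of $\mathbf{x}$ is $-\tfrac{D}{2}$.

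The second step is a matching lower bound. Since maximizing egalitarian welfare is the same as minimizing the maximum disutility, I want to show that no outcome can make the larger disutility smaller than $\tfrac{D}{2}$. For an arbitrary $\mathbf{x}'\in\Delta^m$, the triangle inequality for the $\ell_1$ norm yields $d_1(\mathbf{x}')+d_2(\mathbf{x}')=\|\mathbf{s}_1-\mathbf{x}'\|_1+\|\mathbf{x}'-\mathbf{s}_2\|_1\ge\|\mathbf{s}_1-\mathbf{s}_2\|_1=D$, and therefore $\max\{d_1(\mathbf{x}'),d_2(\mathbf{x}')\}\ge\tfrac12\bigl(d_1(\mathbf{x}')+d_2(\mathbf{x}')\bigr)\ge\tfrac{D}{2}$. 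Combining this with the first step shows that $\mathbf{x}$ attains the minimum possible maximum disutility, so $\mathbf{x}$ maximizes egalitarian welfare.

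The one point that requires care, and the only real subtlety here, is the leximin tie-breaking in the definition of \egal{}: I must confirm that the average is not discarded in favor of some other welfare-optimal outcome. For $n=2$ the sorted disutility vector of $\mathbf{x}$ is $v(\mathbf{x})=(\tfrac{D}{2},\tfrac{D}{2})$. For any other outcome $\mathbf{x}'$, the bound above gives $\max\{d_1(\mathbf{x}'),d_2(\mathbf{x}')\}\ge\tfrac{D}{2}$; if this is a strict inequality then $v(\mathbf{x}')>_\mathit{lex}v(\mathbf{x})$, while if it is an equality then $d_1(\mathbf{x}'),d_2(\mathbf{x}')\le\tfrac{D}{2}$ together with $d_1(\mathbf{x}')+d_2(\mathbf{x}')\ge D$ forces both to equal $\tfrac{D}{2}$, so $v(\mathbf{x}')=v(\mathbf{x})$. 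In either case the leximin requirement ($v(\mathbf{x}')=v(\mathbf{x})$ or $v(\mathbf{x}')>_\mathit{lex}v(\mathbf{x})$) is met, so $\mathbf{x}$ is a legitimate \egal{} output. The conceptual content of the argument is thus entirely in the triangle inequality; the main obstacle is merely bookkeeping the tie-breaking convention so as to justify that the equalized midpoint is a valid choice.
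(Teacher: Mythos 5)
Your proof is correct and follows essentially the same route as the paper's: compute that both agents incur disutility $\tfrac12\sum_{j\in[m]}|s_{1,j}-s_{2,j}|$ at the midpoint, then use the triangle inequality to show no outcome can achieve a smaller maximum disutility. Your additional verification that the leximin tie-breaking does not discard the average is a careful touch the paper leaves implicit, but it does not change the substance of the argument.
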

\begin{proof}
    Let $\mathbf{x}$ be the output of \avg{} for the case $n=2$. Then $x_j = \frac{s_{1,j}+s_{2,j}}{2}$ for each $j \in [m]$, so
    $d_1(\mathbf{x}) = d_2(\mathbf{x})=\frac12\,\sum_{j\in [m]} |s_{1,j} - s_{2,j}|$. 
    On the other hand, for every $\mathbf{x}'$, we have
    $d_1(\mathbf{x}') + d_2(\mathbf{x}') \ge \sum_{j\in [m]} |s_{1,j} - s_{2,j}|$ and therefore $\max\{d_1(\mathbf{x}'), d_2(\mathbf{x}')\}\ge \frac12\,\sum_{j\in [m]} |s_{1,j} - s_{2,j}|$.  It follows that $\mathbf{x}$ is an \egal{} outcome.
\end{proof}

We also note that both \util{} and \egal{} (with the given tie-breaking conventions) can be computed in polynomial time. For \util{}, this follows from the results of \citet{freeman2021truthfulbudget}, and for \egal{}, we prove this claim in the appendix (see \Cref{thm:egal_compute}).

\subsubsection{Independent Markets Rule}

The last aggregation rule that we will study is the \emph{independent markets} (\im) rule of \citet{freeman2021truthfulbudget}, which belongs to the class of \emph{moving phantoms} rules. 
For each candidate, these rules take the median of the agents' reports and $n+1$ phantom values. 
When there are two candidates, one may fix these phantom values in advance---the resulting ``phantom median rules'' correspond to all strategyproof rules \citep{Moul80a}.
However, with more than two candidates, this approach may lead to an unnormalized score vector.
Because the final scores of all candidates must sum up to $1$, the phantom values therefore cannot be constant. 
Instead, \citet{freeman2021truthfulbudget} proposed using $n+1$ phantom functions $f_0,\dots,f_n:[0,1]\to [0,1]$ that are continuous, weakly increasing, and satisfy $f_k(0)=0$ and $f_{\sigma(k)}(1)\ge k/n$ for all $k\in\{0,\dots,n\}$ and some permutation $\sigma$ of $\{0,\dots,n\}$.\footnote{The original definition of \citet{freeman2021truthfulbudget} requires $f_k(1) = 1$ for all $k$. However, in their Proposition~3 (and the remarks thereafter), Freeman et al.~showed that this weaker condition is sufficient.}
Then, a moving phantoms rule determines a value $t^*$ such that $\sum_{j\in[m]} \text{med}(s_{1,j}, \dots, s_{n,j}, f_0(t^*), \dots, f_{n}(t^*))=1$, and returns the vector $\mathbf{x}$ given by $x_j=\text{med}(s_{1,j}, \dots, s_{n,j}, f_0(t^*), \dots, f_{n}(t^*))$ for each $j\in [m]$. 
Freeman et al.~showed that if multiple values of $t^*$ exist, then the returned vector is the same for all such~$t^*$.
The \im{} rule belongs to this class of rules,\footnote{In \Cref{app:moving-phantoms}, we study two additional rules in this class: the \emph{piecewise uniform} rule \citep{caragiannis2024truthful} and the \emph{ladder} rule \citep{freeman2024project}.} and is defined by setting $f_k^{\im}(t)=\min(kt,1)$ for all $k\in \{0,\dots, n\}$ and all $0\le t\le 1$. 

\medskip

Before proceeding further, let us present a simple example which demonstrates how the different aggregation rules work.

\begin{example}\label{ex:rules}
Consider an instance with $n = 2$ agents and $m = 3$ candidates.
The first agent has a preferred distribution $\mathbf{s}_1 = (\frac{4}{5}, \frac{1}{5}, 0)$ and the second agent has a preferred distribution $\mathbf{s}_2 = (\frac{4}{5}, 0, \frac{1}{5})$.

It is immediate that \avg{} and \med{} output $(\frac{4}{5}, \frac{1}{10}, \frac{1}{10})$, \minrule{} and \geo{} output $(1, 0, 0)$, and \maxrule{} outputs $(\frac{2}{3}, \frac{1}{6}, \frac{1}{6})$.
For \util{}, the distribution $(\frac{4}{5}, x, \frac{1}{5}-x)$ maximizes the utilitarian welfare for every $x\in [0, \frac{1}{5}]$; among these distributions, ${\mathbf x} = (\frac{4}{5}, \frac{1}{10}, \frac{1}{10})$ minimizes the $\ell_2$ distance to the uniform distribution $(\frac{1}{3}, \frac{1}{3}, \frac{1}{3})$, so \util{} returns $\mathbf x$.
Moreover, \egal{} returns $\mathbf x$ as well.
It remains to consider \im{}. 
We have $f_0^\im(t)=0$, $f_1^\im(t)=t$, and
$f_2^\im(t)=\min(2t, 1)$ for $0\le t\le 1$. 
Accordingly, we need to find 
the smallest $t^*$
such that $x_1(t^*)+x_2(t^*)+x_3(t^*)=1$, 
where 
\begin{align*}
x_1(t) &=\text{med}\left(\frac{4}{5}, \frac{4}{5}, 0, t, \min(2t, 1)\right),\\ 
x_2(t) &=\text{med}\left(\frac{1}{5}, 0, 0, t, \min(2t, 1)\right),\\ 
x_3(t) &=\text{med}\left(0, \frac{1}{5}, 0, t, \min(2t, 1)\right).
\end{align*}
As $x_2(t), x_3(t)\le \frac{1}{5}$ for all $t\in [0, 1]$, we must have $x_1(t)\ge \frac{3}{5}$.
The smallest value of $t$ that guarantees this is $t^*=\frac{3}{10}$, and the resulting distribution is $(\frac{3}{5}, \frac{1}{5}, \frac{1}{5})$.
\end{example}

\subsection{Axioms}\label{subsec:axioms}

We next introduce the axioms that we will use to evaluate our aggregation rules. 
We group the axioms into four rough categories: efficiency properties, fairness properties, consistency properties, and incentive properties. However, we note that the boundaries between these categories are fluid, particularly when considering weak axioms. 

\subsubsection{Efficiency Properties}

We start by introducing efficiency properties. 
Intuitively, efficiency requires that no outcome is preferred to the chosen outcome by all agents. 
Perhaps the most prominent efficiency property is Pareto optimality, which postulates that it should not be possible to make one agent better off without making another agent worse off. 

\begin{definition}[Pareto optimality]
    An outcome $\mathbf{x}$ is \emph{Pareto optimal} in an instance $\mathcal{I}$ if there is no other outcome $\mathbf{x'}$ such that $d_i(\mathbf{x'})\leq d_i(\mathbf{x})$ for all agents $i\in N$ and $d_i(\mathbf{x'})< d_i(\mathbf{x})$ for some agent $i\in N$. An aggregation rule $F$ is \emph{Pareto optimal} if $F(\mathcal{I})$ is Pareto optimal for every instance $\mathcal{I}$.
\end{definition}

As it will turn out, Pareto optimality is a rather restrictive property in our setting: among the rules we consider, it is only satisfied by \util{} and \egal{}. 
We therefore introduce two further axioms based on the agents' scores, which only exclude obvious and easily identifiable efficiency violations. 
The first axiom is \emph{range-respect}, previously studied by \citet{freeman2021truthfulbudget}. This axiom states that the score assigned to a candidate should always lie between the minimum and the maximum scores that agents assign to this candidate. 

\begin{definition}[Range-respect]
    An outcome $\mathbf x$ is \emph{range-respecting} in an instance $\mathcal{I}$ if $\min_{i\in N}s_{i,j}\le x_j \le \max_{i\in N}s_{i,j}$ for all $j\in [m]$.
    An aggregation rule $F$ is \emph{range-respecting} if $F(\mathcal{I})$ is range-respecting for every instance $\mathcal{I}$.
\end{definition}

Next, we introduce a new property which we call \emph{score-unanimity}. 
This property demands that if all agents report the same score for some candidate, then this candidate should receive exactly that score. 
Score-unanimity is inspired by a property called unanimity in single-winner voting, which states that if all agents agree on a favorite candidate, this candidate should be chosen. 
In single-winner voting, this property appears almost indispensable, and we believe that much of its appeal carries over to score-unanimity. 
Specifically, like in voting, it seems difficult to justify not assigning a score of $x$ to a candidate when all voters report an ideal score of~$x$ for the candidate. 
Moreover, as we show in \Cref{thm:efficiency_implications}, a violation of score-unanimity constitutes a straightforward violation of Pareto optimality, which provides further motivation for this axiom. 

\begin{definition}[Score-unanimity]
    An outcome $\mathbf x$ is
    \emph{score-unanimous} in an instance $\mathcal I$ if, for every $j\in [m]$
    such that there exists $\gamma\in [0, 1]$ satisfying $s_{i, j}=\gamma$ for all $i\in N$, it holds that $x_j=\gamma$.
     An aggregation rule $F$ is \emph{score-unanimous} if $F(\mathcal{I})$ is score-unanimous for every instance $\mathcal{I}$.
\end{definition}

We show that our three efficiency notions are logically related. 

\begin{proposition}\label{thm:efficiency_implications}
The following claims hold.
    \begin{enumerate}[label=(\arabic*),topsep=4pt,itemsep=0pt]
        \item Pareto optimality implies range-respect. 
        \item Range-respect implies Pareto optimality if and only if $m=2$ or $n=2$. 
        \item Range-respect implies score-unanimity. 
    \end{enumerate}
\end{proposition}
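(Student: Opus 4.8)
The plan is to prove the three claims in order, since later parts can reuse the structure of earlier ones. For claim~(1), I would argue by contraposition: if $\mathbf x$ is \emph{not} range-respecting, I exhibit a Pareto improvement. Suppose some coordinate $j$ violates range-respect, say $x_j > \max_{i\in N} s_{i,j}$ (the case $x_j < \min_{i} s_{i,j}$ is symmetric). Since the $x_j$ sum to~$1$ and exceed all reported scores in coordinate~$j$, there must be some other coordinate~$k$ where $x_k$ falls short of something; the idea is to shift a small amount of mass $\epsilon$ from candidate $c_j$ to a candidate where every agent would weakly benefit. The clean way is to note that every agent's disutility $d_i$ has derivative in the direction ``decrease $x_j$'' equal to $+1$ (since $x_j > s_{i,j}$ for all $i$ forces $|s_{i,j}-x_j| = x_j - s_{i,j}$ to decrease). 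One then needs a companion coordinate to absorb the mass without hurting anyone; I would pick a coordinate $k$ minimizing $x_k - \min_i s_{i,k}$ or more carefully route the mass so no agent is harmed. The main technical care is ensuring the receiving coordinate does not push some agent's disutility up by as much as it drops on coordinate~$j$.

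For claim~(2), the ``only if'' direction requires a counterexample with $m\ge 3$ and $n\ge 3$ showing a range-respecting but Pareto-dominated outcome. The natural candidate is an instance where coordinate-wise range constraints are individually slack enough to permit an outcome that can be uniformly improved; I expect a small explicit $3\times 3$ instance to suffice, and I would verify Pareto-dominance directly by computing disutilities. For the ``if'' direction, when $n=2$ range-respect already forces each $x_j$ to lie between $s_{1,j}$ and $s_{2,j}$, and a short argument shows any such point is Pareto optimal (intuitively, improving one agent means moving toward $\mathbf s_1$, which moves away from $\mathbf s_2$). When $m=2$, the simplex is one-dimensional, parameterized by $x_1\in[0,1]$, and $d_i(\mathbf x)=2|s_{i,1}-x_1|$ is single-peaked in $x_1$; range-respect places $x_1$ in $[\min_i s_{i,1}, \max_i s_{i,1}]$, and on this interval no move can help every agent, giving Pareto optimality.

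For claim~(3), this is the easiest: if all agents report $s_{i,j}=\gamma$ for some coordinate~$j$, then $\min_i s_{i,j} = \max_i s_{i,j} = \gamma$, so range-respect forces $x_j=\gamma$, which is exactly score-unanimity on that coordinate. Since this holds for every coordinate satisfying the hypothesis, range-respect implies score-unanimity.

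The main obstacle I anticipate is the Pareto-improvement construction in claim~(1): carefully specifying \emph{which} coordinate receives the shifted mass so that the move strictly helps at least one agent while harming none, and bounding $\epsilon$ so the directional derivatives govern the whole step (i.e., no agent's active absolute-value term flips sign). A convenient workaround is to move mass toward the uniform-ish interior so that for the over-allocated coordinate every agent strictly improves and to choose the donor/recipient pair via the constraint that the total mass is fixed; I would phrase this via disutility derivatives to avoid case analysis on the sign of each $|s_{i,k}-x_k|$. The counterexample in claim~(2) is routine once the improvement intuition from claim~(1) is in hand.
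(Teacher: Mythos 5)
Your overall strategy matches the paper's: for Claim~1 you shift a small mass $\epsilon$ off an over-allocated coordinate; for Claim~2 you handle $n=2$ via the triangle-inequality lower bound on total disutility and $m=2$ via the one-dimensional single-peaked structure, plus a counterexample for $m,n\ge 3$; Claim~3 is immediate. However, the one step you yourself flag as the main obstacle is also the one where your concrete proposal fails. Choosing the recipient coordinate $k$ to minimize $x_k-\min_i s_{i,k}$ does not work: take $m=4$, $\mathbf{s}_1=(0.3,0.1,0.6,0)$, $\mathbf{s}_2=(0.3,0.1,0,0.6)$, and $\mathbf{x}=(0.4,0.1,0.25,0.25)$, so that range-respect fails at $j=1$. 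The minimizer of $x_k-\min_i s_{i,k}$ over $k\ne 1$ is $k=2$ (value $0$), but $x_2=\max_i s_{i,2}$, so shifting mass to $c_2$ raises every agent's coordinate-$2$ term by exactly the $\epsilon$ saved on coordinate $1$, and no agent strictly improves. The correct (and simpler) choice, which is what the paper does, is to fix a single agent, say agent $1$: since $\sum_k x_k=\sum_k s_{1,k}=1$ and $x_j>s_{1,j}$, there exists $\ell$ with $x_\ell<s_{1,\ell}$; setting $\epsilon=\min\bigl(x_j-\max_i s_{i,j},\, s_{1,\ell}-x_\ell\bigr)$ and moving $\epsilon$ from $c_j$ to $c_\ell$ hurts no one (triangle inequality on coordinate $\ell$) and strictly helps agent $1$ by $2\epsilon$. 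Your derivative framing does lead to this once you demand a recipient coordinate on which \emph{some} agent's term decreases, but the existence of such a coordinate must come from the per-agent sum constraint, not from the minimization you propose.

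Second, for the ``only if'' direction of Claim~2 you defer the counterexample entirely; it is routine but it is a required ingredient. The paper uses $\mathbf{s}_1=(0,0,1)$, $\mathbf{s}_2=(0,\tfrac12,\tfrac12)$, and $\mathbf{s}_i=(\tfrac12,\tfrac12,0)$ for $i\ge 3$, where $(\tfrac16,\tfrac13,\tfrac12)$ is range-respecting but Pareto dominated by $(0,\tfrac12,\tfrac12)$ (agent $2$ strictly gains, all others are indifferent). The remaining pieces of your plan---both positive directions of Claim~2 and Claim~3---are correct and coincide with the paper's arguments.
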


\begin{proof}
    We prove each of the claims separately.\medskip

    \noindent\textbf{Claim 1}: Suppose for contradiction that there is an outcome $\mathbf{x}$ that is Pareto optimal but not range-respecting for an instance $\mathcal{I}$. Without loss of generality, this means that there exists some $j \in [m]$ such that $x_j > \max_{i\in N} s_{i,j}$ (the case $x_j<\min_{i\in N} s_{i,j}$ allows for a symmetric argument). Since $\sum_{k\in [m]} x_k=\sum_{k\in [m]} s_{1,k}=1$, there exists an index $\ell\in [m]$ such that $x_\ell<s_{1,\ell}$. Define $\epsilon=\min(x_j-\max_{i\in N} s_{i,j},\,\,s_{1,\ell}-x_\ell)$,
    and note that $\epsilon>0$. 
    Consider the outcome $\mathbf{x'}$ given by $x_j'=x_j-\epsilon$, $x_\ell'=x_\ell+\epsilon$, and $x'_k=x_k$ for all $k\in [m]\setminus \{j,\ell\}$. Since $x_j-\epsilon\geq s_{i,j}$ for all $i\in N$, it holds that $|(x_j-\epsilon)-s_{i,j}|=|x_j-s_{i,j}|-\epsilon$. By combining this observation with the triangle inequality, for every agent $i\in N\setminus\{1\}$ we obtain
    \begin{align*}
        d_i(\mathbf{x'})&=|(x_j-\epsilon)-s_{i,j}|+|(x_\ell+\epsilon) -s_{i,\ell}|+\sum_{k\in [m]\setminus \{j,\ell\}} |x_k-s_{i,k}|\\
        &\leq |x_j-s_{i,j}|-\epsilon+|x_\ell -s_{i,\ell}|+\epsilon+\sum_{k\in [m]\setminus \{j,\ell\}}|x_k-s_{i,k}|
        =d_i(\mathbf{x}).
    \end{align*}
    Moreover, for agent $1$, it additionally holds that $s_{1,\ell}\geq x_\ell+\epsilon$, so we have $|(x_\ell+\epsilon)-s_{1,\ell}|=|x_\ell-s_{1,\ell}|-\epsilon$. 
    Thus, repeating the calculation above for $i=1$, we obtain $d_1(\mathbf{x'})=d_1(\mathbf{x})-2\epsilon<d_1(\mathbf{x})$. This means that $\mathbf{x}$ is not Pareto optimal, a contradiction. 
    It follows that Pareto optimality implies range-respect.\medskip

    \noindent\textbf{Claim 2}: For our second claim, we show that range-respect implies Pareto optimality if and only if $m=2$ or $n=2$. First, we consider the case $n=2$, and let $\mathbf{s}_1$ and $\mathbf{s}_2$ denote the preferences of the agents.  If $\mathbf{x}$ is range-respecting---i.e., 
    $\min(s_{1,j}, s_{2,j})\leq x_j\leq \max(s_{1,j}, s_{2,j})$ for all $j\in [m]$---it follows that $|s_{1,j}-x_j|+|s_{2,j}-x_j|=|s_{1,j}-s_{2,j}|$ for all $j\in [m]$ and hence
    $d_1(\mathbf{x})+d_2(\mathbf{x}) = \sum_{j \in [m]} |s_{1,j} - s_{2,j}|$. Now, suppose that some outcome ${\mathbf x}'$ Pareto dominates ${\mathbf x}$.
    Then ${\mathbf x}'$  would need to have strictly less total disutility than $\mathbf x$.
    However, by the triangle inequality we have $d_1(\mathbf{x}')+d_2(\mathbf{x}') \geq \sum_{j \in [m]} |s_{1,j} - s_{2,j}| = d_1(\mathbf{x})+d_2(\mathbf{x})$, a contradiction.

    Next, consider the case $m=2$, and let $\mathbf{x}$ be a range-respecting outcome for an instance $\mathcal{I}$. 
    Furthermore, let $\mathbf{x'}$ be a different outcome, and assume without loss of generality that $x_1 > x'_1$. 
    Since $m=2$, this means that $x_2 < x'_2$.
    Let $i$ be an agent such that $s_{i,1}\geq x_1$, which implies that $s_{i,2}\leq x_2$; such an agent exists since $\mathbf{x}$ is range-respecting. 
    Observe that
        $d_i(\mathbf{x}) = (s_{i,1} - x_1) + (x_2 - s_{i,2}) < (s_{i,1} - x'_1) + (x'_2 - s_{i,2}) = d_i(\mathbf{x}')$.
    Consequently, agent $i$ strictly prefers $\mathbf{x}$ to every outcome $\mathbf{x'}$ with $x_1>x_1'$. Since a similar argument applies if $x_1<x_1'$, we infer that $\mathbf{x}$ is Pareto optimal. 

    We now show that range-respect does not imply Pareto optimality if $m\geq 3$ and $n\geq 3$. We provide a counterexample for $m=3$; to extend it to larger $m$, it suffices to have all agents assign score $0$ to additional candidates. Consider the following instance $\mathcal I$ and the outcome $\mathbf{x} = (\frac{1}{6}, \frac{1}{3}, \frac{1}{2})$, which is range-respecting for $\mathcal{I}$. However, $\mathbf x$ is Pareto dominated
    by ${\mathbf x}'=(0, \frac{1}{2}, \frac{1}{2})$, which is strictly better for agent $2$ and no worse for the other agents.

    \begin{center}
        \begin{tabular}{ c | c c c c}
          $\mathcal{I}$ & $s_{i,1}$ & $s_{i,2}$ & $s_{i,3}$  \\ 
         \hline \hline
         $1$ & $0$ & $0$ & $1$ \\ 
         $2$ & $0$ & $\frac{1}{2}$ & $\frac{1}{2}$ \\ 
         $i\in \{3,\dots, n\}$ & $\frac{1}{2}$ & $\frac{1}{2}$ & $0$ \\ 
        \end{tabular}        
    \end{center}

    \medskip
    
    \noindent\textbf{Claim 3}: It remains to show that range-respect implies score-unanimity. 
    To this end, consider an instance $(\mathbf{s}_1, \dots, \mathbf{s}_n)$, an index $j\in[m]$, and a value $\gamma\in [0,1]$ such that $s_{i,j}=\gamma$ for all $i\in N$. 
    If an outcome $\mathbf{x}$ satisfies range-respect, it must hold that $\gamma=\min_{i\in N} s_{i,j}\leq x_j\leq \max_{i\in N} s_{i,j}=\gamma$, so $x_j=\gamma$ and score-unanimity is satisfied. 
\end{proof}

\subsubsection{Fairness Properties}

We now turn to fairness concepts, which intuitively demand that every group of agents with similar preferences is proportionally represented by the outcome. A rather mild axiom based on this idea has been formulated by \citet{freeman2021truthfulbudget}. Specifically, these authors call an agent \emph{single-minded} if she assigns score $1$ to some candidate. Their proportionality notion then requires that, if all agents are single-minded, each candidate should be allocated a probability proportional to the number of agents that assign score $1$ to it. To formalize this concept (as well as the more demanding concept of fairness to be introduced later), we define $\mathcal{N}(\mathcal{I},c_j,\gamma):=|\{i\in N\colon s_{i,j}\geq\gamma\}|$ as the number of agents who assign a score of at least $\gamma$ to candidate $c_j$ in the instance $\mathcal{I}$. 

\begin{definition}[Single-minded Proportionality]
\label{def:proportionality}
    An aggregation rule $F$ satisfies \emph{single-minded proportionality} if,
    for every instance $\mathcal{I}$ in which all agents are single-minded and for all $j\in [m]$, it holds that $F(\mathcal{I})_j=\frac{\mathcal{N}(\mathcal{I},c_j,1)}{n}$.
\end{definition}

However, there are many applications of portioning where agents are unlikely to be single-minded (such as dividing time among different activities at a conference), so we need an appropriate notion of proportionality for general preferences.
To this end, we formulate the axiom of \emph{score-representation}. The idea behind this notion is that if a $\frac{k}{n}$ fraction of the agents assign a score of at least $\gamma$ to a candidate, then this candidate should receive a probability of at least $\gamma\cdot\frac{k}{n}$.

\begin{definition}[Score-representation] \label{defn-score_representation}
    An aggregation rule $F$ satisfies \emph{score-representation} if, for all instances $\mathcal{I}$, all $j\in [m]$, and all $\gamma\in [0,1]$, it holds that $F(\mathcal{I})_j\geq \gamma\cdot \frac{\mathcal{N}(\mathcal{I}, c_j, \gamma)}{n}$.
\end{definition}

It follows directly from the definitions that score-representation is strictly stronger than single-minded proportionality. 

\subsubsection{Consistency Properties}

As the third type of axioms, we consider consistency properties, which aim to ensure that voting rules behave, in some sense, consistently across instances. The first such axiom that we examine is \emph{independence}, which has previously been studied by \citet{intriligator1973probsc}. 
The idea of this axiom is that the score assigned to a candidate by an aggregation rule should only depend on the scores that the agents assign to this candidate, and should therefore be independent of the scores assigned to other candidates. 
We remark that this axiom is conceptually similar but formally unrelated to Arrow's ``independence of irrelevant alternatives'' \citep{Arro51a}, as it postulates that we can compute the outcome for a candidate without taking into account the remaining candidates.

\begin{definition}[Independence]
    An aggregation rule $F$ satisfies \emph{independence} if, for all instances $\mathI$ and $\mathI'$ that have the same set of agents $N$ and the same set of $m$ candidates, and for all $j\in [m]$ such that $s_{i,j}=s'_{i,j}$ for all $i\in N$, it holds that $F(\mathI)_j=F(\mathI')_j$. 
\end{definition}

Independence is a demanding axiom. In particular, every rule that satisfies it must be coordinate-wise, because we can define the $j$-th coordinate-aggregation function by $f^n_j(s_{1,j},\dots, s_{n,j})=F(\mathcal{I})_j$. 
Nevertheless, we believe that this axiom is appealing in practice, as it is intuitive and greatly simplifies the task of aggregating the agents' score vectors.
Moreover, a violation of independence can lead to complaints from candidates that receive a smaller portion of the resource when agents change their preferences over other candidates, even if their own scores from all agents remain the same.

Next, we introduce \emph{score-monotonicity}. This axiom requires that, if an agent increases the score of some candidate, then the aggregated score of this candidate should weakly increase as well. 
Score-monotonicity was previously studied by \citet{freeman2021truthfulbudget},\footnote{\citet{freeman2021truthfulbudget} simply called this notion ``monotonicity''.} and similar monotonicity notions are omnipresent in social choice theory. 

\begin{definition}[Score-monotonicity] \label{defn-score_monotonicity}
    An aggregation rule $F$ is \emph{score-monotone} if $F(\mathI)_j\leq F(\mathI')_j$
     for all instances $\mathcal{I}$, $\mathcal{I'}$ that have the same set of agents $N$ and the same set of $m$ candidates, and for all $j\in[m]$ for which there exists an agent $i\in N$ such that \emph{(i)} $\mathbf{s}_{\ell}=\mathbf{s}'_{\ell}$ for all $\ell\in N\setminus \{i\}$, \emph{(ii)} $s_{i,j} < s'_{i,j}$, and \emph{(iii)} $s_{i,k} \geq  s'_{i,k}$ for all $k\in [m]\setminus \{j\}$.
\end{definition}

The final consistency notion that we study is \emph{reinforcement}, which demands that if an aggregation rule chooses the same outcome for two instances with disjoint sets of agents, then it also chooses that outcome when combining the two instances. 
Variants of this axiom feature prominently in numerous results in social choice theory \citep[e.g.,][]{Youn75a,Fish78d,YoLe78a,Bran13a}. 

\begin{definition}[Reinforcement]
    An aggregation rule $F$ satisfies \emph{reinforcement} if, for all instances $\mathI=(\mathbf{s}_1,\dots, \mathbf{s}_n)$ and $\mathI'=(\mathbf{s}_1',\dots, \mathbf{s}_{n'}')$ that have the same set of $m$ candidates and satisfy $F(\mathI)=F(\mathI')$, it holds that $F(\mathbf{s}_1,\dots, \mathbf{s}_n,\mathbf{s}_1',\dots, \mathbf{s}_{n'}')=F(\mathI)$. 
\end{definition}

Independence, score-monotonicity, and reinforcement are logically unrelated, as they formalize rather different notions of consistency. However, we emphasize that variants of these three axioms are well-established in the literature and we therefore believe that it is important to study all of them.

\subsubsection{Incentive Properties}

Our last category of axioms is concerned with the incentives of agents: aggregation rules should incentivize agents to participate and to report their preferences truthfully. These ideas lead to the well-known notions of \emph{participation} and \emph{strategyproofness}. We start by defining strategyproofness, which stipulates that agents should not be able to benefit from lying about their true preferences. 

\begin{definition}[Strategyproofness] \label{defn-strategyproofness}
    An aggregation rule $F$ is \emph{strategyproof} if, 
     for all instances $\mathcal{I}$ and $\mathcal{I'}$ that have the same set of agents $N$ and the same set of $m$ candidates, and for each agent $i\in N$ such that $\mathbf{s}_{i'}=\mathbf{s}_{i'}'$ for all $i'\in N\setminus \{i\}$,
     it holds that 
     $\sum_{j \in [m]} |s_{i,j} - F(\mathI)_j|\le \sum_{j \in [m]} |s_{i,j} - F(\mathI')_j|$.
\end{definition}

While it is known that \im{} and \util{} satisfy strategyproofness \citep{goel2019knapsack,freeman2021truthfulbudget}, this property is in general rather demanding, as demonstrated by the impossibility theorem of \citet{brandt2024optimal} stating that no aggregation rule simultaneously satisfies strategyproofness, Pareto optimality, and single-minded proportionality.

Participation is a property closely related to strategyproofness. It dictates that agents should not be able to profit by abstaining. 
Put differently, participation ensures that it is always weakly better for every agent to express her preference.

\begin{definition}[Participation] \label{defn-participation}
    An aggregation rule $F$ satisfies \emph{participation} if,  for all instances $\mathcal{I}$ and $\mathcal{I}'$ such that $\mathcal{I}'$ is obtained from $\mathcal{I}$ by removing agent $i$, it holds that
    $\sum_{j \in [m]} |s_{i,j} - F(\mathI)_j|\le \sum_{j \in [m]} |s_{i,j} - F(\mathI')_j|$.
\end{definition}

We note that participation and strategyproofness are logically independent in general. However, when imposing reinforcement and the very mild condition that $F(\mathbf{x})=\mathbf{x}$ (i.e., if there is a single agent, we choose her ideal distribution), it can be shown that strategyproofness implies participation.

\section{Efficiency Properties} \label{sec:score_unanimity_rr}

We now analyze our aggregation rules with respect to the axioms defined in \Cref{subsec:axioms}. In this section, we study the three efficiency properties and show that, while \util{} and \egal{} satisfy Pareto optimality, all other rules except \avg{} fail even score-unanimity. 
Recall from \Cref{thm:efficiency_implications} that Pareto optimality implies range-respect which in turn implies score-unanimity, and that range-respect implies Pareto optimality if $m = 2$ or $n = 2$.

\begin{theorem} \label{thm:efficiencyprops}
    The following claims hold.
        \begin{enumerate}[label=(\arabic*),topsep=4pt,itemsep=0pt]
        \item \emph{\util{}} and \emph{\egal{}} are Pareto optimal (and thus range-respecting and score-unanimous).
        \item \avg{} is range-respecting (and thus Pareto optimal when $m = 2$ or $n = 2$, and score-unanimous for all $m,n\ge 2$), but fails Pareto optimality for all $m\geq 3$ and $n\geq 3$.
        \item \med{} is range-respecting when $m\leq 3$ or $n=2$ (and thus Pareto optimal when $m = 2$ or $n = 2$).
        If $m = 3$, it is Pareto optimal if $n\ge 3$ is odd, but fails Pareto optimality if $n\ge 4$ is even.
        It fails score-unanimity for all $m \geq 4$ and $n \geq 3$.
        \item \maxrule{}, \minrule{}, \geo{}, and \im{} are Pareto optimal when $m=2$, but fail score-unanimity for all $m\geq 3$ and $n\geq 2$.
        \end{enumerate}
\end{theorem}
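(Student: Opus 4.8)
The plan is to lean on \Cref{thm:efficiency_implications} throughout: since range-respect implies score-unanimity (always) and Pareto optimality (when $m=2$ or $n=2$), most positive claims reduce to verifying range-respect. For Claim~1, I would note that a Pareto improvement on \util{} would strictly increase the utilitarian welfare $-\sum_i d_i$, contradicting optimality (the maximum-entropy tie-breaking is irrelevant here), while a Pareto improvement on \egal{} would yield a sorted disutility vector that is component-wise dominated and hence strictly smaller in the leximin order, contradicting the leximin tie-breaking. For Claim~2, \Cref{prop:nonorm}(a) gives $\text{\avg}(\mathI)_j=\frac1n\sum_i s_{i,j}$, which lies in $[\min_i s_{i,j},\max_i s_{i,j}]$, so \avg{} is range-respecting and the positive consequences follow. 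For the failure of Pareto optimality when $m,n\ge 3$, I would reuse the instance from the proof of \Cref{thm:efficiency_implications}(2): with $n-2$ agents at $(\tfrac12,\tfrac12,0)$ together with $(0,0,1)$ and $(0,\tfrac12,\tfrac12)$, the average is Pareto dominated by shifting a small mass $\delta\in(0,\tfrac1{2n}]$ from candidate~$1$ to candidate~$2$, which strictly helps $(0,\tfrac12,\tfrac12)$ and leaves the rest indifferent; padding with zero-scored candidates extends this to all $m\ge 3$.

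The heart of the argument is \med{} (Claim~3). For $n=2$ the median of two values equals their average, so \med{} coincides with \avg{} and is range-respecting; for $m=2$ no normalization occurs by \Cref{prop:nonorm}(b). The substantive case is $m=3$, where I would prove range-respect via the elementary lemma that for any $u,w\in\mathbb R^n$ one has $\text{med}(u)+\text{med}(w)\ge\min_i(u_i+w_i)$ and $\text{med}(u)+\text{med}(w)\le\max_i(u_i+w_i)$, which follows from pointwise monotonicity of order statistics together with $\text{med}(c-u)=c-\text{med}(u)$. If the normalized output satisfied $x_1>\max_i s_{i,1}$, then every agent would have $s_{i,1}<x_1$ and hence $s_{i,2}+s_{i,3}>x_2+x_3$; the lemma forces $a_2+a_3=Z(x_2+x_3)>x_2+x_3$, i.e.\ $Z>1$, while $x_1>\max_i s_{i,1}\ge a_1=Zx_1$ forces $Z<1$, a contradiction (the lower bound is symmetric, using the $\max$ version). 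The failure of score-unanimity for $m\ge 4$, $n\ge 3$ I would witness by an instance where all agents assign the same score $\gamma$ to one candidate but disagree enough elsewhere that $Z\ne 1$, so that normalization rescales $\gamma$.

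The main obstacle is the parity dichotomy for $m=3$ Pareto optimality. I would first establish a reduction: for $m=3$, an outcome is Pareto dominated if and only if some single transfer of mass between two candidates is a Pareto improvement. This follows by writing a dominating direction $\mathbf d=\mathbf x'-\mathbf x$ (with $\sum_j d_j=0$) as a combination of at most two elementary transfers and checking, via the one-sided directional derivatives of the $\ell_1$ disutilities and the constraint $\sum_j s_{i,j}=1$, that if no agent is harmed by the combined move then none is harmed by either component transfer. Given this reduction, for odd $n$ I would rule out every transfer $j\to k$ (with third candidate $l$) by a counting argument: the no-blocker condition says no agent has $s_{i,j}\ge x_j$ and $s_{i,k}\le x_k$, and splitting on whether at least $\tfrac{n+1}2$ agents have $s_{i,j}\ge x_j$, one case forces $Z>1$ through candidate~$k$ and simultaneously $Z<1$ through candidate~$l$ (the agents pushed to one side have small $s_{i,l}$), while the other case derives the analogous contradiction from the agents with $s_{i,k}\le x_k$. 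For even $n=2k\ge4$ the counting shows a violation necessarily has $Z>1$ and no blocking agent, so the counterexample must be tuned carefully; a working family is $k$ copies of $(\tfrac12,\tfrac12,0)$, $k-1$ copies of $(0,\tfrac12,\tfrac12)$, and one copy of $(0.2,0.4,0.4)$, whose \med{} output $(\tfrac13,\tfrac{10}{21},\tfrac4{21})$ has $Z=1.05>1$ and is Pareto dominated by moving a little mass from candidate~$1$ to candidate~$2$.

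For the remaining rules (Claim~4), the $m=2$ Pareto optimality again reduces to range-respect via \Cref{thm:efficiency_implications}, which for \maxrule, \minrule, and \geo{} I would check directly from the closed-form normalized expression using $s_{i,2}=1-s_{i,1}$, and which holds for \im{} since moving-phantoms outputs are range-respecting when $m=2$. The failures of score-unanimity for $m\ge 3$, $n\ge 2$ I would exhibit by small profiles on which all agents agree on one candidate's score while the normalization constant differs from~$1$: e.g.\ for \maxrule{} the profile $(\tfrac12,0,\tfrac12),(0,\tfrac12,\tfrac12)$ gives $Z=\tfrac32$ and assigns candidate~$3$ the value $\tfrac13\ne\tfrac12$; analogous profiles work for \minrule{} and \geo{}, and for \im{} the violation is already present in \Cref{ex:rules}, where both agents report $0.8$ for candidate~$1$ but the rule returns $0.6$. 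All of these extend to larger $m$ and $n$ by padding with zero-scored candidates and by duplicating agents so as to preserve the unanimous coordinate.
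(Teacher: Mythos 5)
Your proposal is correct in substance and covers every part of the theorem, but your treatment of \med{} (the hardest part) takes a genuinely different route from the paper's. For range-respect at $m=3$ you invoke a clean two-sided lemma, $\min_i(u_i+w_i)\le \text{med}(u)+\text{med}(w)\le\max_i(u_i+w_i)$, proved via monotonicity of the median and $\text{med}(c-u)=c-\text{med}(u)$; the paper runs an ad hoc monotone-modification argument that amounts to the same inequality, but your formulation is more reusable. For Pareto optimality at $m=3$ and odd $n$, the paper directly exhibits, for each coordinate $j$, a nonempty blocker set $T_j$ of agents who object to any decrease of $x_j$; you instead first reduce Pareto domination to a single elementary transfer between two candidates and then kill each transfer by a median-counting argument on the normalization constant. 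I checked that your reduction is sound: if an agent is harmed by a component transfer, the one-sided directional derivative of her disutility along the combined direction is bounded below by a positive multiple of that transfer's weight, so she is harmed by the combined move too, and convexity of $t\mapsto d_i(\mathbf{x}+t\mathbf{d})$ carries this to $t=1$. Your even-$n$ instance for \med{} and your \avg{} counterexample both check out numerically.

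Two soft spots. First, your justification of range-respect of \im{} for $m=2$ rests on the claim that ``moving-phantoms outputs are range-respecting when $m=2$,'' which is false as a general statement: with phantoms $f_0(t)=f_1(t)=t$ and a single agent reporting $(1,0)$, the resulting moving-phantoms rule outputs $(\tfrac12,\tfrac12)$. The correct argument is specific to \im{}: since $f_0^{\im{}}(t)=0$ for $t<1$, at most $n$ of the $2n+1$ values in each coordinate exceed $\max_i s_{i,j}$, so $x_j\le\max_i s_{i,j}$, and the lower bound follows from $x_j=1-x_{3-j}\ge 1-\max_i s_{i,3-j}=\min_i s_{i,j}$. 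Second, two counterexamples are described rather than exhibited: the \med{} score-unanimity failure for $m\ge4$, $n\ge3$ still needs a concrete witness (easy to produce), and, more importantly, extending the \im{} violation from $n=2$ to all $n$ by ``duplicating agents'' is not automatic, because \im{} is not coordinate-wise and the number of phantoms grows with $n$; the paper sidesteps this with an explicitly $n$-parametrized instance in which every agent assigns $\frac{n+1}{n+2}$ to $c_1$. Your duplication does happen to work, but it requires the same kind of case analysis as the original $n=2$ computation rather than following formally.
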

\begin{proof}
We prove each of the claims separately.\medskip

\noindent\textbf{Claim 1}: The claim follows directly from the definitions of \emph{\util{}} and \emph{\egal{}}, since a Pareto improvement would also give rise to an improvement with respect to the welfare measure. In more detail, 
for \emph{\egal{}}, we use the fact that its definition lexicographically minimizes the maximum disutility of the agents. In particular, if an outcome $\mathbf{x}$ returned by \emph{\egal{}} was Pareto dominated by another outcome $\mathbf{x'}$, then $\mathbf{x'}$ would have been selected over $\mathbf x$ by the leximin optimization procedure, which contradicts the definition of \emph{\egal{}}.
By \Cref{thm:efficiency_implications}, both rules are range-respecting and score-unanimous as well. \medskip

\noindent\textbf{Claim 2}: Consider an instance $\mathcal{I}$. For each $j \in [m]$ it holds that $\min_{i\in N} s_{i,j}\leq \frac{1}{n}\sum_{i\in N} s_{i,j}\leq \max_{i\in N} s_{i,j}$, which shows that \avg{} is range-respecting. By Claim 2 of \Cref{thm:efficiency_implications}, this also means that \avg{} is Pareto optimal if $m = 2$ or $n = 2$.

To show that \avg{} fails Pareto optimality when $m\geq 3$ and $n\geq 3$, consider the following instance $\mathI^1$, where all candidates in $C\setminus \{c_1,c_2,c_3\}$ receive score $0$ from all agents and can thus be ignored. For this instance, \avg{} outputs the vector $(\frac{1}{2n}, \frac{2}{2n}, 1-\frac{3}{2n})$. 
    However, the outcome $\mathbf{x'}=(0,\frac{3}{2n}, 1-\frac{3}{2n})$ decreases the disutility of agent $1$ without increasing the disutility of the other agents, so \avg{} is not Pareto optimal.

    \begin{center}
        \begin{tabular}{ c | c c c c}
          $\mathcal{I}^1$ & $s_{i,1}$ & $s_{i,2}$ & $s_{i,3}$ \\ 
         \hline \hline
         $1$ & $0$ & $\frac{1}{2}$ & $\frac{1}{2}$\\ 
         $2$ & $\frac{1}{2}$ & $\frac{1}{2}$ & $0$\\ 
         $i\in\{3,\dots,n\}$ & $0$ & $0$ & $1$ \\
        \end{tabular}       
    \end{center}
\medskip

\noindent\textbf{Claim 3}: 
Since \med{} is equivalent to \avg{} when $n=2$, it is range-respecting for $n=2$ by Claim~2, and also Pareto optimal by Claim~2 of \Cref{thm:efficiency_implications}. For the case $m=2$, it follows from Proposition~\ref{prop:nonorm} that if \med{} outputs $(x_1, x_2)$
on an instance $\mathI$, then $x_j=\text{med}(s_{1, j}, \dots, s_{n, j})$ for $j\in[2]$. 
This implies that \med{} is range-respecting for $m=2$, and Claim 2 of \Cref{thm:efficiency_implications} entails that it is also Pareto optimal. 

Now, when $m=3$, suppose for contradiction that there exists an instance $\mathcal{I}$ such that the outcome~$\mathbf{x}$ chosen by \med{} fails range-respect. Without loss of generality, we assume that $x_1<\min_{i\in N} s_{i,1}$ (the choice of the candidate does not matter, and if $x_1>\max_{i\in N} s_{i,1}$, we can reverse all inequalities in the proof). Moreover, let $m_1$, $m_2$, $m_3$ denote the medians (before normalization) for the candidates $c_1$, $c_2$, $c_3$, respectively, and let $M=m_1+m_2+m_3$. This means that $\mathbf{x}=\left(\frac{m_1}{M}, \frac{m_2}{M}, \frac{m_3}{M}\right)$. Since $m_1\geq \min_{i\in N} s_{i,1}$, we infer from $x_1<\min_{i\in N} s_{i,1}$ that $M>1$. 

Let $\gamma=\min_{i\in N} s_{i,1}$; we will next show that $m_2+m_3\leq 1-\gamma$. 
To this end, we note that each median by itself is monotone (i.e., if we increase some value $s_{i,j}$, then $m_j$ does not decrease). 
Recall that by our choice of $\gamma$, we have $s_{i, 2}+s_{i, 3}\le 1-\gamma$ for all $i\in N$.
Consider modified values $\hat s_{i,2}$ and $\hat s_{i,3}$ that satisfy $\hat s_{i,2}\geq s_{i,2}$, $\hat s_{i,3}\geq s_{i,3}$, and $\hat s_{i,2}+\hat s_{i,3}=1-\gamma$ for all $i\in N$. 
By the monotonicity of the medians, it follows that the corresponding medians $\hat m_2$ and $\hat m_3$ satisfy $\hat m_2\geq m_2$ and $\hat m_3\geq m_3$. 
Moreover, it holds that $\hat s_{i,2}=1-\gamma-\hat s_{i,3}$. 
This means that, for all agents $i$ and $\ell$, we have $\hat s_{i,2}\leq \hat s_{\ell,2}$ if and only if $\hat s_{i,3}\geq \hat s_{\ell,3}$.
Consequently, in the modified instance the median agent(s) for $c_2$ are also the median agent(s) for $c_3$, which implies that $\hat m_2+\hat m_3=1-\gamma$ and hence $m_2+m_3\le 1-\gamma$. 

We can now derive a contradiction. Since $M>1$, 
it follows that $x_2=\frac{m_2}{M}<m_2$ and $x_3=\frac{m_3}{M}<m_3$.
This means that $x_1+x_2+x_3< x_1+m_2+m_3< \gamma+(1-\gamma)=1$, a contradiction with $\mathbf x$ being an outcome for $\mathcal I$.

    \medskip
    Next, we prove that \med{} satisfies Pareto optimality for $m=3$ and every odd $n \geq 3$.
    Let $m_1$, $m_2$, $m_3$ again denote the medians (before normalization) for the candidates $c_1$, $c_2$, $c_3$, respectively, and set $M=m_1 + m_2 + m_3$.
    Assume that $M \leq 1$; we will explain how to modify the argument for $M>1$ toward the end of the proof.
    We show that for each $j\in[3]$, there is a non-empty set of agents $T_j$ such 
    that all agents in $T_j$ prefer ${\mathbf x} = (\frac{m_1}{M}, \frac{m_2}{M}, \frac{m_3}{M})$ to every outcome ${\mathbf x}'$ with $x'_j < x_j$.
    Thus, for an outcome ${\mathbf x}'$ to Pareto dominate $\mathbf x$, it would have to be the case that $x'_j\ge x_j$ for all $j\in\{1, 2, 3\}$; as $x'_1+x'_2+x'_3=x_1+x_2+x_3$, this implies ${\mathbf x}'=\mathbf x$.
    
    Specifically, for $j \in [3]$, let $T_j \subseteq N$ be the set of agents $i$ such that $s_{i,j} \geq m_j$ and $s_{i,k} \leq m_{k}$ for all other $k \neq j$.
    We first show that all agents in $T_j$
    prefer $\mathbf x$ to every outcome ${\mathbf x}'$ with $x'_j < x_j$; then, we will argue that $T_j$ is non-empty for all $j\in[3]$.
    By symmetry, without loss of generality we can focus on the case $j=1$.
    
    Consider an agent $i\in T_1$. Since $M \leq 1$, for $k\in\{2, 3\}$ we have $x_{k} = \frac{m_{k}}{M} \geq m_{k}\ge s_{i, k}$.
    Moreover, as $s_{i,1} + s_{i,2} + s_{i,3} = x_1+x_2+x_3 = 1$, it follows that $s_{i,1} \geq x_1$.
    Thus, 
    $d_i({\mathbf x})=(s_{i, 1}-x_1)+(x_2-s_{i, 2})+(x_3-s_{i, 3})$. Now, consider an outcome 
    ${\mathbf x}'$ with $x'_1 < x_1$, and note that this implies 
    $x'_2+x'_3> x_2+x_3$.
    We then have 
    \begin{align*}
    d_i({\mathbf x}')
    &= |s_{i, 1}-x'_1|+|s_{i, 2} - x'_2|+|s_{i, 3}-x'_3|\\
    &\ge (s_{i, 1}-x'_1)+(x'_2-s_{i, 2})+(x'_3-s_{i, 3})\\ 
    &= (s_{i, 1}-x'_1)+(x'_2+x'_3)-s_{i, 2}-s_{i,3} > (s_{i, 1}-x_1)+(x_2+x_3)-s_{i, 2}-s_{i,3} = d_i({\mathbf x}),
    \end{align*}
    i.e., agent $i$ strictly prefers $\mathbf x$ to ${\mathbf x}'$.
    
    It remains to show that $T_1 \neq \emptyset$.
    To this end, 
    let $N_1=\{i\in N\colon s_{i,1} \geq m_1\}$.
    Suppose for contradiction that $T_1 = \emptyset$. This implies that for every agent $i \in N_1$, there exists $k \neq 1$ with $s_{i,k} > m_{k}$.
    Now, consider an agent $i \in N\setminus N_1$; we have $s_{i,1} < m_1$.
    We claim that for this agent, too, there exists some $k \neq 1$ with $s_{i,k} > m_{k}$---indeed, otherwise $s_{i,1} + s_{i,2} + s_{i,3} < m_1 + m_2 + m_3 \leq 1$, a contradiction.
    Thus, for each agent $i \in N$, there exists $k \neq 1$ with $s_{i,k} > m_{k}$.
    However, by definition of the median, it holds that $\sum_{k \in\{2, 3\}} |\{i \in N \colon s_{i,k} > m_{k}\}| \leq \frac{n-1}{2} + \frac{n-1}{2} = n-1$, a contradiction.
    Hence, $T_1 \neq \emptyset$.
    
    As the same argument works for all $j\in[3]$, it follows that ${\mathbf x}$ is Pareto optimal.
    The argument for the case $M >1$ is symmetric: it suffices to reverse the signs (including in the definition of $T_j$).
    
    On the other hand, we show that \med{} fails Pareto optimality for $m = 3$ and any even $n\geq 4$. To this end, consider the following instance $\mathcal{I}^2$ for an even number of agents $n$. 
        \begin{center}
            \begin{tabular}{ c | c c c}
              $\mathcal{I}^2$ & $s_{i,1}$ & $s_{i,2}$ & $s_{i,3}$  \\ 
             \hline \hline
             $1$ & $\frac{3}{10}$ & $\frac{7}{20}$ & $\frac{7}{20}$ \\ 
             $2$ & $\frac{6}{10}$ & $0$ & $\frac{4}{10}$ \\
             $i \in \{3,\dots,\frac{n}{2}+1 \}$ & $\frac{7}{10}$ & $\frac{1}{4}$ & $\frac{1}{20}$ \\ 
             $i \in \{\frac{n}{2}+2,\dots,n \}$ & $\frac{1}{4}$ & $\frac{7}{20}$ & $\frac{4}{10}$ \\ 
              
            \end{tabular}       
        \end{center}
    In this instance, it holds that $\text{med}_{i\in N} s_{i,1}=\frac{9}{20}$, $\text{med}_{i\in N} s_{i,2}=\frac{3}{10}$, and $\text{med}_{i\in N} s_{i,3}=\frac{3}{8}$, so \med{} returns the outcome $\mathbf{x} = \left( \frac{2}{5}, \frac{4}{15}, \frac{1}{3} \right)$.
    However, the outcome $\left( \frac{2}{5}, \frac{5}{20}, \frac{7}{20}\right)$ strictly benefits agent $2$ and does not hurt the other agents, so $\mathbf{x}$ is not Pareto optimal.
        \medskip

Finally, we show that \med{} fails score-unanimity when $m\geq 4$ and $n\geq 3$. To this end, consider the following instance $\mathcal{I}^3$ for an odd number of agents $n$, where all candidates $c_j$ with $5\leq j\leq m$ receive score $0$ from all agents. 
    \begin{center}
        \begin{tabular}{ c | c c c c}
          $\mathcal{I}^3$ & $s_{i,1}$ & $s_{i,2}$ & $s_{i,3}$ & $s_{i,4}$ \\ 
         \hline \hline
         $1$ & $\frac{3}{10}$ & $\frac{5}{10}$ & $0$ & $\frac{2}{10}$ \\ 
         $i\in \{2,\dots, \frac{n+1}{2}\}$ & $\frac{3}{10}$ & $\frac{1}{10}$ & $\frac{6}{10}$ & $0$ \\ 
         $i\in \{\frac{n+3}{2},\dots, n\}$ & $\frac{3}{10}$ & $\frac{7}{10}$ & $0$ & $0$ 
        \end{tabular}       
    \end{center}
    In this instance, we have $\text{med}_{i\in N} s_{i,1}=\frac{3}{10}$, $\text{med}_{i\in N} s_{i,2}=\frac{5}{10}$, and $\text{med}_{i\in N} s_{i,3}=\text{med}_{i\in N} s_{i,4}=0$, so \med{} assigns probability $\frac{3}{8}$ to $c_1$. 
To obtain a counterexample for even $n$, it suffices to duplicate agent~$1$. 
For $n\ge 6$, all medians remain the same as for odd $n$, and for $n=4$
the fourth median $\text{med}_{i\in N} s_{i,4}$ changes to $\frac{1}{10}$ while all other medians remain the same.
In either case, \med{} assigns probability different from $\frac{3}{10}$ to candidate $c_1$.
\medskip

\noindent\textbf{Claim 4}: 
We first present an example showing that 
\geo{}, \maxrule{}, and \minrule{}
fail score-unanimity when $m\geq 3$ and $n\geq 2$. To this end, consider the following instance $\mathcal{I}^4$. 

\begin{center}
        \begin{tabular}{ c | c c c c }
          $\mathcal{I}^4$ & $s_{i,1}$ & $s_{i,2}$ & $s_{i,3}$\\ 
         \hline \hline
         $1$ & $\frac{1}{4}$ & $\frac{1}{2}$ & $\frac{1}{4}$\\ 
         $i\in \{2,\dots, n\}$ &  $\frac{1}{4}$ & $\frac{1}{4}$ & $\frac{1}{2}$ \\ 
        \end{tabular}       
    \end{center}\smallskip    
    
It is immediate that on this instance \maxrule{} outputs $(\frac15, \frac25, \frac25)$ and \minrule{} outputs $(\frac13, \frac13, \frac13)$; both outcomes fail score-unanimity with respect to $c_1$.
Furthermore, 
\geo{} outputs the vector $$\mathbf{x}=\left(\frac{1}{1+2^{\frac{1}{n}} + 2^{\frac{n-1}{n}}}, \frac{2^{\frac{1}{n}}}{1+2^{\frac{1}{n}} + 2^{\frac{n-1}{n}}}, \frac{2^{\frac{n-1}{n}}}{1+2^{\frac{1}{n}} + 2^{\frac{n-1}{n}}}\right).$$ 
For every $n\ge 2$ it holds that
$(2^{\frac{n-1}{n}} - 1)(2^{\frac{1}{n}} - 1) > 0$; expanding, we obtain
$2^{\frac{1}{n}}+2^{\frac{n-1}{n}}< 1+2^{\frac{n-1}{n}+\frac{1}{n}} = 3$, and hence $x_1>\frac14$.
This implies that $\mathbf x$
fails score-unanimity. 
To extend the counterexample to larger~$m$, one can add candidates that receive a score of $0$ from all agents.

Next, we will show that \geo{}, \maxrule{}, and \minrule{} are range-respecting (and hence Pareto optimal) for $m=2$.
Finally, we will consider \im{} and show that it satisfies range-respect for $m=2$ but fails score-unanimity for $m\ge 3$ and all $n\ge 2$.
\medskip

\emph{\geo{}}: 
To show that \geo{} is range-respecting for $m=2$, fix some instance $\mathcal{I}$ and let $\mathbf{x}$ be the outcome returned by \geo{}. Without loss of generality, we focus on $c_1$ and show that $\min_{i \in N} s_{i,1} \leq x_1 \leq \max_{i \in N} s_{i,1}$. 
For $j\in[2]$, let $y_j=\sqrt[n]{\prod_{i\in N} s_{i,j}}$ and observe that $\min_{i\in N} s_{i,j}\leq y_j\leq \max_{i\in N} s_{i,j}$. Multiplying these inequalities for $y_1$
by $(1-\max_{i\in N} s_{i,1})$ and $(1-\min_{i\in N} s_{i,1})$, we obtain 
 \begin{align*}
        y_1 \cdot (1-\max_{i \in N} s_{i,1}) &\leq \max_{i \in N} s_{i,1} \cdot (1-\max_{i \in N} s_{i,1});\\
        y_1 \cdot (1-\min_{i \in N} s_{i,1}) &\geq \min_{i \in N} s_{i,1} \cdot (1-\min_{i \in N} s_{i,1}).
\end{align*}
Rearranging the terms yields 
\begin{equation*}
        \frac{y_1}{y_1 + 1 - \max_{i \in N} s_{i,1}} \leq \max_{i \in N} s_{i,1}\qquad \text{and}\qquad \frac{y_1}{y_1 + 1 - \min_{i \in N} s_{i,1}} \geq \min_{i \in N} s_{i,1}.
\end{equation*}
Since $m=2$, it holds that $1 - \min_{i \in N} s_{i,1}=\max_{i\in N} s_{i,2}$ and  $1 - \max_{i \in N} s_{i,1}=\min_{i\in N} s_{i,2}$. Hence, we derive that 
\begin{align*}
    \min_{i\in N} s_{i,1} \leq \frac{y_1}{y_1 + \max_{i \in N} s_{i,2}}\leq \frac{y_1}{y_1 +y_2}\leq \frac{y_1}{y_1 + \min_{i \in N} s_{i,2}}\leq \max_{i\in N} s_{i,1}. 
\end{align*}
As $x_1=\frac{y_1}{y_1 +y_2}$, this shows that \geo{} is range-respecting if $m=2$.

 \medskip
 
\emph{\maxrule{}}: To show that \maxrule{} is range-respecting if $m=2$, we fix an instance $\mathcal{I}$ and focus on candidate~$c_1$. 
Observe that $\max_{i\in N} s_{i,1}+\max_{i\in N} s_{i,2}\geq 1$, which implies that 
$$
x_1=\frac{\max_{i\in N} s_{i,1}}{\max_{i\in N} s_{i,1}+\max_{i\in N} s_{i,2}}\leq \max_{i\in N} s_{i,1}.
$$
We will now argue that $x_1\ge \min_{i\in N} s_{i,1}$.
For $m=2$, it holds that $\max_{i\in N} s_{i,2}=1-\min_{i\in N} s_{i,1}$, so 
we have
\begin{equation*}
    x_1  = \frac{\max_{i\in N} s_{i,1}}{\max_{i\in N} s_{i,1} + \max_{i\in N} s_{i,2}}= \frac{\max_{i\in N} s_{i,1}}{\max_{i\in N} s_{i,1} + 1-\min_{i\in N} s_{i,1}}.
\end{equation*}
Thus, it suffices to show that
$$
\min_{i\in N} s_{i,1} (1+\max_{i\in N} s_{i,1} - \min_{i\in N} s_{i,1}) \le 
\max_{i\in N} s_{i,1}. 
$$
To see this, observe that $\min_{i\in N} s_{i,1}\le 1$,
and therefore
\begin{align*}
    \min_{i\in N} s_{i,1} (1+\max_{i\in N} s_{i,1} - \min_{i\in N} s_{i,1})
    &= \min_{i\in N} s_{i,1} + \min_{i\in N} s_{i,1} (\max_{i\in N} s_{i,1} - \min_{i\in N} s_{i,1}) \\
    &\leq \min_{i\in N} s_{i,1} + (\max_{i\in N} s_{i,1} - \min_{i\in N} s_{i,1}) = \max_{i\in N} s_{i,1},  
\end{align*}
as required.
Hence, \maxrule{} is range-respecting if $m=2$. 

\medskip

\emph{\minrule{}}: To show that \minrule{} is range-respecting if $m=2$, we consider an instance $\mathcal{I}$ and focus again on candidate $c_1$.
The argument is symmetric to that for \maxrule{}. Specifically, we observe that $\min_{i\in N} s_{i,1}+\min_{i\in N} s_{i,2}\leq 1$, so $\min_{i\in N} s_{i,1}\leq \frac{\min_{i\in N} s_{i,1}}{\min_{i\in N} s_{i,1}+\min_{i\in N} s_{i,2}}=x_1$. 
To see that $x_1\le \max_{i\in N} s_{i, 1}$, 
we note that $\min_{i\in N} s_{i,1}\le\max_{i\in N} s_{i,1}\le 1$, and therefore
\begin{align*}
    \max_{i\in N} s_{i,1}(1+\min_{i\in N} s_{i,1}-\max_{i\in N} s_{i,1})
    &= \max_{i\in N} s_{i,1} - \max_{i\in N} s_{i,1} (\max_{i\in N} s_{i,1}-\min_{i\in N} s_{i,1}) \\
    &\geq \max_{i\in N} s_{i,1} -  (\max_{i\in N} s_{i,1}-\min_{i\in N} s_{i,1})
    = \min_{i\in N} s_{i,1}.
\end{align*}
Using that $\min_{i\in N} s_{i,2}=1-\max_{i\in N} s_{i,1}$, we thus have 
\begin{equation*}
    \max_{i\in N} s_{i,1}\geq \frac{\min_{i\in N} s_{i,1}}{\min_{i\in N} s_{i,1} + \min_{i\in N} s_{i,2}}=x_1.
\end{equation*}
This completes the proof that \minrule{} is range-respecting. 

\medskip

\emph{\im{}}: The last rule we consider is \im{}.
We first show that it is range-respecting (and therefore Pareto optimal) when $m=2$. 
Consider an instance $\mathcal{I}$, and let $\mathbf{x}$ be the output of \im{}. 
Note that $x_j\leq  \max_{i\in N} s_{i,j}$ for $j\in [2]$, as one phantom is always at $0$. Moreover, since 
$s_{i,j}=1-s_{i,3-j}$ for all $i\in N$ and $x_j=1-x_{3-j}$, our previous observation implies that
$$
x_j=1-x_{3-j}\geq  1 - \max_{i\in N} s_{i,3-j} = 1 - \max_{i\in N} (1 - s_{i,j}) = 1- 1+ \min_{i\in N} s_{i, j} = \min_{i\in N} s_{i, j}.
$$
Hence, \im{} is range-respecting when $m=2$. 

To show that \im{} fails score-unanimity when $m\geq 3$ and $n\geq 2$, we consider the instance $\mathcal{I}^5$ shown below, where all candidates $c_j$ with $j\geq 4$ receive a score of $0$ from all agents.
 
\begin{center}
        \begin{tabular}{ c | c c c c }
          $\mathcal{I}^5$ & $s_{i,1}$ & $s_{i,2}$ & $s_{i,3}$  \\ 
         \hline \hline
         $1$ & $\frac{n+1}{n+2}$ & $\frac{1}{n+2}$ & $0$ \\ 
         $i\in \{2,\dots, n\}$ &  $\frac{n+1}{n+2}$ & $0$ & $\frac{1}{n+2}$ 
        \end{tabular}       
    \end{center}
    For this instance, score-unanimity requires that $x_1 = \frac{n+1}{n+2}$. 
    However, we claim that \im{} assigns probability $\frac{n}{n+2}$ to $c_1$. Indeed, 
for $t^*=\frac{1}{n+2}$ we have $x_1(t^*)=\frac{n}{n+2}$
(with $f_n^\im{}(t^*)$ selected as the median) and $x_2(t^*) = x_3(t^*)=\frac{1}{n+2}$, 
so that $x_1(t^*)+x_2(t^*)+x_3(t^*)=1$. On the other hand, for $t<t^*$ we have
$x_1(t)\le x_1(t^*)$ and $x_3(t)\le x_3(t^*)$ (by monotonicity of the median), and $x_2(t)=f_1^\im{}(t) = t<x_2(t^*)$, so
$x_1(t)+x_2(t)+x_3(t)<1$.
It follows that \im{} violates score-unanimity.
\end{proof}

\begin{remark}
   \citet[p. 22]{freeman2021truthfulbudget} suggested a variant of \im{} where the last moving phantom is fixed to $1$, i.e., $f^{\im{}'}_n(t)=1$. This modified rule satisfies range-respect, single-minded proportionality, and strategyproofness. However, it is unclear whether it inherits other desirable properties of \im{} such as reinforcement. 
\end{remark}

\begin{remark}
    In some settings, it is computationally challenging to determine whether an outcome is efficient \citep[e.g.,][]{aziz2019efficient}. This is not the case in our context: it is straightforward to check whether an outcome satisfies score-unanimity and range-respect, and we give a linear programming formulation for deciding whether an outcome is Pareto optimal in the appendix (see~\Cref{thm:po_check}). 
\end{remark}

\section{Fairness Properties} \label{sec:proc_sr}

We next turn to fairness properties, and study our rules with respect to single-minded proportionality and the more demanding notion of score-representation. In particular, we will show that, among the rules we consider, the only one to satisfy score-representation is \avg{}, thus making a strong case for this rule. 
Note that some of our results follow from the work of \citet{freeman2021truthfulbudget}, who have shown that \im{} satisfies single-minded proportionality, while \emph{\util{}} fails this property (without specifying the ranges of $n$ and $m$ for which this is the case). 

\begin{theorem}
    The following claims hold. 
    \begin{enumerate}[label=(\arabic*),topsep=4pt,itemsep=0pt]
    \item \avg{} satisfies score-representation (and therefore single-minded proportionality).
    \item \im{} satisfies score-representation when $m=2$ and single-minded proportionality for all $m\ge 2$, but fails score-representation for all $n\geq 2$ and $m\geq 3$.
    \item \maxrule{} and \emph{\util{}} satisfy score-representation when $n=m=2$ and single-minded proportionality when $n=2$. Both rules fail score-representation for all $m\geq 3$ and $n\geq 2$, and single-minded proportionality for all $m\geq 2$ and $n\geq 3$. 
    \item \med{} and \emph{\egal{}} satisfy score-representation when $n=2$, but fail single-minded proportionality for all $m\geq 2$ and $n\geq 3$.
    \item \minrule{} and \geo{} satisfy single-minded proportionality when $n=m=2$, but fail to do so for all $n \geq 3$ or $m \geq 3$.
    Both rules fail score-representation for all $m\geq 2$ and $n\geq 2$.
        \end{enumerate}
\end{theorem}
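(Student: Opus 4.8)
The plan is to exploit three reductions before doing any real work. First, score-representation is strictly stronger than single-minded proportionality, so every failure of single-minded proportionality immediately yields a failure of score-representation, and every proof of score-representation immediately yields single-minded proportionality. Second, by \Cref{prop:nonorm}(a) the output of \avg{} is the coordinate-wise average $\frac1n\sum_i s_{i,j}$, and for $n=2$ both \med{} (again by \Cref{prop:nonorm}) and \egal{} (by \Cref{prop:welfare-egal_is_sum} and our tie-breaking) coincide with \avg{}; hence Claim~(1) and the $n=2$ parts of Claims~(3)--(4) follow once \avg{} is handled. Third, the positive $m=2$ statements can lean on \Cref{thm:efficiencyprops}, which already establishes range-respect (hence $x_j\ge\min_i s_{i,j}$) for \maxrule{}, \minrule{}, \geo{}, and \im{} when $m=2$. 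For single-minded proportionality of \im{} (all $m$) I would cite \citet{freeman2021truthfulbudget}, who also observed the failure of single-minded proportionality for \util{}; the latter I reestablish with explicit parameter ranges below.

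Claim~(1) is immediate: with $k=\mathcal N(\mathcal I,c_j,\gamma)$ we have $\sum_i s_{i,j}\ge\sum_{i:\,s_{i,j}\ge\gamma}s_{i,j}\ge\gamma k$, so $\avg(\mathcal I)_j=\frac1n\sum_i s_{i,j}\ge\gamma k/n$. For the $n=m=2$ positive parts of Claim~(3), I would run a case analysis on $\mathcal N(\mathcal I,c_j,\gamma)\in\{0,1,2\}$: the case $0$ is trivial, the case $2$ follows from range-respect since then $\min_i s_{i,j}\ge\gamma$, and the case $1$ uses the explicit $m=2$ formulas---$\maxrule(\mathcal I)_1=\frac{\max_i s_{i,1}}{\max_i s_{i,1}+1-\min_i s_{i,1}}$ and $\util(\mathcal I)_1$ equal to the projection of $\tfrac12$ onto $[\min_i s_{i,1},\max_i s_{i,1}]$---together with $\gamma\le1$ to verify $x_1\ge\gamma/2$. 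The single-minded proportionality statements at $n=2$ are the specialization to $\gamma=1$.

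The main obstacle is score-representation of \im{} for $m=2$ and \emph{all} $n$, since here range-respect alone is too weak. My approach is an order-statistic counting argument. Fix $c_1$, let $\gamma\in[0,1]$ and $k=\mathcal N(\mathcal I,c_1,\gamma)$; I must show $x_1\ge\gamma k/n$ (the case $k=n$ is range-respect and $k=0$ is trivial, so assume $1\le k<n$). At the selected $t^*$, the outputs $x_1,x_2$ are the $(n+1)$-th smallest among the $n$ reports and the $n+1$ phantoms $f_j(t^*)=\min(jt^*,1)$, and they satisfy $x_1+x_2=1$. If $t^*\ge\gamma/n$, then for every $j\in\{k,\dots,n\}$ we have $f_j(t^*)\ge f_k(t^*)=\min(kt^*,1)\ge\gamma k/n$; these $n-k+1$ phantoms together with the $k$ reports exceeding $\gamma$ give at least $n+1$ values $\ge\gamma k/n$, so $x_1\ge\gamma k/n$. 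If instead $t^*\le\frac{1-\gamma k/n}{n-k}$, then for every $j\in\{0,\dots,n-k\}$ we have $f_j(t^*)\le(n-k)t^*\le1-\gamma k/n$; these $n-k+1$ phantoms together with the $k$ reports for $c_2$ that are at most $1-\gamma\le1-\gamma k/n$ give at least $n+1$ values $\le1-\gamma k/n$, so $x_2\le1-\gamma k/n$ and hence $x_1\ge\gamma k/n$. Crucially, $\gamma\le1$ is exactly the inequality $\frac{\gamma}{n}\le\frac{1-\gamma k/n}{n-k}$, so the two regimes cover every possible $t^*$; a symmetric argument handles $c_2$.

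For the negative results I would give explicit instances. The failure of score-representation of \im{} for $m\ge3$ is witnessed by the instance $\mathcal I^5$ from the proof of \Cref{thm:efficiencyprops}: there all agents report $s_{i,1}=\frac{n+1}{n+2}$ while \im{} outputs $\frac{n}{n+2}$, so taking $\gamma=\frac{n+1}{n+2}$ (whence $\mathcal N=n$) violates the bound. For \maxrule{} and \util{} at $m\ge3$, I would under-fund a strongly supported candidate: all agents give $\tfrac12$ to $c_1$ and split the rest so that $c_2$ and $c_3$ each attain coordinate-maximum $\tfrac12$, forcing $\maxrule(\mathcal I)_1=\frac{1/2}{3/2}=\tfrac13<\tfrac12$ with $\mathcal N(\mathcal I,c_1,\tfrac12)=n$; and for \util{}, the coordinate-separability of $\sum_i\lVert x-\mathbf s_i\rVert_1$ lets a minority favoring $c_1$ be ignored (e.g.\ $\mathbf s_1=(1,0,0),\mathbf s_2=(0,\tfrac12,\tfrac12)$ for $n=2$, where \util{} returns the uniform vector, and a minority-versus-majority profile for $n\ge3$). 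The failures of single-minded proportionality of \med{}, \egal{}, \maxrule{}, \util{} for $n\ge3$, and of \minrule{}, \geo{} for $n\ge3$ or $m\ge3$, all stem from single-minded profiles such as $\{(1,0),(1,0),(0,1)\}$, on which these rules return $(1,0)$ (\med{}, \util{}), the uniform vector (\maxrule{}, \minrule{}, \geo{}), or $(\tfrac12,\tfrac12)$ (\egal{}) instead of the required $(\tfrac23,\tfrac13)$; these also certify the corresponding score-representation failures. Finally, to show that \minrule{} and \geo{} fail score-representation already at $n=m=2$ (where they \emph{do} satisfy single-minded proportionality), I would use a non-single-minded profile such as $\mathbf s_1=(0.9,0.1)$, $\mathbf s_2=(0.05,0.95)$, for which $\mathcal N(\mathcal I,c_1,0.9)=1$ but the normalized minimum (resp.\ geometric mean) places strictly less than $0.45$ on $c_1$.
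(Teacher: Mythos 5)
Your proposal is correct in all essentials and matches the paper's high-level organization (reduce everything to \avg{} via \Cref{prop:nonorm} and \Cref{prop:welfare-egal_is_sum}, use range-respect from \Cref{thm:efficiencyprops} for the $m=2$ positives, and certify negatives with small explicit instances; your counterexamples, though not identical to the paper's, all check out). The one place where you take a genuinely different route is score-representation of \im{} for $m=2$. The paper argues by contradiction through a chain of instance modifications: it pushes every agent with $s_{i,1}>x_1$ up to a single-minded report (observing that this does not move the median), pulls the remaining agents down to $0$ (invoking score-monotonicity of \im{}), and then derives a contradiction with single-minded proportionality. Your argument instead counts order statistics directly: the output is the $(n+1)$-th order statistic of the $n$ reports and $n+1$ phantoms, and the two regimes $t^*\ge\gamma/n$ and $t^*\le\frac{1-\gamma k/n}{n-k}$---whose union covers all $t^*$ precisely because $\gamma\le 1$---each supply $n+1$ values on the correct side of $\gamma k/n$ (for $c_1$) or $1-\gamma k/n$ (for $c_2$). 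This is self-contained and does not rely on score-monotonicity or single-minded proportionality of \im{} as black boxes; the paper's version is shorter given that those two facts are already available from \citet{freeman2021truthfulbudget}. Both are valid.

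One small gap: Claim~(3) asserts single-minded proportionality of \maxrule{} and \util{} for $n=2$ and \emph{all} $m$, but you derive it as ``the specialization to $\gamma=1$'' of your score-representation argument, which you only carried out for $n=m=2$. For $m\ge 3$ you still need the (easy) direct argument the paper gives: with two single-minded agents pointing at distinct candidates $c_j$ and $c_k$, \maxrule{} assigns $\frac12$ to each by definition, and for \util{} every outcome supported on $\{c_j,c_k\}$ is utilitarian-optimal, so the maximum-entropy tie-breaking selects $x_j=x_k=\frac12$. This is a two-line fix, but as written the case is not covered.
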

\begin{proof} We prove each of the claims separately.\medskip

\noindent\textbf{Claim 1}: Consider an instance $\mathcal{I}$, and fix a candidate $c_j$ and a value $\gamma\in [0,1]$. We need to show that
the vector $\mathbf{x}$ chosen by \avg{} satisfies $x_j\geq \gamma\cdot \frac{\mathcal{N}(\mathcal{I}, c_j, \gamma)}{n}$. For this, let $S:=\{i\in N\colon s_{i,j}\geq \gamma\}$ denote the set of agents that report a score of at least $\gamma$ for $c_j$, so that $|S|=\mathcal{N}(\mathcal{I}, c_j, \gamma)$. It holds that $x_j=\frac{1}{n}\sum_{i\in N} s_{i,j}\geq \frac{1}{n}\sum_{i\in S} s_{i,j} \geq \gamma \cdot \frac{\mathcal{N}(\mathcal{I}, c_j, \gamma)}{n}$, which shows that \avg{} satisfies score-representation.\medskip

\noindent\textbf{Claim 2}: First, it was shown by \citet{freeman2021truthfulbudget} that \im{} satisfies single-minded proportionality. Next, we show that \im{} also satisfies score-representation if $m=2$. To this end, consider an instance~$\mathI$ and assume for contradiction that the outcome $\mathbf{x}$ of \im{} fails score-representation. Without loss of generality, suppose that there exists $\gamma\in [0,1]$ such that $x_1<\gamma\cdot \frac{\mathcal{N}(\mathI, c_1, \gamma)}{n}$. Clearly, this means that $\gamma>x_1$. Now, consider the instance $\mathI'$ such that each agent $i$ with $s_{i,1}>x_1$ reports $s_{i,1}'=1$ and $s'_{i, 2}=0$ while all other agents submit the same scores as in $\mathI$, and let ${\mathbf x}'$ be the output of \im{} on $\mathI'$.
We then have $x_1'=x_1$, because increasing the score of $c_1$ for agents who already assign to it a score above the median does not change the position of the median. Since $\gamma>x_1$, we have $\mathcal{N}(\mathI, c_1, \gamma)\leq \mathcal{N}(\mathI', c_1, 1)$ and hence $x_1<\gamma\cdot\frac{\mathcal{N}(\mathI, c_1, \gamma)}{n}\le \frac{\mathcal{N}(\mathI', c_1, 1)}{n}$. 
Next, let $\mathI''$ denote the instance derived from $\mathI'$ by setting $s_{i,1}''=0$ and $s_{i, 2}''=1$ for all agents $i\in N$ with $s_{i,1}\leq x_1$, and let ${\mathbf x}''$ be the output of \im{} on this instance. Since \im{} is known to be score-monotone \citep[see][Thm.~3]{freeman2021truthfulbudget}, it follows that $x_1''\leq x_1'$. Moreover, it holds that $\mathcal{N}(\mathI', c_1, 1)=\mathcal{N}(\mathI'',c_1,1)$, so we can again infer that $x_1''< \frac{\mathcal{N}(\mathI'', c_1, 1)}{n}$. However, in $\mathI''$, all agents are single-minded, so $x_1''= \frac{\mathcal{N}(\mathI'', c_1, 1)}{n}$ because \im{} satisfies single-minded proportionality. This yields the desired contradiction, which means that \im{} satisfies score-representation for $m=2$. 

Finally, to show that \im{} fails score-representation for all $m\geq 3$ and $n\geq 2$, it suffices to consider the instance $\mathI^5$ in the proof of \Cref{thm:efficiencyprops}. In this instance, all agents assign score $\frac{n+1}{n+2}$ to $c_1$, but \im{} only assigns a score of $\frac{n}{n+2}$ to this candidate, so score-representation is violated.\medskip

\noindent\textbf{Claim 3}: 
We will first argue that both \maxrule{} and \util{} satisfy single-minded proportionality for $n=2$, but fail it if 
$n\ge 3$ and $m\ge 2$.

Suppose first that $n=2$. Let $\mathI$ denote a $2$-agent instance where both agents are single-minded. 
If both agents assign score $1$ to the same candidate (and hence $0$ to all other candidates), then both \maxrule{} and \util{} assign score $1$ to this candidate as well. 
Now, suppose that the two agents assign score~$1$ to different candidates, say $c_j$ and $c_k$. We claim that both of these candidates receive a probability of $\frac{1}{2}$ under both \maxrule{} and \util{}.
For \maxrule{}, this is immediate from the definition of the rule. 
For \util{}, every outcome that assigns probability~$0$ to candidates in $C\setminus\{c_j, c_k\}$
maximizes the utilitarian welfare, so \util{} assigns probability $\frac12$ to each of $c_j$ and $c_k$ due to the maximum-entropy tie-breaking.
In either case, both \maxrule{} and \util{} satisfy single-minded proportionality. 

To show that \maxrule{} and \util{} fail single-minded proportionality if $n\geq 3$ and $m\geq 2$, consider the following instance $\mathI^6$ (where $s_{i,j}=0$ for all $j\geq 3$ and $i\in N$).  
    \begin{center}
        \begin{tabular}{ c | c c}
          $\mathcal{I}^6$ & $s_{i,1}$ & $s_{i,2}$ \\
         \hline \hline
         $1$ & $1$ & $0$ \\ 
         $i\in \{2,\dots, n\}$ & $0$ & $1$ 
        \end{tabular}       
    \end{center}
    In this instance, \maxrule{} assigns probability $\frac{1}{2}$ to both $c_1$ and $c_2$, but single-minded proportionality requires that $x_2= \frac{n-1}{n} > \frac{1}{2}$.
    On the other hand, \util{} assigns probability $1$ to $c_2$, but single-minded proportionality requires that $x_1=\frac{1}{n} > 0$.
    
To establish our results regarding score-representation, we consider \maxrule{} and \util{} separately.\medskip

    \emph{\maxrule{}}:
    We first prove that \maxrule{} satisfies score-representation when $n=m=2$. Consider an instance $\mathI$ with $n=m=2$, and assume without loss of generality that $s_{1,1} \geq s_{2,1}$ and $s_{2,2} \geq s_{1,2}$.
    Then, score-representation is equivalent to the following set of constraints: \emph{(i)} $x_1 \geq \frac{s_{1,1}}{2}$ and $x_2 \geq \frac{s_{2,2}}{2}$, and \emph{(ii)} $x_1 \geq s_{2,1}$ and $x_2 \geq s_{1,2}$.
    Property \emph{(ii)} follows from the fact that \maxrule{} is range-respecting when $m=2$ (see \Cref{thm:efficiencyprops}). On the other hand, $x_1= \frac{s_{1,1}}{s_{1,1} + s_{2,2}}\geq \frac{s_{1,1}}{2}$ and $x_2= \frac{s_{2,2}}{s_{1,1} + s_{2,2}}\geq \frac{s_{2,2}}{2}$ since $s_{i,j}\leq 1$ for all $i\in N$, $j\in [m]$. This demonstrates that \emph{(i)} also holds.

    To show that \maxrule{} fails score-representation when $n\geq 2$ and $m\geq 3$, it is sufficient to consider instance $\mathcal{I}^4$ in the proof of \Cref{thm:efficiencyprops}, which shows that it violates score-unanimity. In this instance, all agents assign score $\frac{1}{4}$ to candidate $c_1$, but \maxrule{} only assigns probability $\frac{1}{5}$ to this candidate. 

    \medskip
    
    \emph{\util{}}: We first show that \util{} satisfies score-representation when $n=m=2$. Just as for \maxrule{}, we consider an instance $\mathI$ with $n=m=2$, assume that $s_{1,1} \geq s_{2,1}$ and $s_{2,2} \geq s_{1,2}$, and prove that \emph{(i)} $x_1 \geq \frac{s_{1,1}}{2}$ and $x_2 \geq \frac{s_{2,2}}{2}$, and \emph{(ii)} $x_1 \geq s_{2,1}$ and $x_2 \geq s_{1,2}$. Property \emph{(ii)} follows immediately, as \util{} is Pareto optimal and therefore also range-respecting. 
    As for \emph{(i)}, observe that any outcome $(x_1, x_2)$ with $s_{2, 1}\le x_1\le s_{1, 1}$ and $s_{1, 2}\le x_2\le s_{2, 2}$
    maximizes the utilitarian welfare, so \util{} picks a maximum-entropy outcome in this range.
    In particular, if  $s_{1,1}\geq \frac{1}{2}\geq s_{1,2}$ and $s_{2,2}\geq \frac{1}{2}\geq s_{2,1}$, it outputs $(\frac{1}{2},\frac{1}{2})$, and we have $\frac12\ge \frac{s_{1, 1}}2$ and $\frac12\ge \frac{s_{2, 2}}2$. Otherwise,  $s_{1,1}<\frac{1}{2}$ or  $s_{2,2}<\frac{1}{2}$; assume without loss of generality that $s_{1,1}<\frac{1}{2}$.
    In this case we also have $s_{2,1}<\frac{1}{2}$ since $s_{2,1}\leq s_{1,1}$, and moreover,  
    $s_{2, 2}=1-s_{2, 1}\ge 1-s_{1, 1}=s_{1, 2}>\frac12$.
    Hence, 
    maximum-entropy tie-breaking forces \util{} to choose the outcome $(s_{1,1}, s_{1,2})$, and we have $s_{1, 1}\geq \frac{s_{1,1}}{2}$ and $s_{1, 2} = 1 -s_{1, 1}\ge \frac12\geq \frac{s_{2,2}}{2}$.  
    
    Finally, to show that \util{} fails score-representation when $n=2$ and $m\geq 3$, 
    we consider the following instance $\mathI^7$, where \util{} chooses the outcome $(\frac{1}{3},\frac{1}{3},\frac{1}{3})$ due to the maximum-entropy tie-breaking. This outcome fails score-representation since the axiom requires $x_1$ to be at least $\frac{1}{2}$. As before, the counterexample can be extended to larger $m$ by adding candidates that receive score $0$ from all agents.
    \begin{center}
        \begin{tabular}{ c | c c c c }
          $\mathcal{I}^7$ & $s_{i,1}$ & $s_{i,2}$ & $s_{i,3}$ \\ 
         \hline \hline
         $1$ & $1$ & $0$ & $0$\\ 
         $2$ & $\frac{1}{3}$ & $\frac{1}{3}$ & $\frac{1}{3}$
        \end{tabular}       
    \end{center}\smallskip

    \noindent\textbf{Claim 4}: If $n=2$, then \med{} and \egal{} coincide with \avg{}: for \med{} this is true by definition, and for \egal{} this was shown in Proposition~\ref{prop:welfare-egal_is_sum}. Thus, both of these rules satisfy score-representation for $n=2$ due to Claim~1. By contrast, if $n\geq 3$, we consider the instance $\mathI^6$ used to show that \maxrule{} and \util{} fail single-minded proportionality.
    For this instance, \med{} returns the outcome $\mathbf{x}=(0,1)$ while \egal{} returns the outcome $\mathbf{x}=(\frac{1}{2}, \frac{1}{2})$.
    Both of these outcomes violate single-minded proportionality.\medskip

\noindent\textbf{Claim 5}: 
    We first prove that \minrule{} and \geo{} satisfy single-minded proportionality when $n=m=2$. There are two cases to consider.
    If both agents give a score of $1$ to the same candidate, it will be assigned a score of $1$ by both rules. On the other hand, if both agents give a score of $1$ to different candidates, then both rules return $\mathbf{x} = (\frac{1}{2}, \frac{1}{2})$, which also satisfies single-minded proportionality.

    To see that \minrule{} and \geo{} fail single-minded proportionality for the case where $n\geq 3$ or $m\geq 3$, we again consider the instance $\mathI^6$ used to show that \maxrule{} and \util{} fail single-minded proportionality (recall that all candidates $c_j$ with $j\geq 3$ receive score $0$ from all agents). 
    In this instance, both rules assign probability $\frac{1}{m}$ to all candidates, which violates single-minded proportionality unless $m=n=2$.
    
    Finally, to see that these rules also fail score-representation for the case $n=m=2$, consider the instance with $\mathbf{s}_1 = (1,0)$ and $\mathbf{s}_2 = (\frac{1}{2},\frac{1}{2})$. 
    Then, score-representation mandates that $x_2 \geq \frac{1}{4}$, but both rules return $\mathbf{x} = (1,0)$.
\end{proof}

\begin{remark}
    In the instance $\mathI^6$ used to show that \minrule{} and \geo{} fail single-minded proportionality, it holds that $\min_{i\in N} s_{i,j}=0$ for all $j\in [m]$. Indeed, whenever $s_{i, j}\in\{0, 1\}$ for all $i\in N$, $j\in [m]$, either there is a unique $j\in [m]$ with $s_{i, j}=1$ for all $i\in N$, or $\min_{i\in N}s_{i, j}=0$ for all $j\in [m]$. 
     Because $\min_{i\in N} s_{i,j}=0$ for all $j\in [m]$, both rules assign a probability of $\frac{1}{m}$ to every candidate, as discussed in the paragraph after Definition~\ref{defn:coord_rules}. 
     However, our results do not depend on this specific convention: for each pair $(m, n)$ where $m\ge 3$ or $n\ge 3$, if we fix an outcome $(x_1, \dots, x_m)$ to be returned by our rule whenever all $m $ coordinate-aggregation functions return~$0$, it is possible to construct an $m$-candidate instance with $n$ single-minded agents for which all coordinate-aggregation functions return $0$ but $(x_1, \dots, x_m)$ is not the proportional outcome.
\end{remark}

\section{Consistency Properties}

In this section, we analyze our aggregation rules with respect to various consistency axioms. In particular, we will show that score-monotonicity and reinforcement are satisfied by almost all of our rules, whereas \avg{} is the only one that fulfills independence. We note that \citet{freeman2021truthfulbudget} have already shown that \im{} and \util{} satisfy reinforcement and score-monotonicity. 

\begin{theorem}\label{thm:consistencyprops}
The following claims hold. 
    \begin{enumerate}[label=(\arabic*),topsep=4pt,itemsep=0pt]
    \item \avg{} satisfies independence. \med{} and \egal{} satisfy independence when $m=2$ or $n=2$ but fail this condition for all $m\geq 3$ and $n\geq 3$. 
        \maxrule{}, \minrule{}, \geo{}, \util{}, and \im{} satisfy independence when $m=2$, but fail to do so for all $m\geq 3$ and $n\geq 2$.
        \item All five coordinate-wise aggregation rules, \util{}, and \im{} satisfy score-monotonicity. \egal{} satisfies score-monotonicity when $m=2$ or $n=2$, but fails to do so for all $m \geq 4$ and $n \geq 4$.
        \item \avg{}, \minrule{}, \maxrule{}, and \geo{} as well as \util{}, \egal{}, and \im{} satisfy reinforcement. \med{} satisfies reinforcement when $m=2$, but fails to do so for all $m \geq 3$. 
    \end{enumerate}
\end{theorem}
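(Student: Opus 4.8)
The plan is to treat the three claims separately, proving the positive results by structural arguments and the negative ones by explicit counterexamples; several cases collapse at once. For \textbf{independence} the case $m=2$ is vacuous for \emph{every} rule, since fixing each agent's score for $c_1$ also fixes her score for $c_2=1-c_1$, so the two instances coincide. Whenever $n=2$, both \med{} and \egal{} reduce to \avg{} (by definition and by \Cref{prop:welfare-egal_is_sum}), so their $n=2$ positive results follow from those for \avg{}. Throughout I would use that a coordinate-wise rule respects a coordinate-level axiom exactly when the normalization factor $\sum_k f_k$ is well-behaved, and that for \avg{} this factor is always $1$ by \Cref{prop:nonorm}(a).

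For \textbf{independence}, the positive result for \avg{} is immediate since $F(\mathcal I)_j=\frac1n\sum_i s_{i,j}$ depends only on the scores of $c_j$. For the failures with $m\ge 3$ of \maxrule{}, \minrule{}, \geo{}, and \med{}, the obstruction is always normalization: I would fix each agent's score for $c_1$ and redistribute the scores of $c_2,\dots,c_m$ among the agents so that the aggregate of those coordinates, hence $\sum_k f_k$, changes, thereby altering $F(\mathcal I)_1$. For \util{} and \im{}, which are not coordinate-wise when $m\ge 3$, I would give explicit small instances on three candidates where the outcome for $c_1$ changes while its scores are held fixed (recall that any independent rule must be coordinate-wise). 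The same style of counterexample, now needing $n\ge 3$, handles the failures of \med{} and \egal{}.

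For \textbf{score-monotonicity}, the five coordinate-wise rules admit one uniform argument: when agent $i$ raises $s_{i,j}$ and weakly lowers the other coordinates, monotonicity of $f$ (which holds for the average, maximum, minimum, median, and geometric mean) makes the numerator $a=f_j$ weakly increase and the remaining mass $b=\sum_{k\ne j}f_k$ weakly decrease; since $\tfrac{a}{a+b}$ is nondecreasing in $a$ and nonincreasing in $b$, the value $F(\mathcal I)_j$ weakly increases. The results for \util{} and \im{} are cited from \citet{freeman2021truthfulbudget}, and for \egal{} the $n=2$ case follows from \avg{}, the $m=2$ case needs a short direct argument, and the failure for $m,n\ge 4$ is by an explicit instance. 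For \textbf{reinforcement} I would treat \avg{}, \maxrule{}, \minrule{}, and \geo{} together. Writing $a_j,b_j$ for the unnormalized coordinate aggregates of the two groups and $A=\sum_k a_k$, $B=\sum_k b_k$, the hypothesis $F(\mathcal I)=F(\mathcal I')=\mathbf x$ gives $a_j=x_jA$ and $b_j=x_jB$; since the operation combining the two aggregates (weighted mean, maximum, minimum, or weighted geometric mean) is positively homogeneous of degree one, the combined aggregate equals $x_j\cdot C$ for a constant $C$ independent of $j$, and normalization returns exactly $x_j$. The results for \util{} and \im{} are again cited. For \med{}, the $m=2$ case reduces via \Cref{prop:nonorm}(b) to the fact that a value shared as the median of two groups is a median of their union, while for $m\ge 3$ I would exhibit two instances whose per-coordinate medians normalize to the same outcome but whose combined medians normalize differently.

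The main obstacle is \egal{} reinforcement. Its core is the lemma that, for non-increasingly sorted vectors, $u\le_{\mathit{lex}}u'$ and $w\le_{\mathit{lex}}w'$ imply $\mathrm{sort}^{\downarrow}(u\oplus w)\le_{\mathit{lex}}\mathrm{sort}^{\downarrow}(u'\oplus w')$, where $\oplus$ is concatenation. I would prove this by induction on the total length, comparing the two largest entries and, when they tie, deleting a copy of the common maximum from matching source vectors, the lexicographic hypotheses guaranteeing that a consistent choice of source always exists. Applied to the sorted disutility vectors, this shows that an outcome $\mathbf x$ leximin-optimal for both $\mathcal I$ and $\mathcal I'$ is leximin-optimal for $\mathcal I\cup\mathcal I'$, and the consistent tie-breaking convention then forces \egal{} to return $\mathbf x$ on the union. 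Beyond this lemma, the remaining work is the bookkeeping of constructing and verifying the individual counterexamples.
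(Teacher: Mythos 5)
Your overall decomposition and nearly all of the positive arguments coincide with the paper's: independence is vacuous for $m=2$; \med{} and \egal{} reduce to \avg{} for $n=2$; \avg{} is independent because no normalization is needed; score-monotonicity of the coordinate-wise rules follows from monotonicity of the coordinate-aggregation functions together with monotonicity of $a\mapsto a/(a+b)$; the \med{} reinforcement claim for $m=2$ reduces to a median-of-a-union lemma; and the \util{}/\im{} claims are cited. There are two places where you take a genuinely cleaner route. First, you unify the reinforcement proofs for \avg{}, \maxrule{}, \minrule{}, and \geo{} via positive homogeneity of the combining operation, whereas the paper argues each rule separately (a scaling factor $\alpha$ for \maxrule{}/\minrule{}, an explicit computation for \geo{}). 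Second, for \egal{} reinforcement you isolate and prove the lemma that lexicographic dominance of sorted vectors is preserved under concatenation-and-resorting, which the paper dismisses with ``it is easy to see''; your matched-maximum-deletion induction is correct, and its strict version (which you need so that any alternative optimum for the combined instance ties with $\mathbf{x}$ on each sub-electorate, letting consistent tie-breaking finish the job) goes through by the same deletion argument.

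Two omissions keep this from being a complete proof. First, the negative results---which account for roughly half of the theorem---are only described as strategies: you never exhibit or verify the instances showing that \med{} and \egal{} fail independence for $m,n\geq 3$, that \util{} and \im{} fail it for $m\geq 3$ and $n\geq 2$, that \egal{} fails score-monotonicity for $m,n\geq 4$, or that \med{} fails reinforcement for $m\geq 3$. Some of these are delicate: the \egal{} score-monotonicity counterexample requires two four-agent, four-candidate instances whose leximin outcomes must be computed exactly, and the \med{} reinforcement counterexample needs a second electorate (a single agent) whose report equals the normalized median vector of the first. Second, you ignore the degenerate case in which every coordinate aggregate is zero and the rule defaults to the uniform distribution $(\frac{1}{m},\dots,\frac{1}{m})$: this affects both the score-monotonicity lemma (the paper devotes three of its four cases to it) and the reinforcement arguments for \minrule{} and \geo{}, where your identity $a_j=x_jA$ is uninformative when $A=0$. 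These cases are easy but must be checked.
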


We remark that while the bounds on $m$ and $n$ are tight for almost all results in our paper, it remains open whether \egal{} satisfies score-monotonicity when $m=3$ or $n=3$.

\begin{proof}
    We prove each of the claims separately.\medskip

    \noindent\textbf{Claim 1}: We first note that all aggregation rules satisfy independence when $m=2$: indeed, if $s_{i,j}=s_{i,j}'$ for some candidate $c_j$ and all $i\in N$, then $\mathI=\mathI'$.  
    Moreover, if $n=2$, then \med{} and \egal{} coincide with \avg{}: for \med{} this is true by definition, and for \egal{} this was shown in Proposition~\ref{prop:welfare-egal_is_sum}.
     Hence, to show that \med{} and \egal{} satisfy independence for $n=2$, it suffices to show that \avg{} satisfies independence. We prove this next, and then provide examples that establish the remaining parts of the claim.\medskip

\emph{\avg}: Consider two instances $\mathI$ and $\mathI'$ and a candidate $c_{j^*}$ such that $s_{i,j^*}=s'_{i,j^*}$ for all $i\in N$. 
We have $\frac{1}{n} \sum_{i\in N} s_{i,j^*}=\frac{1}{n} \sum_{i\in N} s_{i,j^*}'$. 
It follows from Proposition~\ref{prop:nonorm} that \avg{} assigns the same probability to $c_j$ in $\mathI$ and $\mathI'$.\medskip

\emph{\maxrule{}, \minrule{}, \geo{}, and \im{}}: 
We first observe that all these rules (as well as all other rules we consider) are {\em unanimous}, i.e., if every agent reports the same score vector ${\mathbf s} = (s_1, \dots, s_m)$, the rule outputs $\mathbf s$. This is immediate for all coordinate-wise rules and both welfare-based rules. 
For \im{}, observe that since $f_0^\im{}\equiv 0$, the vector $\mathbf x$ output by \im{} on our instance satisfies $x_j\le s_j$ for all $j\in [m]$; together with $x_1+\dots+x_m = s_1+\dots+s_m=1$, this implies that \im{} is unanimous as well.

Now, consider the instances $\mathI^4$ and $\mathI^5$ used to show that \maxrule{}, \minrule{}, \geo{}, and \im{} fail score-unanimity. In each of these instances, there is a candidate $c_j$ receiving the same score (say, $\gamma$) from each agent, yet the rule under consideration fails to allocate $\gamma$ to $c_j$.
Modify these instances so that $c_j$ still receives score $\gamma$ from each agent, while every other candidate receives score $\frac{1-\gamma}{m-1}$ from each agent. The resulting instances are unanimous, so the rule allocates $\gamma$ to $c_j$. This shows that all these rules violate independence.

\medskip

\emph{\util{}}: First, if $m\geq 3$ and $n\geq 3$ is odd, consider the following instances $\mathI$ and $\mathI'$: in $\mathI$, the first agent assigns score $1$ to $c_3$, the next
$\frac{n-1}{2}$ agents assign score $1$ to $c_1$, and the remaining $\frac{n-1}{2}$ agents assign score $1$ to $c_2$. 
In $\mathI'$, the first agent assigns score $1$ to $c_2$, while all other agents have the same preferences as in $\mathI$. Let $\mathbf x$ and ${\mathbf x}'$ be the outputs of \util{} on $\mathI$ and $\mathI'$, respectively.
If $n = 3$, then $\mathbf{x} = (\frac{1}{3}, \frac{1}{3}, \frac{1}{3})$, while if $n\ge 5$, then $\mathbf{x}=(\frac{1}{2}, \frac{1}{2}, 0)$ (due to the maximum-entropy tie-breaking).
On the other hand, $\mathbf{x}'=(0, 1, 0)$, thus showing that independence is violated for $c_1$. 

If $m\geq 3$ and $n\geq 4$ is even, consider the same instances $\mathI$ and $\mathI'$, but with the last agent removed. That is, 
in $\mathI$, the first agent assigns score $1$ to $c_3$, 
$\frac{n}{2}$ agents assign score $1$ to $c_1$, and $\frac{n}{2}-1$ agents assign score $1$ to $c_2$, and $\mathI'$ is obtained
from $\mathI$ by changing the preference of the first agent to $(0, 1, 0)$. Then \util{} outputs $(1, 0, 0)$ on $\mathI$, but its output on $\mathI'$ is $(\frac12, \frac12, 0)$ (due to the maximum-entropy tie-breaking), so independence is violated for $c_1$. 

Finally, if  $m\ge 3$ and $n=2$, we consider the following
instances $\mathI$ and $\mathI'$: in $\mathI$ agent $1$ reports $(1, 0, 0)$ and agent $2$ reports $(0, 1, 0)$, while in $\mathI'$ agent $1$  reports $(1, 0, 0)$ and agent $2$ reports $(0, \frac12, \frac12)$.
\util{} outputs $(\frac12, \frac12, 0)$ on $\mathI$
and $(\frac13, \frac13, \frac13)$ on $\mathI'$
 due to the maximum-entropy tie-breaking. Hence, independence is violated for $c_1$.\medskip

\emph{\egal{}}: To show that \egal{} fails independence for all $m\geq 3$ and $n\geq 3$, we consider the instances $\mathI$ and $\mathI'$ defined as follows.
In $\mathI$, agent $1$ assigns score $1$ to $c_1$, agent $2$ assigns score $1$ to $c_2$, and every other agent assigns score $1$ to $c_3$. By contrast, in $\mathI'$, agent $1$ assigns score $1$ to $c_1$ and every other agent assigns score $1$ to $c_2$. 
On $\mathI$, \egal{} outputs $\mathbf x$ with $x_1=x_2=x_3=\frac13$,  whereas on $\mathI'$, it outputs ${\mathbf x}'$ with $x'_1=x'_2=\frac12$.
Thus, independence is violated for $c_1$. \medskip

\emph{\med{}}: For \med{} we consider the following instances $\mathI^8$ and $\mathI^9$, where all agents assign score $0$ to all~$c_j$ with $j\geq 4$. For all $n\ge 3$ we have $\text{med}_{i\in N} s_{i,1}^8=\frac{1}{2}$ and  $\text{med}_{i\in N} s_{i,3}^8=\frac{1}{2}$; moreover,  
$\text{med}_{i\in N} s_{i,2}^8=\frac{1}{2}$ if $n$ is odd and $\text{med}_{i\in N} s_{i,2}^8=\frac{1}{4}$ if $n$ is even. This means that \med{} assigns a probability of less than $\frac{1}{2}$ to $c_1$ in $\mathI^8$. By contrast, in $\mathI^9$, the medians are $\text{med}_{i\in N} s_{i,1}^9=\frac{1}{2}$, $\text{med}_{i\in N} s_{i,2}^9=0$, and $\text{med}_{i\in N} s_{i,3}^9=\frac{1}{2}$. Hence, \med{} assigns a probability of $\frac{1}{2}$ to $c_1$ in this instance, and independence is violated. 
\begin{center}
    \begin{tabular}{ c | c c c}
          $\mathcal{I}^8$ & $s_{i,1}$ & $s_{i,2}$ & $s_{i,3}$\\ 
         \hline \hline
         $1$ & $\frac{1}{2}$ & $\frac{1}{2}$ & $0$\\
         $i\in \{2,\dots, \lceil\frac{n+1}{2}\rceil\}$ & $\frac{1}{2}$ & $0$ & $\frac{1}{2}$\\
         $i\in \{\lceil\frac{n+1}{2}\rceil+1,\dots, n\}$ & $0$ & $\frac{1}{2}$ & $\frac{1}{2}$ \\
        \end{tabular}   
    \qquad\qquad
    \begin{tabular}{ c | c c c}
          $\mathcal{I}^9$ & $s_{i,1}$ & $s_{i,2}$ & $s_{i,3}$\\ 
         \hline \hline
         $1$ & $\frac{1}{2}$ & $0$ & $\frac{1}{2}$ \\
         $i\in \{2,\dots, \lceil\frac{n+1}{2}\rceil\}$ & $\frac{1}{2}$ & $0$ & $\frac{1}{2}$\\
         $i\in \{\lceil\frac{n+1}{2}\rceil+1,\dots, n\}$ & $0$ & $\frac{1}{2}$ & $\frac{1}{2}$ \\
        \end{tabular}   
    \end{center}\medskip

    \noindent\textbf{Claim 2}: Theorem 3 of \citet{freeman2021truthfulbudget} shows that \im{} and \util{} are score-monotone. We first deal with coordinate-wise aggregation rules, and then consider \egal{}.\medskip
    
    \emph{Coordinate-wise aggregation rules}: 
    Recall that a function $f:{\mathbb R}^n\to \mathbb R$ is {\em monotone} if, for every pair of vectors ${\mathbf x} = (x_1, \dots, x_n)$ and ${\mathbf y} = (y_1, \dots, y_n)$ with $x_i\ge y_i$ for all $i\in [n]$, it holds that $f({\mathbf x})\ge f({\mathbf y})$. 
    Observe that for each of our coordinate-wise aggregation rules, it holds that
    their coordinate-aggregation functions are monotone. Our claim is now implied by the following lemma.
    \begin{lemma}
        Suppose that a coordinate-wise aggregation rule has the property that all of its coordinate-aggregation functions are monotone.
        Then this rule is score-monotone.
    \end{lemma}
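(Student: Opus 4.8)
The plan is to fix an agent $i$ and candidate $c_j$ witnessing the hypotheses of score-monotonicity, and to track how the unnormalized coordinate scores move. Writing $a_k := f_k^n(s_{1,k},\dots,s_{n,k})$ and $a_k' := f_k^n(s'_{1,k},\dots,s'_{n,k})$ for the unnormalized score of candidate $c_k$ in $\mathcal{I}$ and $\mathcal{I}'$ respectively, I would first record the effect of agent $i$'s deviation on each coordinate. Since only agent $i$ changes her report and $s_{i,j} < s'_{i,j}$ while the other agents' scores for $c_j$ are unchanged, the input vector to $f_j^n$ weakly increases in every coordinate when passing from $\mathcal{I}$ to $\mathcal{I}'$; by monotonicity of $f_j^n$ this yields $a_j \le a'_j$. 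Symmetrically, for each $k \neq j$ condition (iii) gives $s_{i,k} \ge s'_{i,k}$, so the input to $f_k^n$ weakly decreases and monotonicity gives $a_k \ge a'_k$.

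Next I would establish the inequality $F(\mathcal{I})_j \le F(\mathcal{I}')_j$ in the generic case where both normalization factors $S := \sum_k a_k$ and $S' := \sum_k a'_k$ are positive. Setting $R := \sum_{k \ne j} a_k$ and $R' := \sum_{k \ne j} a'_k$, so that $S = a_j + R$ and $S' = a'_j + R'$, the claim $\frac{a_j}{S} \le \frac{a'_j}{S'}$ is equivalent, after clearing the positive denominators, to $a_j R' \le a'_j R$. This last inequality follows immediately from the bounds just derived, since all quantities are nonnegative and
\begin{equation*}
a_j R' \le a'_j R' \le a'_j R,
\end{equation*}
where the first step uses $a_j \le a'_j$ together with $R' \ge 0$, and the second uses $R' \le R$ together with $a'_j \ge 0$.

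Finally I would dispose of the degenerate cases arising from the tie-breaking convention, which is where one must be slightly careful rather than where any real difficulty lies. If $S = 0$ then every $a_k = 0$, whence $a'_k = 0$ for all $k \ne j$ (as $0 = a_k \ge a'_k \ge 0$) and $F(\mathcal{I})_j = \frac1m$; either $a'_j = 0$, in which case $F(\mathcal{I}')_j = \frac1m$, or $a'_j > 0$, in which case $F(\mathcal{I}')_j = 1$, so $F(\mathcal{I})_j \le F(\mathcal{I}')_j$ holds in both subcases. If instead $S > 0$ but $S' = 0$, then $a'_j = 0$ forces $a_j \le a'_j = 0$, so $F(\mathcal{I})_j = 0 \le \frac1m = F(\mathcal{I}')_j$. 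Combining the generic case with these boundary cases establishes score-monotonicity for every instance. The only mild obstacle is keeping the directions of the two monotonicity conclusions straight—$a_j$ moves up while the competing scores $a_k$ move down—and ensuring the cross-multiplication remains valid when a denominator vanishes; both are resolved by the case analysis above.
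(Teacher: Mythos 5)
Your proof is correct and follows essentially the same route as the paper's: use monotonicity of each $f_k^n$ to get $a_j \le a_j'$ and $a_k \ge a_k'$ for $k \ne j$, handle the generic case by comparing the normalized ratios (you cross-multiply where the paper invokes monotonicity of $x \mapsto \frac{x}{x+\lambda}$, which is the same computation), and dispose of the all-zero tie-breaking cases separately. The case analysis is organized slightly differently (by which normalization factor vanishes rather than by which instance has all coordinate scores zero), but these conditions are equivalent since the $f_k^n$ are nonnegative, and all cases are covered.
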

    \begin{innerproof}
     Consider two instances $\mathcal{I}$ and $\mathcal{I'}$, an agent $i\in N$, and a candidate $c_j$ such that $\mathbf{s}_\ell=\mathbf{s}'_{\ell}$ for all $\ell\in N\setminus \{i\}$, $s_{i,j}> s_{i,j}'$, and $s_{i,k}\leq s_{i,k}'$ for all $c_{k}\in C\setminus \{c_j\}$. For ease of presentation, we will slightly abuse notation and write $f_k(\mathI)$ to mean $f_k(s_{1,k}, \dots, s_{n, k})$. We consider four cases.

    \medskip
    \underline{Case~1}:
    $f_k(\mathI)=f_k(\mathI')=0$ for all $k\in [m]$. Then $F(\mathI)_j=F(\mathI')_j=\frac{1}{m}$ by definition, and score-monotonicity holds.

    \medskip
    \underline{Case~2}:
    $f_k(\mathI)=0$ for all $k\in [m]$, but there is some $k^*$ with $f_{k^*}(\mathI')>0$. By definition of $F$, this means that 
    $F(\mathI')_k=0$ for all $k$ with $f_k(\mathI')=0$.
    Further, we have $s_{i,j}>s_{i,j}'$ and 
    $s_{\ell, j}=s'_{\ell, j}$ for $\ell\in N\setminus\{i\}$, so monotonicity of
    $f_j$ implies that $f_{j}(\mathI')\leq f_{j}(\mathI)=0$.
    Hence, as argued above, $F(\mathI')_j=0$.  
    On the other hand, we have  $F(\mathI)_j=\frac{1}{m}>0$, so score-monotonicity holds. 

    \medskip
    \underline{Case~3}:
    $f_k(\mathI')=0$ for all $k\in [m]$, but there is some $k^*$ with $f_{k^*}(\mathI)>0$. By monotonicity, we have $f_k(\mathI)\leq f_k(\mathI')$ for all $k\in [m]\setminus \{j\}$, so it must be the case that $j=k^*$, i.e., $f_j(\mathI)>0$ and 
    $f_k(\mathI)=0$ for all $k\in [m]\setminus\{j\}$. This implies that $F(\mathI)_j=1>\frac{1}{m}=F(\mathI')_j$, and score-monotonicity holds again. 

    \medskip
    \underline{Case~4}:
    Neither $f_k(\mathI)=0$ for all $k\in [m]$ nor $f_k(\mathI')=0$ for all $k\in [m]$. By monotonicity, $f_j(\mathI)\geq f_j(\mathI')$ and $f_k(\mathI)\leq f_k(\mathI')$ for all $k\in [m]\setminus \{j\}$.
    As the function $g(x) = \frac{x}{x+\lambda}$ is monotonically increasing for $x, \lambda>0$, it follows that
    \begin{align*}
    F(\mathI)_j=\frac{f_j(\mathI)}{\sum_{k\in [m]} f_k(\mathI)}\geq \frac{f_j(\mathI')}{f_j(\mathI')+\sum_{k\in [m]\setminus \{j\}} f_k(\mathI)}\geq \frac{f_j(\mathI')}{\sum_{k\in [m]} f_k(\mathI')}=F(\mathI')_j.
    \end{align*}

    \medskip
    This completes the proof of the lemma (and therefore our claim regarding coordinate-wise aggregation rules).
    \end{innerproof}

\medskip

    \emph{\egal{}}:
    When $n=2$, \egal{} is equivalent to \avg{} (by Proposition~\ref{prop:welfare-egal_is_sum}), and we have just shown that \avg{} satisfies score-monotonicity.
    Thus, we only consider the case $m=2$.  
    Let $\mathcal{I}$ and $\mathcal{I}'$ denote two instances such that $s_{i,1} < s'_{i,1}$ and $s_{i,2}  > s'_{i,2}$ for some agent $i\in N$, whereas all other agents have the same preferences in both instances.
    Let $\mathbf{x}$ and $\mathbf{x}'$ be the outcome vectors returned by \egal{} on $\mathcal{I}$ and $\mathcal{I}'$, respectively. Then,
    $$
    x_1 = \frac{\max_{\ell\in N}s_{\ell, 1}+\min_{\ell\in N}s_{\ell, 1}}{2} \le 
    \frac{\max_{\ell\in N}s'_{\ell, 1}+\min_{\ell\in N}s'_{\ell, 1}}{2} = x'_1,
    $$
    which means that score-monotonicity is satisfied.    

    Next, we prove that \egal{} fails score-monotonicity
    when $m\geq 4$ and $n\geq 4$. As usual, we will provide a counterexample for $m=4$ candidates; it can be extended to $m>4$ by adding candidates that receive score $0$ from all agents. Moreover, we focus on the case $n=4$; the example can be extended to $n>4$ by duplicating agents. 
  
    Consider the following two instances $\mathcal{I}^{10}$ and $\mathcal{I}^{11}$, and let $\mathbf{x}^{10}$ and $\mathbf{x}^{11}$ denote the outputs of \egal{} on $\mathcal{I}^{10}$ and $\mathcal{I}^{11}$, respectively. 
    \begin{center}
        \begin{tabular}{ c | c c c c c}
          $\mathcal{I}^{10}$ & $s_{i,1}$ & $s_{i,2}$ & $s_{i,3}$ & $s_{i,4}$  \\ 
         \hline \hline
         $1$ & $1$ & $0$ & $0$ & $0$ \\ 
         $2$ & $\frac{1}{2}$ & $\frac{1}{4}$ & $\frac{1}{4}$ &  $0$  \\ 
         $3$ & $0$ & $\frac{1}{2}$ & $0$ & $\frac{1}{2}$  \\ 
         $4$ & $0$ & $0$ & $\frac{1}{2}$ & $\frac{1}{2}$  \\ 
        \end{tabular}
        \qquad\qquad
        \begin{tabular}{ c | c c c c c}
          $\mathcal{I}^{11}$ & $s_{i,1}$ & $s_{i,2}$ & $s_{i,3}$ & $s_{i,4}$  \\ 
         \hline \hline
         $1$ & $\frac{1}{2}$ & $0$ & $0$ & $\frac{1}{2}$  \\ 
         $2$ & $\frac{1}{2}$ & $\frac{1}{4}$ & $\frac{1}{4}$ &  $0$   \\ 
         $3$ & $0$ & $\frac{1}{2}$ & $0$ & $\frac{1}{2}$   \\ 
         $4$ & $0$ & $0$ & $\frac{1}{2}$ & $\frac{1}{2}$   \\ 
        \end{tabular}
    \end{center}\smallskip
    We claim that 
    $$
    \mathbf{x}^{10}=\left(\frac12, 0, 0, \frac12\right), \qquad
    \mathbf{x}^{11}=\left(\frac15, \frac15, \frac15, \frac25\right).
    $$
    Indeed, in $\mathcal{I}^{10}$, all agents get disutility $1$ from $(\frac12, 0, 0, \frac12)$. Thus, to show that \egal{} returns this vector, it suffices to prove that any other outcome would increase the disutility of at least one agent. To this end, 
    consider an outcome vector ${\mathbf y} = (y_1, y_2, y_3, y_4)$. If $y_1<\frac12$, we have $d_1({\mathbf y})>1$. Similarly, if $y_4<\frac12$, we have $d_3({\mathbf y})>1$ or $d_4({\mathbf y})>1$. Thus, 
     $(\frac12, 0, 0, \frac12)$ is indeed the outcome returned by \egal{} on $\mathcal{I}^{10}$. 
     
     Similarly, in $\mathcal{I}^{11}$ all agents get disutility $\frac45$ from $(\frac15, \frac15, \frac15, \frac25)$.
     Consider an outcome vector ${\mathbf y} = (y_1, y_2, y_3, y_4)$. If $y_1+y_4<\frac35$ then $d_1({\mathbf y})>\frac45$. Similarly, 
     if $y_2+y_4<\frac35$ then $d_3({\mathbf y})>\frac45$, and if $y_3+y_4<\frac35$ then $d_4({\mathbf y})>\frac45$. Thus, for 
     $\max_{i\in N}d_i({\mathbf y})\le \frac45$ to hold, it has to be the case that
     \begin{equation}\label{eq:egal-sm}
     y_1+y_4\ge \frac35, \qquad 
     y_2+y_4\ge \frac35, \qquad
     y_3+y_4\ge \frac35;
     \end{equation}
     as $y_1+y_2+y_3+y_4=1$, this implies $y_4\ge \frac25$. Moreover, if $y_4>\frac25$, we have $d_2({\mathbf y})>\frac45$. It follows that \egal{} assigns probability $\frac25$ to $c_4$. Substituting $y_4=\frac25$ into each inequality in~\eqref{eq:egal-sm} and using the fact that $y_1+y_2+y_3=\frac35$, we conclude that 
     \egal{} indeed outputs $(\frac15, \frac15, \frac15, \frac25)$ on $\mathI^{11}$.
         
    It remains to observe that $x_4^{10}=\frac{1}{2}>\frac{2}{5}=x_4^{11}$ despite the fact that $s_{1,4}^{10}<s_{1,4}^{11}$, so score-monotonicity is violated.\medskip

    \noindent\textbf{Claim 3}: We next show that among all the rules considered in this paper, only \med{} fails reinforcement. For this, we first note that \citet[Thms.~9 and 13]{freeman2021truthfulbudget} have shown that \im{} and \util{} satisfy reinforcement. We hence focus on the remaining rules and consider three instances $\mathI=(\mathbf{s}_1,\dots, \mathbf{s}_n)$ (defined for electorate $N$), $\mathI'=(\mathbf{s}_1',\dots, \mathbf{s}_{n'}')$ (defined for electorate $N'$), and $\mathI''=(\mathbf{s}_1,\dots, \mathbf{s}_n, \mathbf{s}_1',\dots, \mathbf{s}_{n'}')$ (defined for electorate $N\cup N'$; we assume $N\cap N'=\emptyset$). 
    
    We consider each rule separately.\medskip

     \emph{\avg}: Let $\mathbf{x}$, $\mathbf{x}'$, and $\mathbf{x}''$ be the outcomes chosen by \avg{} for $\mathI$, $\mathI'$, and $\mathI''$, respectively, and suppose that $\mathbf{x}=\mathbf{x}'$.
      By Proposition~\ref{prop:nonorm}, 
     for each $j\in [m]$, we have 
     $x_j = \frac{1}{|N|}\sum_{i\in N} s_{i,j}=\frac{1}{|N'|}\sum_{i\in N'} s_{i,j}' = x'_j$. Consequently, 
     \begin{align*}
         x_j''=\frac{1}{|N\cup N'|} \sum_{i\in N\cup N'} s_{i,j}''=\frac{|N|}{|N\cup N'|}\cdot \frac{1}{|N|} \sum_{i\in N} s_{i,j} + \frac{|N'|}{|N\cup N'|}\cdot \frac{1}{|N'|} \sum_{i\in N'} s_{i,j}'=x_j,
     \end{align*}
     which shows that reinforcement is satisfied.\medskip

     \emph{\maxrule{}}: 
     Let $\mathbf{x}$, $\mathbf{x}'$, and $\mathbf{x}''$ be the outcomes chosen by \maxrule{} for $\mathI$, $\mathI'$, and $\mathI''$, respectively, and suppose that $\mathbf{x}=\mathbf{x}'$.
     Let $\alpha=\frac{\sum_{j\in [m]}\max_{i\in N} s_{i, j}}{\sum_{j\in [m]}\max_{i\in N'} s'_{i, j}}$.
     Then, $\max_{i\in N} s_{i,j}=\alpha \cdot \max_{i\in N'} s_{i,j}'$ for all $j\in [m]$. 
     We assume without loss of generality that $\alpha\geq 1$, as otherwise we can exchange the roles of $\mathbf{x}$ and $\mathbf{x'}$. Consequently, $\max_{i\in N\cup N'} s_{i,j}''=\max_{i\in N} s_{i,j}$ for all $j\in [m]$, which shows that $\mathbf{x}''=\mathbf{x}$.\medskip

     \emph{\minrule{}}: 
     Let $\mathbf{x}$, $\mathbf{x}'$, and $\mathbf{x}''$ be the outcomes chosen by \minrule{} for $\mathI$, $\mathI'$, and $\mathI''$, respectively, and suppose that $\mathbf{x}=\mathbf{x}'$.
     First, if $\min_{i\in N} s_{i,j}=0$ for all $j\in [m]$, 
     then $\min_{i\in N\cup N'} s_{i,j}''=0$ for all $j\in [m]$ and $x_j''=\frac{1}{m}=x_j=x_j'$ for all $j\in [m]$. 
     A similar argument applies if 
     $\min_{i\in N'} s_{i,j}'=0$ for all $j\in [m]$.

     Next, assume that $\min_{i\in N} s_{i,\ell}>0$ and $\min_{i\in N'} s_{i,\ell'}'>0$ for some $\ell,\ell'\in [m]$. For this case, the analysis is similar to that for \maxrule{}. Namely, 
     let $\alpha=\frac{\sum_{j\in [m]}\min_{i\in N} s_{i, j}}{\sum_{j\in [m]}\min_{i\in N'} s'_{i, j}}$.
     Then, $\mathbf{x}=\mathbf{x}'$ implies that $\min_{i\in N} s_{i,j}=\alpha \cdot \min_{i\in N'} s_{i,j}'$ for all $j\in[m]$. We assume without loss of generality that $\alpha\leq 1$, as otherwise we can exchange the roles of $\mathbf{x}$ and $\mathbf{x}'$ in our argument.
     Consequently, $\min_{i\in N\cup N'} s_{i,j}''=\min_{i\in N} s_{i,j}$ for all $j\in [m]$, which shows that $\mathbf{x}''=\mathbf{x}$.\medskip

     \emph{\geo{}}: 
     Let $\mathbf{x}$, $\mathbf{x}'$, and $\mathbf{x}''$ be the outcomes chosen by \geo{} for $\mathI$, $\mathI'$, and $\mathI''$, respectively, and suppose that $\mathbf{x}=\mathbf{x}'$.
     If $\min_{i\in N} s_{i,j}=0$ for all $j\in [m]$ or $\min_{i\in N'} s_{i,j}'=0$ for all $j\in [m]$, then
     $\min_{i\in N\cup N'} s_{i,j}''=0$ for all $j\in [m]$ and hence
     $x''_j=\frac{1}{m}=x_j=x_j'$ for all $j\in [m]$. 

     On the other hand, suppose that there are indices $\ell, \ell'\in [m]$ such that $\min_{i\in N} s_{i,\ell}>0$ and $\min_{i\in N} s'_{i,\ell'}>0$. Define $\xi(\mathI)=\sum_{j\in [m]} (\prod_{i\in N} s_{i,j})^{1/|N|}$ and $\xi(\mathI')=\sum_{j\in [m]} (\prod_{i\in N'} s_{i,j}')^{1/|N'|}$. For each $j\in [m]$, we have 
     \begin{align*}
         \left(\prod_{i\in N\cup N'} s_{i,j}''\right)^{\frac{1}{|N|+|N'|}}&=\left(\prod_{i\in N} s_{i,j}\cdot \prod_{i\in N'}s_{i,j}' \right)^{\frac{1}{|N|+|N'|}} \\
         &= \left((x_j\cdot \xi(\mathI))^{|N|}\cdot (x_j'\cdot \xi(\mathI'))^{|N'|}\right)^{\frac{1}{|N|+|N'|}}\\
         &=x_j\cdot \left(\xi(\mathI)^{|N|}\cdot \xi(\mathI')^{|N'|}\right)
         ^{\frac{1}{|N|+|N'|}}.
     \end{align*}
     Since this holds for all $j\in [m]$, \geo{} satisfies reinforcement, because
     \begin{align*}
         x_j''=\frac{x_j\cdot \left(\xi(\mathI)^{|N|}\cdot \xi(\mathI')^{|N'|}\right)^{\frac{1}{|N|+|N'|}}
         }{\sum_{k\in [m]} \left(x_{k}\cdot\left(\xi(\mathI)^{|N|}\cdot \xi(\mathI')^{|N'|}\right)
         ^{\frac{1}{|N|+|N'|}}\right)}=x_j.
     \end{align*}

     \emph{\egal{}}: Let $\mathbf{x}$, $\mathbf{x}'$, and $\mathbf{x}''$ denote the outcomes chosen by \egal{} for $\mathI$, $\mathI'$, and $\mathI''$, respectively. 
     Assume for contradiction that $\mathbf{x}=\mathbf{x'}$ but $\mathbf{x}''\neq \mathbf{x}$. 
     For a set of agents $T$, we define $v_T(\mathbf{y})$ as the vector that lists the disutilities of all agents in~$T$ for the outcome $\mathbf{y}$ in non-increasing order. 
     Since $\mathbf{x}=\mathbf{x'}$ is chosen for $\mathI$ and $\mathI'$, we get \emph{(i)} $v_N(\mathbf{x})<_\mathit{lex} v_N(\mathbf{x''})$ or 
     $v_N(\mathbf{x})= v_N(\mathbf{x''})$, and 
     \emph{(ii)} $v_{N'}(\mathbf{x})<_\mathit{lex} v_{N'}(\mathbf{x''})$ or $v_{N'}(\mathbf{x})= v_{N'}(\mathbf{x''})$.
     On the other hand, $v_{N\cup N'}(\mathbf{x})>_\mathit{lex} v_{N\cup N'}(\mathbf{x''})$ or $v_{N\cup N'}(\mathbf{x})= v_{N\cup N'}(\mathbf{x''})$, as $\mathbf{x}''$ is chosen for $\mathI''$.
     It is easy to see that these conditions can only be satisfied if $v_N(\mathbf{x})= v_N(\mathbf{x''})$ and $v_{N\cup N'}(\mathbf{x})= v_{N\cup N'}(\mathbf{x''})$.
     However, consistent tie-breaking requires that if \egal{} returns $\mathbf{x}$ for $\mathI$, then it does not return $\mathbf{x}''$ for $\mathI''$, which contradicts our assumptions. \medskip

     \emph{\med{}}: Finally, we show that \med{} satisfies reinforcement if $m=2$ but fails this property when $m\geq 3$. 
     First, assume that $m=2$, and let $\mathbf{x}$, $\mathbf{x}'$, and $\mathbf{x}''$ denote the outcomes chosen by \med{} on $\mathI$, $\mathI'$, and $\mathI''$, respectively. 
     As usual, we assume that $\mathbf{x}=\mathbf{x}'$. 
     Since $x''_2=1- x''_1$ and $x_2=1-x_1$, it suffices to show that $x''_1=x_1$.
     By Proposition~\ref{prop:nonorm}, we have 
     $\text{med}_{i\in N} s_{i,1}=x_1=x_1'=\text{med}_{i\in N'} s_{i,1}'$. Hence, 
     our claim for $m=2$ is implied by the following lemma, whose proof is relegated to \Cref{app:omitted-proofs}.
     \begin{restatable}{lemma}{medians}
         Consider two multisets of real numbers $A$ and $B$ such that $\text{\em med} (A)=\text{\em med}(B)=z$. Then $\text{\em med}(A\cup B) = z$ as well.
     \end{restatable}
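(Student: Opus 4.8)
The plan is to translate so that $z=0$ (the median is unchanged by adding a constant to every element) and to argue with order statistics. Writing $s_{(1)}\le\dots\le s_{(t)}$ for the sorted version of a size-$t$ multiset $S$, recall that the averaging median equals $\tfrac12\bigl(s_{(\lceil t/2\rceil)}+s_{(\lfloor t/2\rfloor+1)}\bigr)$, i.e.\ the midpoint of the \emph{lower median} $\ell(S):=s_{(\lceil t/2\rceil)}$ and the \emph{upper median} $u(S):=s_{(\lfloor t/2\rfloor+1)}$. Hence $\operatorname{med}(S)=0$ is equivalent to $\ell(S)=-u(S)\le 0\le u(S)$. Setting $C=A\cup B$ and $k=|A|+|B|$, the whole task reduces to proving the single symmetry $\ell(C)=-u(C)$.

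First I would record the easy containment $\ell(C)\le 0\le u(C)$. Since $\operatorname{med}(A)=0$, at least $\lceil|A|/2\rceil$ elements of $A$ are $\le 0$ and at least $\lceil|A|/2\rceil$ are $\ge 0$, and likewise for $B$; summing and using $\lceil x\rceil+\lceil y\rceil\ge\lceil x+y\rceil$ yields at least $\lceil k/2\rceil$ elements of $C$ on each side of $0$, which is exactly $\ell(C)\le 0\le u(C)$. This already disposes of odd $k$: there $\lceil k/2\rceil>\lfloor k/2\rfloor$ forces $\ell(C)=u(C)=\operatorname{med}(C)$, and the containment pins this common value to $0$.

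The substance is the even case $k=2r$, where containment must be upgraded to $s_{(r)}(C)=-s_{(r+1)}(C)$. Here I would prove the sharper identities $s_{(r)}(C)=\max\{\ell(A),\ell(B)\}$ and $s_{(r+1)}(C)=\min\{u(A),u(B)\}$. Assuming without loss of generality $u(A)\le u(B)$ and writing $p=u(A)\ge 0$, I would use that two consecutive order statistics bracket no element, so neither $A$ nor $B$ has any element strictly inside $(\ell(A),u(A))$ resp.\ $(\ell(B),u(B))$; consequently $C$ has no element in the open interval $(-p,p)$, whereas $\pm p$ are attained (as $\ell(A)$ and $u(A)$). When $p>0$ a count then gives exactly $r$ elements of $C$ that are $\le -p$ and exactly $r$ that are $\ge p$, forcing $s_{(r)}(C)=-p$ and $s_{(r+1)}(C)=p$; since $\operatorname{med}(A)=\operatorname{med}(B)=0$ gives $\ell(A)=-u(A)$ and $\ell(B)=-u(B)$, the interval $[\max\{\ell(A),\ell(B)\},\min\{u(A),u(B)\}]$ is symmetric about $0$, so $\operatorname{med}(C)=0$.

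The main obstacle is precisely this even-case counting: making the bounds come out to \emph{exactly} $r$ on each side is sensitive to the parities of $|A|$ and $|B|$ and to ties at $\pm p$ (in particular when some upper median equals $0$, so that $0$ lies in a set with multiplicity). I would control this by tracking $\lfloor\cdot/2\rfloor$ against $\lceil\cdot/2\rceil$: when $p=0$ one gets the stronger count $\lceil|A|/2\rceil+\lceil|B|/2\rceil\ge r+1$ of elements $\le 0$ (and symmetrically $\ge 0$), which directly pins $s_{(r)}(C)=s_{(r+1)}(C)=0$, while when $p>0$ both sets are even and the counts are tight at $r$ as above. Each of these is elementary once the no-gap observation is in place, so the only genuinely delicate point is the bookkeeping that separates the $p=0$ and $p>0$ regimes.
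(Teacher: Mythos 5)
Your argument is correct and reaches the same conclusion by an equivalent elementary route, but it is packaged differently from the paper's proof. The paper decomposes each multiset explicitly as a lower half, a middle element (or middle pair), and an upper half of equal sizes, and then runs four parity cases (odd/odd, odd/even, even/odd, even/even), in each case directly exhibiting the middle element(s) of $A\cup B$. You instead normalize to $z=0$, reduce everything to the single symmetry $\ell(C)=-u(C)$, dispose of odd $|A|+|B|$ by the counting inequality $\lceil|A|/2\rceil+\lceil|B|/2\rceil\ge\lceil(|A|+|B|)/2\rceil$, and handle the even case via the observation that $C$ has no element in the open gap $(-u(A),u(A))$, which pins the two middle order statistics at $\mp u(A)$. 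What your version buys is a cleaner statement of \emph{why} the result holds (the median interval of $C$ is the intersection of two intervals symmetric about $0$); what the paper's version buys is that each case is verified by inspection with no counting inequalities. One small repair: in your $p=0$ regime the literal bound $\lceil|A|/2\rceil+\lceil|B|/2\rceil\ge r+1$ fails when $|A|$ and $|B|$ are both even (the left side equals $r$); the correct source of the extra $+1$ is that $\ell(A)=u(A)=0$ with $|A|$ even forces \emph{both} middle order statistics of $A$ to be $0$, so $A$ alone contributes at least $|A|/2+1$ elements on each side of $0$. With that adjustment the count $\ge r+1$ on each side goes through in both parity subcases and the argument is complete.
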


    Now, for the case $m\geq 3$, consider the following instances $\mathI^{12}$ (defined for an odd number of agents $n\geq 5$) and $\mathI^{13}$ (defined for a single agent $n+1$). 
    
\begin{center}
    \begin{tabular}{ c | c c c}
          $\mathcal{I}^{12}$ & $s_{i,1}$ & $s_{i,2}$ & $s_{i,3}$\\ 
         \hline \hline
         $i\in \{1,\dots, \frac{n-1}{2}\}$ & $\frac{2}{3}$ & $\frac{1}{3}$ & $0$ \\ 
         $i\in \{\frac{n+1}{2},\dots, n-1\}$ & $0$ & $\frac{1}{3}$ & $\frac{2}{3}$ \\
         $n$ & $\frac{1}{2}$ & $0$  & $\frac{1}{2}$ 
        \end{tabular}   
    \qquad\qquad
    \begin{tabular}{ c | c c c}
          $\mathcal{I}^{13}$ & $s_{i,1}$ & $s_{i,2}$ & $s_{i,3}$\\ 
         \hline \hline
         $n+1$ & $\frac{3}{8}$ & $\frac{2}{8}$ & $\frac{3}{8}$
        \end{tabular}   
    \end{center}
All candidates other than $c_1$, $c_2$, and $c_3$ receive score~$0$ from all agents and can be ignored. It holds that $\text{med}_{i\in N} s_{i,1}^{12}=\frac{1}{2}$, $\text{med}_{i\in N} s_{i,2}^{12}=\frac{1}{3}$, and $\text{med}_{i\in N} s_{i,3}^{12}=\frac{1}{2}$, so \med{} returns the outcome $\mathbf{x}=(\frac{3}{8},\frac{2}{8}, \frac{3}{8})$ for $\mathI^{12}$. 
Moreover, $\mathI^{13}$ consists of a single agent with preference $(\frac{3}{8},\frac{2}{8}, \frac{3}{8})$, so \med{} returns $\mathbf x$ on $\mathI^{13}$ as well. 
However, for the combined instance, the medians are $\text{med}_{i\in N\cup \{n+1\}} s_{i,1}=\frac{1}{2}\cdot(\frac{1}{2}+\frac{3}{8})=\frac{7}{16}$, $\text{med}_{i\in N\cup \{n+1\}} s_{i,2}=\frac{1}{3}$, and $\text{med}_{i\in N\cup \{n+1\}} s_{i,3}=\frac{1}{2}\cdot(\frac{1}{2}+\frac{3}{8})=\frac{7}{16}$.
This means that for the new outcome $\mathbf{x}'$ we have $x_1'=\frac{21}{58}\neq \frac{3}{8}$, and reinforcement is violated. 
\end{proof}

\begin{remark}
    We do not specify the boundary on $n$ for which \med{} satisfies reinforcement. The reason for this is that reinforcement is a variable-electorate property that is usually studied under the assumption that there is an infinite set of possible agents. The same reasoning applies to participation (Section~\ref{sec:incent}). 
\end{remark}

\section{Incentive Properties}\label{sec:incent}

The last group of evaluation criteria that we consider for our rules is their incentive properties. 
For this, \cite{freeman2021truthfulbudget} have shown that both \im{} and \util{} satisfy strategyproofness and participation. 
By contrast, we show that all other rules we consider are not strategyproof, but on the positive side, all except \med{} satisfy participation.

\begin{theorem}
\label{thm:strategyproofness}
    The following claims hold.
    \begin{enumerate}[label=(\arabic*),topsep=4pt,itemsep=0pt]
        \item \im{} and \util{} satisfy strategyproofness and participation.
        \item \avg{}, \maxrule{}, \minrule{}, \geo{}, and \egal{} fail strategyproofness for all $n \geq 2$ and $m \geq 2$ but satisfy participation. 
        \item \med{} satisfies strategyproofness when $m=2$ and $n$ is odd, but fails this property when $m=2$ and $n$ is even, or when $m\geq 3$. Moreover, \med{} satisfies participation if $m=2$ but fails it for all $m\geq 3$.
    \end{enumerate}
\end{theorem}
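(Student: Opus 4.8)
The plan is to treat the three claims in turn, with the positive participation results as the technical heart.

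For Claim~1 I would simply invoke \citet{freeman2021truthfulbudget}, who established that \im{} and \util{} are strategyproof and satisfy participation. For the strategyproofness failures in Claims~2 and~3, the unifying idea is that an agent with a \emph{moderate} true report can misreport an \emph{extreme} one to drag the aggregate toward her ideal point. Concretely, I would fix $m=2$, let $n-1$ agents report a common vector, and give the remaining agent $i$ a central ideal such as $(\tfrac12,\tfrac12)$; in each case a direct computation shows that reporting an endpoint of the simplex strictly lowers $i$'s disutility. Fixing $m=2$ keeps the arithmetic transparent because the disutility collapses to $2\,|s_{i,1}-x_1|$. I would present one such instance per rule (choosing the others' common report so as to sidestep the all-zero tie-breaking clause for \minrule{} and \geo{}), verify the manipulation, and note that each example extends to every $n\ge 2$ and, by padding with candidates scored $0$ by everyone, to every $m\ge 3$. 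Since \egal{} coincides with \avg{} for $n=2$ (Proposition~\ref{prop:welfare-egal_is_sum}) and \med{} coincides with \avg{} for $n=2$, a single instance often serves several rules at once. For \med{} I would additionally give dedicated instances: for even $n$ one in which an agent can pull the average of the two central order statistics toward her peak by misreporting a more extreme value, and for $m\ge 3$ one where the normalization step opens a profitable deviation.

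The positive participation results form the core. For \avg{} the argument is clean: writing $\mathbf{x}'=F(\mathcal I')$ for the outcome without agent~$i$, the outcome with $i$ equals $\tfrac{(n-1)\mathbf{x}'+\mathbf{s}_i}{n}$, a point on the segment $[\mathbf{x}',\mathbf{s}_i]$ closer to $\mathbf{s}_i$; since the $\ell_1$ distance to $\mathbf{s}_i$ is nonincreasing along this segment, participation follows. For \med{} with $m=2$ I would use Proposition~\ref{prop:nonorm}(b) to reduce to a one-dimensional claim and invoke the elementary fact that inserting a value $v$ into a multiset moves its median into the interval between the old median and $v$, so $|s_{i,1}-x_1|$ cannot increase. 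The hard cases are \maxrule{}, \minrule{}, \geo{}, where normalization decouples the coordinatewise movement from the effect on disutility, and \egal{}, where leximin tie-breaking must be tracked. For the normalized coordinate-wise rules I would split the coordinates according to whether $i$'s report exceeds (respectively falls below, for \minrule{}) the aggregate of the others, control the common rescaling induced by the change in the normalization constant, and show coordinate by coordinate that the net movement does not increase $i$'s disutility. For \egal{} I would argue by contradiction from leximin-optimality on both electorates: comparing $v_N(\mathbf{x})$ with $v_N(\mathbf{x}')$ and $v_{N\setminus\{i\}}(\mathbf{x}')$ with $v_{N\setminus\{i\}}(\mathbf{x})$, peeling off the worst-off agents from the top, one forces successive disutility levels to coincide until $d_i(\mathbf{x})>d_i(\mathbf{x}')$ becomes untenable.

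For the rest of Claim~3, \med{} with $m=2$ and odd $n$ is strategyproof because the outcome is exactly the coordinatewise median, and under the single-peaked disutility $2\,|s_{i,1}-x_1|$ the median of an odd electorate is the classical strategyproof rule---a voter on one side of the median can never pull it toward her peak by misreporting. Participation of \med{} fails for $m\ge 3$ by a direct instance in which removing an agent benefits her once normalization is accounted for. I expect the main obstacle to be the participation proofs for \maxrule{}, \minrule{}, \geo{}, and \egal{}: for the first three, the interplay between the change in the normalization constant and the per-coordinate movement makes the sign of the disutility change genuinely delicate; for \egal{}, the leximin peeling must be executed carefully to avoid the pitfall that ``better for $i$ and leximin-better for the rest'' does not by itself yield Pareto domination over $N$.
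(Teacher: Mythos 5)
Your plan follows essentially the same route as the paper: Claim~1 by citation to Freeman et al.; strategyproofness failures via small $m=2$ counterexamples with a deviating agent pulling the aggregate toward her ideal (padded with zero-score candidates for larger $m$); participation for the coordinate-wise rules by splitting coordinates according to whether the outcome moved up or down after the agent joins and doing an aggregate $\ell_1$ accounting (this is exactly the paper's Lemma~\ref{lem:part}, which isolates the one-sided sufficient condition $s_{i,j}\ge x_j'$ on $X^+$ or $s_{i,j}\le x_j'$ on $X^-$); a leximin contradiction for \egal{}; and the Moulin-style argument for \med{} with $m=2$ and odd $n$. Your direct convexity argument for \avg{}'s participation ($F(\mathcal I)$ lies on the segment between $F(\mathcal I')$ and $\mathbf{s}_i$) is a nice simplification of the paper's treatment of that case.

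One concrete step in your sketch would fail as stated: the strategyproofness template in which the manipulator has a \emph{central} ideal such as $(\tfrac12,\tfrac12)$ and misreports an \emph{endpoint} of the simplex does not work for \minrule{} and \geo{}. If the other agents report, say, $(\tfrac15,\tfrac45)$, then reporting $(1,0)$ zeroes out the minimum (resp.\ the geometric mean) of the second coordinate and forces the normalized outcome to $(1,0)$, which is strictly \emph{worse} for an agent whose ideal is $(\tfrac12,\tfrac12)$; the same happens with $(0,1)$. A manipulation does exist, but it is a mirroring report like $(\tfrac45,\tfrac15)$ rather than an endpoint. The paper sidesteps this by giving the manipulator the non-central ideal $(\tfrac45,\tfrac15)$ against $n-1$ agents at $(\tfrac15,\tfrac45)$, for which the single deviation to $(1,0)$ is simultaneously profitable under \avg{}, \maxrule{}, \minrule{}, \geo{}, and \egal{}. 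Since you state you would verify each manipulation by direct computation, this is an easily repaired miscalibration of the example rather than a flaw in the method, but as written the recipe ``extreme endpoint misreport'' is not uniformly valid across the five rules.
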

\begin{proof}
    Claim 1 follows directly from the work of \citet[Thms.~2, 8, and 12]{freeman2021truthfulbudget}, so we only prove Claims 2 and~3 here.\medskip

    \noindent\textbf{Claim 2}: We first consider participation, and then show that all considered rules fail strategyproofness by giving a common counterexample.\medskip

    \emph{Participation}: 
     We consider two instances $\mathcal{I}$ and $\mathcal{I}'$ such that $\mathcal{I}'$ is derived from $\mathcal{I}$ by adding an agent~$i$ with preference $\mathbf{s}_i$. Given an aggregation rule $F$, let $\mathbf{x}$ and $\mathbf{x}'$ denote the outcomes chosen by $F$ on $\mathI$ and $\mathI'$, respectively.
    To show that \avg{}, \minrule{}, \maxrule{}, and \geo{} satisfy participation, we first prove an auxiliary lemma; the proof for \egal{} will not rely on this lemma. 
    \begin{lemma}\label{lem:part}
    Define $X^+=\{j\in [m]\colon x_j'>x_j\}$ and $X^-=\{j\in [m]\colon x_j'<x_j\}$. 
    For an agent $i$ such that (a) $s_{i,j}\geq x_j'$ for all $j\in X^+$ or (b) $s_{i,j}\leq x_j'$ for all $j\in X^-$, it holds that~$i$ weakly prefers $\mathbf{x}'$ to $\mathbf{x}$.   
    \end{lemma}
    \begin{innerproof}
    We will prove the lemma only for case~(a), as case~(b) is symmetric. In case~(a), for all $j\in X^+$ it holds that 
    \begin{align*}
        |s_{i,j}-x_{j}'|-|s_{i,j}-x_j|=(s_{i,j}-x_{j}')-(s_{i,j}-x_j)=x_j-x_j'.
    \end{align*}
    On the other hand, we observe that
    \begin{align*}
        \sum_{j\in [m]\setminus X^+} \left(|s_{i,j}-x_j'|-|s_{i,j}-x_j|\right)&\leq \sum_{j\in [m]\setminus X^+} |x_j-x_j'|=\sum_{j\in [m]\setminus X^+}(x_j-x_j'),
    \end{align*}
    where the first transition follows from the triangle inequality and the second transition uses the definition of $X^+$. 
    Based on these observations, we conclude
    that $i$ weakly prefers $\mathbf{x}'$ to $\mathbf{x}$,
     since 
    \begin{align*}
        d_i(\mathbf{x}')-d_i(\mathbf{x})&=\sum_{j\in [m]} |s_{i,j}-x_{j}'|-\sum_{j\in [m]} |s_{i,j}-x_{j}|\\
        &= \sum_{j\in X^+} \left(|s_{i,j}-x_{j}'|- |s_{i,j}-x_{j}|\right)+\sum_{j\in [m]\setminus X^+} \left(|s_{i,j}-x_{j}'|- |s_{i,j}-x_{j}|\right)\\
        &\leq \sum_{j\in X^+} (x_j-x_j') + \sum_{j\in [m]\setminus X^+} (x_j-x'_j)\\
        &=\sum_{j\in [m]}x_j-\sum_{j\in [m]}x'_j\\
        &=0.
    \end{align*}
This completes the proof of the lemma.
\end{innerproof}

    Thus, to show that our rules satisfy participation, it  suffices to prove that they satisfy $s_{i,j}\geq x_j'$ for all $j\in X^+$ or $s_{i,j}\leq x_j'$ for all $j\in X^-$. For this, we consider each rule individually.\medskip 
    
    \emph{\avg}: For \avg{}, for each $j\in X^+$ we have
    \begin{align*}
        x_j'=\frac{1}{n+1}\sum_{\ell\in N\cup \{i\}} s_{\ell,j}= \frac{s_{i, j}}{n+1}+
        \frac{n}{n+1}\cdot \frac{1}{n}\sum_{\ell\in N} s_{\ell, j} = 
        \frac{1}{n+1}(s_{i,j}+n x_j) \le
        \frac{1}{n+1}(s_{i,j}+n x'_j).
    \end{align*}
    Multiplying both sides by $n+1$ and rearranging the terms, we conclude that $s_{i, j}\ge x'_j$ for all $j\in X^+$, so condition~(a) of Lemma~\ref{lem:part} is satisfied.  
    
  \medskip

    \emph{\maxrule{}}: 
    Note that $\max_{\ell\in N\cup\{i\}} s_{\ell,j} = \max\{s_{i,j}, \max_{\ell\in N} s_{\ell,j}\}$ for all $j\in [m]$. 
    We claim that $\max_{\ell\in N\cup\{i\}} s_{\ell,j} = s_{i,j}$
    for all $j\in X^+$, because otherwise $\max_{\ell\in N\cup \{i\}} s_{\ell,j}=\max_{\ell\in N} s_{\ell,j}$ and hence 
        \begin{align*}
        x_j'=\frac{\max_{\ell\in N\cup \{i\}} s_{\ell,j}}{\sum_{k\in [m]} \max_{\ell\in N\cup\{i\}} s_{\ell, k}}\leq \frac{\max_{\ell\in N} s_{\ell,j}}{\sum_{k\in [m]} \max_{\ell\in N} s_{\ell,k}} =x_j.
        \end{align*}
    Further, since $\sum_{k\in [m]} \max_{\ell\in N\cup\{i\}} s_{\ell,k}\geq \sum_{k\in [m]} s_{i,k} = 1$, we have $s_{i,j}\geq \frac{s_{i,j}}{\sum_{k\in [m]} \max_{\ell\in N\cup\{i\}} s_{\ell,k}}=x'_j$ for all $j\in X^+$. By Lemma~\ref{lem:part}, this proves that \maxrule{} satisfies participation.\medskip

    \emph{\minrule{}}: For \minrule{}, we first consider the case where $\min_{\ell\in N\cup \{i\}} s_{\ell,j}=0$ for all $j\in [m]$. 
    Define 
    $$
    Z=\{j\in [m]\colon \min_{\ell\in N} s_{\ell,j}>0\},\qquad 
    Z'=\{j\in [m]\colon s_{i,j}>0\};
    $$
    note that $Z'\neq\emptyset$ because $\sum_{j\in [m]}s_{i, j}=1$.
    Now, if $Z=\emptyset$, then we have $\min_{\ell\in N} s_{\ell,j}=0$ for all $j\in [m]$, and $\mathbf{x}=\mathbf{x}'=(\frac{1}{m},\dots, \frac{1}{m})$, so participation is satisfied in this case. On the other hand, suppose that $Z\neq\emptyset$, and
    observe that $Z\cap Z'=\emptyset$ because for every $j\in Z\cap Z'$ we would have
    $\min_{\ell\in N\cup \{i\}} s_{\ell,j}>0$. Further, $x_j>0$ if and only if $j\in Z$, so $\sum_{j\in Z} x_j=1$; similarly, $\sum_{j\in Z'} s_{i, j} = 1$.
    Thus, we can write
    $$
    d_i({\mathbf x}) = \sum_{j\in Z} |s_{i,j}-x_j|+ \sum_{j\in Z'} |s_{i,j}-x_j|+
    \sum_{j\in [m]\setminus(Z\cup Z')} |s_{i,j}-x_j| 
    \ge \sum_{j\in Z} x_j +
    \sum_{j\in Z'} s_{i, j} = 2. 
    $$
    As we have $d_i({\mathbf y})\le 2$ for all possible outcomes $\mathbf y$, it follows that
    $d_i({\mathbf x}')\le d_i({\mathbf x})$, so participation is satisfied again.

    Next, consider the case where there is a candidate $c_{j^*}$ such that $\min_{\ell\in N\cup \{i\}} s_{\ell,j^*}>0$.
    We claim that $s_{i,j}\leq x_j'$ for all $j\in X^-$. The argument is similar to that for \maxrule{}.
    Specifically, we have $\min_{\ell\in N\cup\{i\}} s_{\ell,j} = \min\{s_{i,j}, \min_{\ell\in N} s_{\ell,j}\}$ for all $j\in [m]$. 
    Observe that $\min_{\ell\in N\cup\{i\}} s_{\ell,j} = s_{i,j}$
    for all $j\in X^-$, because otherwise $\min_{\ell\in N\cup \{i\}} s_{\ell,j}=\min_{\ell\in N} s_{\ell,j}$ and hence 
        \begin{align*}
        x_j'=\frac{\min_{\ell\in N\cup \{i\}} s_{\ell,j}}{\sum_{k\in [m]} \min_{\ell\in N\cup\{i\}} s_{\ell, k}}\geq \frac{\min_{\ell\in N} s_{\ell,j}}{\sum_{k\in [m]} \min_{\ell\in N} s_{\ell,k}} =x_j.
        \end{align*}
    Further, since $\sum_{k\in [m]} \min_{\ell\in N\cup\{i\}} s_{\ell,k}\leq \sum_{k\in [m]} s_{i,k} = 1$, we have $s_{i,j}\leq \frac{s_{i,j}}{\sum_{k\in [m]} \min_{\ell\in N\cup\{i\}} s_{\ell,k}}=x'_j$ for all $j\in X^-$. By Lemma~\ref{lem:part}, this proves that \minrule{} satisfies participation.

\medskip

    \emph{\geo{}}: To show that \geo{} satisfies participation if $\prod_{\ell\in N\cup \{i\}} s_{\ell,j}=0$ for all $j\in [m]$, we use the same argument as for \minrule{}.
    Hence, from now on we assume that there is a candidate $c_{j^*}$ such that $\sqrt[n+1]{\prod_{\ell\in N\cup \{i\}} s_{\ell,j^*}}>0$. Let $\xi(\mathI)=\sum_{j\in [m]} \sqrt[n]{\prod_{\ell\in N} s_{\ell,j}}$ and $\xi(\mathI')=\sum_{j\in [m]} \sqrt[n+1]{\prod_{\ell\in N\cup\{i\}} s_{\ell,j}}$, and note that $\xi(\mathI')>0$. By the definition of $x_j$, for each $j\in[m]$ we have 
    \begin{align*}
        \prod_{\ell\in N} s_{\ell,j}=(x_j\cdot \xi(\mathI))^n.
    \end{align*}
    Consequently, for all $j\in [m]$ it holds that
    \begin{align*}
        x_j'=\frac{(s_{i,j})^{1/(n+1)}\cdot (x_j\cdot \xi(\mathI))^{n/(n+1)}}{\xi(\mathI')}=(s_{i,j})^{1/(n+1)}\cdot (x_j)^{n/(n+1)}\cdot \frac{\xi(\mathI)^{n/(n+1)}}{\xi(\mathI')}.
    \end{align*}
    We consider two cases.
    
    \medskip
    \underline{Case 1}: $\xi(\mathI)^{n/(n+1)}\geq \xi(\mathI')$. In this case, for each $j\in X^-$ we have 
    \begin{align*}
        x_j'\geq (s_{i,j})^{1/(n+1)}\cdot (x_j)^{n/(n+1)}\ge 
        (s_{i,j})^{1/(n+1)}\cdot (x'_j)^{n/(n+1)}. 
    \end{align*}
    By raising both sides to the power of $n+1$ and rearranging the terms, we obtain $x_j'\ge s_{i, j}$.
    
    \medskip
    \underline{Case 2}:
    $\xi(\mathI)^{n/(n+1)}< \xi(\mathI')$. In this case,  for all $j\in X^+$ we have 
        \begin{align*}
    x_j'\leq  (s_{i,j})^{1/(n+1)}\cdot (x_j)^{n/(n+1)}\le(s_{i, j})^{1/(n+1)}\cdot (x_j')^{n/(n+1)}.
        \end{align*}
        By raising both sides to the power of $n+1$ and rearranging the terms, we obtain $x_j'\le s_{i, j}$.

    \medskip
    We can thus conclude that \geo{} satisfies participation based on Lemma~\ref{lem:part}.\medskip

    \emph{\egal{}}: To show that \egal{} satisfies participation, we will provide a direct argument instead of relying on Lemma~\ref{lem:part}. 
    Assume for the sake of contradiction that $d_i(\mathbf{x})<d_i(\mathbf{x'})$. 
    For a set of agents~$T$, let $v_T(\mathbf{y})$ be a vector that contains the disutilities of all agents in $T$ for a given score vector $\mathbf{y}$ in non-increasing order.
    Since \egal{} chooses $\mathbf{x}$ for $\mathcal{I}$, it holds that either $v_N(\mathbf{x})=v_N(\mathbf{x'})$ or $v_N(\mathbf{x'})>_\mathit{lex} v_N(\mathbf{x})$. In either case, 
    $d_i(\mathbf{x})<d_i(\mathbf{x'})$ 
    would imply $v_{N\cup \{i\}}(\mathbf{x'})>_\mathit{lex} v_{N\cup \{i\}}(\mathbf{x})$, a contradiction with \egal{} choosing $\mathbf{x}'$ for $\mathcal{I}'$. Hence, it must be the case that $d_i(\mathbf{x})\ge d_i(\mathbf{x'})$, which means that \egal{} satisfies participation.\medskip
    
    \emph{Strategyproofness}: To prove the claim regarding strategyproofness, we consider the following instances $\mathcal{I}^{14}$ and $\mathcal{I}^{15}$. All candidates $c_j$ with $j\geq 3$ receive score $0$ from all agents and are ignored for the rest of the proof.  
 \begin{center}
    \begin{tabular}{ c | c c  }
          $\mathcal{I}^{14}$ & $s_{i,1}$ & $s_{i,2}$ \\ 
         \hline \hline
         $1$ & $\frac{4}{5}$ & $\frac{1}{5}$ \\ 
         $i\in \{2,\dots, n\}$ & $\frac{1}{5}$ & $\frac{4}{5}$ 
        \end{tabular}
        \qquad\qquad
    \begin{tabular}{ c | c c c c }
          $\mathcal{I}^{15}$ & $s_{i,1}$ & $s_{i,2}$  \\ 
         \hline \hline
         $1$ & $1$ & $0$ \\ 
         $i\in \{2,\dots, n\}$ & $\frac{1}{5}$ & $\frac{4}{5}$  
        \end{tabular}       
    \end{center}\smallskip 
    Here, \avg{} chooses the vector $\mathbf{x}=(\frac{1}{5}+\frac{3}{5n}, \frac{4}{5}-\frac{3}{5n})$ for $\mathI^{14}$ and $\mathbf{x}'=(\frac{1}{5}+\frac{4}{5n}, \frac{4}{5}-\frac{4}{5n})$ for $\mathI^{15}$, \maxrule{} respectively chooses $\mathbf{x}=(\frac{1}{2},\frac{1}{2})$ and $\mathbf{x}'=(\frac{5}{9},\frac{4}{9})$, \minrule{} chooses $\mathbf{x}=(\frac{1}{2},\frac{1}{2})$ and $\mathbf{x}'=(1,0)$, \geo{} chooses $\mathbf{x}=(\frac{\sqrt[n]{16}}{4+\sqrt[n]{16}}, \frac{4}{4+\sqrt[n]{16}})$ and $\mathbf{x}'=(1, 0)$, and \egal{} chooses $\mathbf{x}=(\frac{1}{2},\frac{1}{2})$ and $\mathbf{x}'=(\frac{3}{5}, \frac{2}{5})$. In each case, it can be checked that agent~$1$ benefits by deviating from $\mathI^{14}$ to $\mathI^{15}$, so strategyproofness is violated.\medskip

    \noindent\textbf{Claim 3}: Finally, we turn to \med{} and again consider participation and strategyproofness separately.\medskip
    
    \emph{Participation}: First, we show that \med{} satisfies participation when $m=2$. Consider two instances $\mathI$ and $\mathI'$ with $m=2$ such that $\mathI'$ is derived from $\mathI$ by adding an agent $i$ with preference $\mathbf{s}_i$. Let $\mathbf{x}$ and $\mathbf{x'}$ denote the outputs of \med{} on $\mathI$ and $\mathI'$, respectively.
    If ${\mathbf x}={\mathbf x}'$, our claim is trivially true, so assume without loss of generality that $x_1'>x_1$. We will show that this implies $s_{i,1}\geq x_1'$. 
    Recall that by Proposition~\ref{prop:nonorm},
    we have 
    $x_1=\text{med}_{\ell\in N} s_{\ell,1}$ and $x'_1=\text{med}_{\ell\in N\cup \{i\}} s_{\ell,1}$.
    Assume for contradiction that $s_{i,1}<x_{1}'$. Since $x_1'>x_1$ and there are at least $\lceil\frac{n}{2}\rceil$ agents $\ell\in N$ with $x_1\geq s_{\ell,1}$, there are at least $\lceil\frac{n}{2}\rceil+1$ agents $\ell$ in $N\cup\{i\}$ with $x_1'>s_{\ell,1}$. This means that $\text{med}_{\ell\in N\cup \{i\}} s_{\ell,1}<x_1'$, 
    a contradiction.
    Thus, we conclude that $s_{i, 1}\ge x'_1>x_1$. Moreover, $x_1+x_2=x'_1+x'_2=1$ implies $x'_2<x_2$. 
    Therefore, we can now use 
    Lemma~\ref{lem:part} with $X^+=\{1\}$ and $X^-=\{2\}$ to conclude that \med{} satisfies participation when $m=2$. 

    To see that \med{} fails participation when $m\geq 3$, we consider the instances $\mathI^{12}$ and $\mathI^{13}$ in the proof of \Cref{thm:consistencyprops} that were used to show that \med{} fails reinforcement. In this example, $\mathI^{13}$ consists of a single agent whose ideal distribution coincides with the outcome of \med{} for $\mathI^{12}$. However, in the combined instance, the outcome changes, which means that participation is violated for this agent.\medskip
    
    \emph{Strategyproofness}: We now focus on strategyproofness. First, we show that if $m=2$ and $n$ is odd, \med{} is strategyproof. To this end, we note that \med{} simply returns the distribution of the agent who assigns the $(\frac{n+1}{2})$-th highest score to $c_1$. Further, for all
    $\ell\in N$, we have
    $s_{\ell, 1}+s_{\ell, 2}=x_1+x_2=1$ and hence $d_\ell(\mathbf{x})=|s_{\ell,1}-x_1|+|s_{\ell,2}-x_2|=
    |s_{\ell,1}-x_1|+|(1-s_{\ell,1})-(1-x_1)|=2|s_{\ell,1}-x_1|$, so strategyproofness follows directly from the well-known result of \citet{Moul80a}. On the other hand, if $m=2$ and $n$ is even, \med{} returns the average between the scores of two agents, which allows agents to benefit by misreporting. For instance, if one agent reports $(\frac{2}{3}, \frac{1}{3})$, $\frac{n}{2}-1$ agents report $(1,0)$, and $\frac{n}{2}$ agents report $(0,1)$, \med{} chooses the outcome $(\frac{1}{3}, \frac{2}{3})$. However, if the first agent reports $(1,0)$, \med{} returns $(\frac{1}{2}, \frac{1}{2})$, which constitutes a beneficial manipulation for this agent. This example can be generalized to $m\geq 3$ by adding candidates that receive a score of $0$ from every agent.
    
    It remains to address the case $m\geq 3$ and odd $n\geq 3$. 
    In this case, consider the instance $\mathI^{16}$ below. One can check that \med{} chooses $\mathbf{x}=(\frac{1}{3}, \frac{1}{3}, \frac{1}{3})$ on $\mathI^{16}$. This means that $d_n(\mathbf{x})=(\frac{2}{5}-\frac{1}{3})+(\frac{3}{5}-\frac{1}{3})+\frac{1}{3}=\frac{2}{3}$. Next, suppose that agent $n$ reports the distribution $(\frac{8}{15}, \frac{7}{15}, 0)$. It can be verified that \med{} now picks the distribution $(\frac{2}{5}, \frac{3}{10}, \frac{3}{10})$, which means that $d_n(\mathbf{x}')=\frac{3}{5}<\frac{2}{3}$. Hence, \med{} is indeed manipulable if $m\geq 3$ and $n\geq 3$ is odd. 
\begin{center}
    \begin{tabular}{ c | c c c c }
          $\mathcal{I}^{16}$ & $s_{i,1}$ & $s_{i,2}$ & $s_{i,3}$ \\ 
         \hline \hline
         $i\in \{1,\dots, \frac{n-1}{2}\}$ & $0$ & $\frac{2}{5}$ & $\frac{3}{5}$ \\
         $i\in \{\frac{n+1}{2},\dots, n-1\}$ & $\frac{3}{5}$ & $0$ & $\frac{2}{5}$ \\
         $n$ & $\frac{2}{5}$ & $\frac{3}{5}$ & $0$ \\
    \end{tabular}
    $\qquad\qquad$
\begin{tabular}{ c | c c c c }
          $\mathcal{I}^{17}$ & $s_{i,1}$ & $s_{i,2}$ & $s_{i,3}$ \\ 
         \hline \hline
         $i\in \{1,\dots, \frac{n-1}{2}\}$ & $0$ & $\frac{2}{5}$ & $\frac{3}{5}$ \\
         $i\in \{\frac{n+1}{2},\dots, n-1\}$ & $\frac{3}{5}$ & $0$ & $\frac{2}{5}$ \\
         $n$ & $\frac{8}{15}$ & $\frac{7}{15}$ & $0$ \\
    \end{tabular}
    \end{center}
\medskip This completes the proof. 
\end{proof}

\section{Characterizations of the Average Rule}
\label{sec:characterization}

Thus far, we have conducted an extensive analysis of specific natural rules with respect to a number of desirable axioms.
In this section, we complement those findings by presenting two characterizations of the \avg{} rule. These results highlight the strong appeal of \avg{} by showing that it is the only rule within large classes of rules that, for example, satisfies score-unanimity and independence. 

To state these results, we will introduce two further axioms. 
First, we say that an aggregation rule~$F$ is \emph{anonymous} if the identities of the agents do not matter, i.e., $F(\mathI)=F(\pi(\mathI))$ for all instances $\mathI$ and permutations $\pi:N\rightarrow N$, where $\mathI'=\pi(\mathI)$ is the instance defined by $\mathbf{s}_{\pi(i)}'=\mathbf{s}_i$ for all $i\in N$. Second, we say that an aggregation rule $F$ is \emph{continuous} if it is a continuous function, i.e., 
for every $n, m\in{\mathbb N}$ and every sequence of $n$-agent, $m$-candidate instances 
$\mathI^1, \mathI^2, \dots$  such that $\lim_{t\rightarrow\infty} \mathI^t$ exists, it holds that
$F(\lim_{t\rightarrow\infty}\mathI^t)=\lim_{t\rightarrow\infty} F(\mathI^t)$.
We remark that both of these conditions are very weak: all of the rules considered in this paper are anonymous, and all except \minrule{} and \geo{} are continuous.\footnote{\minrule{} and \geo{} fail continuity because they return the distribution $(\frac{1}{m},\dots, \frac{1}{m}$) when $\min_{i\in N} s_{i,j}=0$ for all $j\in [m]$. Apart from this corner case, these rules are also continuous.} 

\begin{theorem}\label{lem:indavg}
The following claims hold for each $n\ge 2$.
\begin{enumerate}[label=(\arabic*),topsep=4pt,itemsep=0pt]
    \item \avg{} is the only aggregation rule that satisfies anonymity, score-unanimity, and independence if $m\geq 3$.
    \item \avg{} is the only coordinate-wise aggregation rule that satisfies anonymity, continuity, and score-unanimity if $m\geq 4$.
\end{enumerate}
\end{theorem}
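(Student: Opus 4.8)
The plan is to prove both claims by reducing each rule to its action $f_j$ on a single column and showing that this must be the mean. Since \avg{} satisfies anonymity, score-unanimity, independence, and continuity (as established earlier in the paper), only uniqueness needs to be shown in each case. For Claim~(1), independence means precisely that $F(\mathI)_j$ depends only on the $j$-th column $\mathbf{a}^j=(s_{1,j},\dots,s_{n,j})$, so I write $F(\mathI)_j=g_j(\mathbf{a}^j)$; anonymity then makes each $g_j$ symmetric. Setting $h_j(\mathbf{a}):=g_j(\mathbf{a})-\frac1n\sum_i a_i$, the goal becomes $h_j\equiv 0$. Two facts are immediate: since every row and every output sum to $1$, we have $\sum_j h_j(\mathbf{a}^j)=0$ on every instance, and score-unanimity gives $h_j(\gamma\mathbf{1})=0$ for all $\gamma$.

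The engine of the proof is a single mass-transfer move. Given an instance, I transfer an amount $\delta$ from candidate $k$ to candidate $j$ within one agent's row; for $m\geq 3$ there is a third candidate whose column completes the rows and is left untouched by the transfer, so every row remains a distribution. Because only columns $j$ and $k$ change and the outputs still sum to $1$, the constraint $\sum_\ell h_\ell=0$ collapses to $h_j(\mathbf{a}^j+\delta\mathbf{e}_i)-h_j(\mathbf{a}^j)=-\bigl(h_k(\mathbf{a}^k-\delta\mathbf{e}_i)-h_k(\mathbf{a}^k)\bigr)$. The left side depends only on $\mathbf{a}^j$ and the right only on $\mathbf{a}^k$, and with $m\geq 3$ these two columns can be chosen independently; hence each side is constant in its column argument, so $h_j(\mathbf{a}+\delta\mathbf{e}_i)-h_j(\mathbf{a})=:\phi^j(\delta)$ is independent of the base point $\mathbf{a}$, and (by symmetry of $h_j$) of $i$. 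Telescoping from $\mathbf{0}$ yields $h_j(\mathbf{a})=h_j(\mathbf{0})+\sum_i\phi^j(a_i)$; evaluating at the unanimous column $\gamma\mathbf{1}$ and using $h_j(\gamma\mathbf{1})=0$ and $h_j(\mathbf{0})=0$ forces $n\phi^j(\gamma)=0$, so $\phi^j\equiv 0$ and therefore $h_j\equiv 0$, i.e., $F=\avg{}$.

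For Claim~(2) the same skeleton is used, but now $F$ is only coordinate-wise, so $F(\mathI)_j=f_j(\mathbf{a}^j)/S$ with $S=\sum_k f_k(\mathbf{a}^k)$, and the normalizer $S$ must be controlled. I would first handle all-unanimous instances: writing $\lambda_j(\gamma)=f_j(\gamma\mathbf{1})$, score-unanimity on instances where every column is constant forces $\lambda_j(\gamma)=T\gamma$ for a single universal constant $T>0$ ($T>0$ follows from score-unanimity ruling out the tie-breaking case $S=0$). Rescaling all $f_j$ by $1/T$, I may assume $f_j(\gamma\mathbf{1})=\gamma$. The crucial observation is that as soon as an instance contains one candidate $j^*$ with a positive unanimous column $\gamma\mathbf{1}$, $\gamma>0$, score-unanimity gives $\gamma=F_{j^*}=f_{j^*}(\gamma\mathbf{1})/S=\gamma/S$, hence $S=1$; on this family $F_j=f_j(\mathbf{a}^j)$, so the rule behaves exactly as under independence, and each $f_j$ is symmetric by anonymity. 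I then replay the mass-transfer argument with the $f_j$'s in place of the $h_j$'s. The construction now needs two active columns $j,k$, one absorber column to complete the rows, and one positive-unanimous column to guarantee $S=1$---four columns in total, which is exactly why $m\geq 4$ is required. Telescoping together with $f_j(\gamma\mathbf{1})=\gamma$ gives $\phi^j(\gamma)=\gamma/n$ and hence $f_j(\mathbf{a})=\frac1n\sum_i a_i$, so $F=\avg{}$.

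The main obstacle, and the only place continuity is used, is the boundary of the domain. The four-column construction fits only with strict slack---there must be room to insert a positive unanimous column and an absorber---so it pins down $f_j(\mathbf{a})$ directly only for columns $\mathbf{a}$ bounded away from the face where some agent reports score $1$; continuity then extends $f_j(\mathbf{a})=\frac1n\sum_i a_i$ to all of $[0,1]^n$. (In Claim~(1) no continuity is assumed, and I would instead verify that the mass-transfer relation connects every column to $\mathbf{0}$ through valid instances---using single-minded agents to handle coordinates equal to $1$---so that $h_j\equiv 0$ on the full domain combinatorially.) Beyond this, the delicate points are purely bookkeeping: checking that the two columns can genuinely be varied independently while every row stays a probability distribution, and, for Claim~(2), that the positive-unanimous candidate keeping $S=1$ can be maintained throughout the transfer.
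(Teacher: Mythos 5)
Your proposal is correct and, at its core, follows the same strategy as the paper's proof: both arguments exploit the fact that a transfer of mass between two candidates (with a third candidate absorbing the remainder of each row) leaves the sum of the per-candidate outputs fixed, both pin down the resulting increments using score-unanimity at unanimous columns, and both use the positive-unanimous ``anchor'' column to force the normalizer to equal $1$ in Claim~(2) (which is exactly where the fourth candidate is needed) together with continuity to handle columns containing a score of $1$. The packaging differs in two places. For Claim~(1), the paper runs an explicit three-step instance induction (first showing the value on single-supporter columns is a constant $C_\gamma$, then that $C_\gamma=\gamma/n$, then telescoping), whereas you phrase the same content as a Cauchy-type constant-increment property of $h_j=g_j-\mathrm{mean}$; your version is arguably cleaner, but note that the base-point-independence of the increment and the claim that the two active columns ``can be chosen independently'' are subject to the per-row constraint $a_i^j+a_i^k\le 1$, which you correctly relegate to bookkeeping and which does work out (pair every base point with the column $\delta\mathbf{e}_i$). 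For Claim~(2), the paper does not replay the transfer argument: it builds a three-candidate sub-rule $G$ via the $\epsilon$-anchor embedding and invokes Claim~(1) to conclude $G=\avg{}$, which saves repeating the combinatorics. The one step you compress most is the assertion that $f_j(\gamma\mathbf{1})=T\gamma$ for a universal $T$: this holds only for $\gamma<1$ (a column of all-ones cannot coexist with another nonzero constant column, so $\lambda_j(1)$ is not pinned down at this stage), it requires chaining across several all-constant instances to see that $T$ is the same for all $j$ and all $\gamma$, and one must rule out the degenerate case $S=0$ on the uniform instance; the paper's careful sequence of cases $\gamma=0$, $\gamma=\tfrac12$, $\gamma\in(0,\tfrac12)$, $\gamma\in(\tfrac12,1)$ is exactly this argument spelled out. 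None of these points is a genuine gap, but each needs the explicit treatment the paper gives it.
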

\begin{proof}
    It is easy to verify that \avg{} satisfies anonymity and continuity, and we have argued that it satisfies 
    score-unanimity (Theorem~\ref{thm:efficiencyprops}) and independence (Theorem~\ref{thm:consistencyprops}). Thus, we focus on showing that the given sets of axioms indeed characterize \avg{}. We will establish the two claims separately.\medskip

    \noindent\textbf{Claim 1}: Let $F$ be an aggregation rule that satisfies anonymity, score-unanimity, and independence. 
    In particular, this means that $F(\mathI)_j=\gamma$ whenever $s_{i,j}=\gamma$ for all $i\in N$. 
    We establish the result in three steps.
    First, we will show that for every $\gamma\in (0,1]$, there is a constant $C_\gamma$ such that 
    for all $n$-agent instances $\mathI$ satisfying $s_{i, j}=\gamma$ and 
    $s_{i', j}=0$ for some $j\in [m]$, 
    $i\in N$ and all $i'\in N\setminus\{i\}$, it holds that
    $F(\mathI)_j=C_\gamma$. 
    In the second step, we will prove that $C_\gamma=\frac{\gamma}{n}$. 
    Based on this insight, in the last step we will derive that $F$ coincides with \avg{}.
    \medskip

    \emph{Step 1}: 
    To prove our first claim, 
    fix $n\ge 2$, $m\ge 3$ and
    consider two $n$-agent $m$-candidate instances $\mathI^1$ and $\mathI^2$ for which there exist two candidates $c_{j_1}$ and $c_{j_2}$ and two agents $i_1$ and $i_2$ such that \emph{(i)} $s_{i_1,j_1}^1=\gamma$ and $s_{i',j_1}^1=0$ for all $i'\in N\setminus \{i_1\}$, and \emph{(ii)} $s_{i_2,j_2}^2=\gamma$ and $s_{i',j_2}^2=0$ for all $i'\in N\setminus \{i_2\}$. 
    We will show that $F(\mathI^1)_{j_1}=F(\mathI^2)_{j_2}$. Since $\mathI^1$ and $\mathI^2$ are chosen arbitrarily, this implies that there exists a constant $C_\gamma$ such that $F(\mathI)_j=C_\gamma$ for all $n$-agent $m$-candidate instances in which a single agent assigns score $\gamma$ to candidate $c_j$, while all other agents assign score $0$ to that candidate.

    To prove that $F(\mathI^1)_{j_1}=F(\mathI^2)_{j_2}$, let $\hat \mathI^2$ denote the instance derived from $\mathI^2$ by exchanging the preferences of agents $i_2$ and $i_1$. We then have $\hat s_{i_1, j_2}^2=\gamma$ and $\hat s_{i',j_2}^2=0$ for all other agents $i'\in N\setminus \{i_1\}$. Clearly, anonymity requires that $F(\hat \mathI^2)=F(\mathI^2)$, as we derive $\hat \mathI^2$ by renaming the agents in $\mathI^2$. Now, if $j_1=j_2$, it holds that $\hat s_{i, j_1}^2=s_{i, j_1}^1$ for all $i\in N$, so independence implies that $F(\mathI^1)_{j_1}=F(\hat \mathI^2)_{j_2}$ in this case. Hence, if $j_1=j_2$, it follows that $F(\mathI^1)_{j_1}=F(\mathI^2)_{j_2}$.
    
    Next, assume that $j_1\neq j_2$. 
    In this case, we let $c_{j_3}$ denote an arbitrary candidate in $C\setminus \{c_{j_1}, c_{j_2}\}$ and consider the instances $\tilde \mathI^1$ and $\tilde \mathI^2$ described in Table~\ref{tab:tilde}.
\begin{table}   
\begin{center}
        \begin{tabular}{ c | c c c c }
          $\tilde\mathI^1$ & $s_{i,j_1}$ & $s_{i,j_2}$ & $s_{i,j_3}$ & $s_{i,j}$ for $j\in [m]\setminus \{j_1,j_2,j_3\}$\\ 
         \hline \hline
         $i_1$ & $\gamma$ & $1-\gamma$ & $0$ & $0$ \\ 
         $i'\in N\setminus \{i_1\}$ & $0$ & $1-\gamma$ & $\gamma$ & $0$ \\ 
        \end{tabular}\medskip
        
        \begin{tabular}{ c | c c c c }
        \ $\tilde\mathI^2$ & $s_{i,j_1}$ & $s_{i,j_2}$ & $s_{i,j_3}$ & $s_{i,j}$ for $j\in [m]\setminus \{j_1,j_2,j_3\}$\\ 
         \hline \hline
         $i_1$ & $1-\gamma$ & $\gamma$ & $0$ & $0$ \\ 
         $i'\in N\setminus \{i_1\}$ & $1-\gamma$ & $0$ & $\gamma$ & $0$ \\ 
        \end{tabular}  
    \end{center}
    \caption{Instances $\tilde \mathI^1$ and $\tilde \mathI^2$ in the proof of Theorem~\ref{lem:indavg}.}
    \label{tab:tilde}
    \end{table}
    We infer from score-unanimity that $F(\tilde \mathI^1)_{j_2}=F(\tilde \mathI^2)_{j_1}=1-\gamma$ and $F(\tilde \mathI^1)_{j}=F(\tilde \mathI^2)_{j}=0$ for all $j\in [m]\setminus \{j_1,j_2,j_3\}$.  
    We therefore have $F(\tilde \mathI^1)_{j_1}=\gamma-F(\tilde \mathI^1)_{j_3}$ and $F(\tilde \mathI^2)_{j_2}=\gamma-F(\tilde \mathI^2)_{j_3}$.
    Moreover, independence implies that $F(\tilde \mathI^1)_{j_3}=F(\tilde \mathI^2)_{j_3}$. Hence, $F(\tilde \mathI^1)_{j_1}=F(\tilde \mathI^2)_{j_2}$.
    On the other hand, we have 
    $s_{i,j_1}^1=\tilde s_{i,j_1}^1$ and $\hat s_{i, j_2}^2=\tilde s_{i,j_2}^2$ for all $i\in N$, so by
    independence, $F(\mathI^1)_{j_1}=F(\tilde \mathI^1)_{j_1}$ and $F(\hat \mathI^2)_{j_2}=F(\tilde \mathI^2)_{j_2}$.
    Hence, we conclude that $F(\mathI^1)_{j_1}=F(\hat \mathI^2)_{j_2}=F(\mathI^2)_{j_2}$.
    \medskip

    \emph{Step 2}: For our second step, we fix a value $\gamma\in (0,1]$ and let $C_\gamma$ denote the constant derived in Step~1. 
    The goal of this step is to show that $C_\gamma=\frac{\gamma}{n}$. 
    To prove this claim, let $c_{j_1}$, $c_{j_2}$, and $c_{j_3}$ denote three distinct candidates. 
    We will prove by induction that 
    for the instances $\mathI^k$ shown below, it holds that
    $F(\mathI^k)_{j_1}=k\cdot C_\gamma$.   
    \begin{center}
        \begin{tabular}{ c | c c c c }
        \ $\mathI^k$ & $s_{i,j_1}$ & $s_{i,j_2}$ & $s_{i,j_3}$ & $s_{i,j}$ for $j\in [m]\setminus \{j_1,j_2,j_3\}$\\ 
         \hline \hline
         $i\in \{1,\dots, k\}$ & $\gamma$   &  $0$ & $1-\gamma$ & $0$\\
    $i\in \{k+1,\dots, n\}$& $0$ &  $\gamma$   &  $1-\gamma$ & $0$\\
        \end{tabular}
    \end{center}
    In particular, this means that $F(\mathI^{n-1})_{j_1}=(n-1)\cdot C_\gamma$. Moreover, we infer from Step 1 that $F(\mathI^{n-1})_{j_2}=C_\gamma$, and score-unanimity implies $F(\mathI^{n-1})_{j}=0$ for all $j\in [m]\setminus \{j_1, j_2, j_3\}$ and $F(\mathI^{n-1})_{j_3}=1-\gamma$. Since $\sum_{j\in [m]} F(\mathI^{n-1})_j=1$, it follows that $C_\gamma=\frac{\gamma}{n}$.

    For the proof that $F(\mathI^k)_{j_1}=k\cdot C_\gamma$ for all $k\in \{1,\dots, n-1\}$, we first note that $F(\mathI^1)_{j_1}=C_\gamma$ by Step 1. Next, we inductively assume that $F(\mathI^k)_{j_1}=k\cdot C_\gamma$ for some $k\in \{1,\dots, n-2\}$ and aim to show that $F(\mathI^{k+1})_{j_1}=(k+1)\cdot C_\gamma$. 
    To this end, we consider the instances $\hat \mathI^k$ and $\tilde \mathI^k$ shown below.\smallskip

     \begin{center}
        \begin{tabular}{ c | c c c c }
         $\hat \mathI^k$ & $s_{i,j_1}$ & $s_{i,j_2}$ & $s_{i,j_3}$ & $s_{i,j}$ for $j\in [m]\setminus \{j_1,j_2,j_3\}$\\ 
         \hline \hline
         $i\in \{1,\dots, k\}$ & $\gamma$   &  $0$ & $1-\gamma$ & $0$\\
         $k+1$ & $0$ & $\gamma$ & $1-\gamma$ & $0$\\
    $i\in \{k+2,\dots, n\}$& $0$ &  $0$   &  $1$ & $0$\\
        \end{tabular}\medskip
        
        \begin{tabular}{ c | c c c c }
         $\tilde \mathI^k$ & $s_{i,j_1}$ & $s_{i,j_2}$ & $s_{i,j_3}$ & $s_{i,j}$ for $j\in [m]\setminus \{j_1,j_2,j_3\}$\\ 
         \hline \hline
         $i\in \{1,\dots, k+1\}$ & $\gamma$   &  $0$ & $1-\gamma$ & $0$\\
    $i\in \{k+2,\dots, n\}$& $0$ &  $0$   &  $1$ & $0$\\
        \end{tabular}
    \end{center}

    By independence and the induction hypothesis, we have $F(\hat \mathI^k)_{j_1}=F(\mathI^{k})_{j_1}=k\cdot C_\gamma$. 
    Moreover, Step~1 shows that $F(\hat \mathI^k)_{j_2}=C_\gamma$ and score-unanimity requires that $F(\hat \mathI^k)_{j}=0$ for all $j\in [m]\setminus\{j_1,j_2,j_3\}$. Hence, we derive $F(\hat \mathI^k)_{j_3}=1-(k+1)\cdot C_\gamma$. 
    Next, we turn to the instance $\tilde \mathI^k$. By score-unanimity, we have $F(\tilde \mathI^k)_{j}=0$ for all $j\in [m]\setminus \{j_1,j_3\}$, and hence $F(\tilde \mathI^k)_{1}+F(\tilde \mathI^k)_{3}=1$.
    On the other hand, independence implies $F(\tilde \mathI^k)_{j_3}=F(\hat \mathI^k)_{j_3}=1-(k+1)\cdot C_\gamma$.
    It follows that $F(\tilde \mathI^k)_{j_1}=(k+1)\cdot C_\gamma$. 
    Finally, independence implies $F(\mathI^{k+1})_{j_1}=F(\tilde \mathI^k)_{j_1}=(k+1)\cdot C_\gamma$, which completes the induction step.\medskip

    \emph{Step 3}: For our last step, we will show that $F$ corresponds to \avg{}.
    Fix $n\ge 2$, $m\ge 3$, and consider an 
    $n$-agent, $m$-candidate instance $\mathI$ and an arbitrary candidate $c_{j_1}$. Our goal is to show that $F(\mathI)_{j_1}=\frac{1}{n}\sum_{i\in N} s_{i, j_1}$. 
    To prove this claim, we take a candidate $c_{j_2} \ne c_{j_1}$, and consider the following family of instances $\mathI^k$ for $k\in [n]$:
    \begin{itemize}
    \item[\emph{(i)}] 
    $s_{i, j_1}^k=s_{i,j_1}$ and $s_{i,j_2}^k=1-s_{i,j_1}$ for all $i\in \{1,\dots,k \}$, 
    \item[\emph{(ii)}]
    $s_{i,j_1}^k=0$ and $s_{i,j_2}^k=1$ for all $i\in \{k+1,\dots,n\}$, and 
    \item[\emph{(iii)}]
    $s_{i,j}^k=0$ for all $i\in N$ and $j\in [m]\setminus \{j_1,j_2\}$. 
    \end{itemize}
    By independence, it holds that $F(\mathI)_{j_1}=F(\mathI^n)_{j_1}$, so our goal is to show that $F(\mathI^n)_{j_1}=\frac{1}{n}\sum_{i\in N} s_{i, j_1}$. 

    We will prove by induction on $k\in [n]$ that $F(\mathI^k)_{j_1}=\frac{1}{n}\sum_{i=1}^k s_{i, j_1}$. For the base case $k=1$, we observe that $F(\mathI^1)_{j_1}=\frac{s_{1,j_1}^1}{n}=\frac{1}{n}\sum_{i=1}^1 s_{i,j_1}$ due to Step~2 (if $s_{1, j_1}>0$) or score-unanimity (if $s_{1, j_1}=0$). 
    Next, we inductively assume that $F(\mathI^k)_{j_1}=\frac{1}{n}\sum_{i=1}^k s_{i,j_1}$ for some $k\in \{1,\dots, n-1\}$, and aim to show the same for $k+1$. 
    
    If $s_{k+1,\, j_1}=0$, this follows by independence, as $F(\mathI^{k+1})_{j_1}=F(\mathI^{k})_{j_1}=\frac{1}{n}\sum_{i=1}^{k+1} s_{i,j_1}$. 
    Thus, assume that $s_{k+1,\, j_1}>0$ and consider the instance $\bar \mathI^k$ derived from $\mathI^k$ by setting $\bar s_{k+1,\,j_3}^k=s_{k+1,\,j_1}$ for some arbitrary ${j_3}\in [m]\setminus \{j_1,j_2\}$ and $\bar s_{k+1,\,j_2}^k=1-s_{k+1,\,j_1}$. 
    By independence, it holds that $F(\bar \mathI^{k})_{j_1}=F(\mathI^k)_{j_1}=\frac{1}{n} \sum_{i=1}^k s_{i, j_1}$. Furthermore, $F(\bar \mathI^k)_{j_3}=\frac{s_{k+1,\,j_1}}{n}$ by Step 2. Since $F(\bar\mathI^k)_j=0$ for all $j\in [m]\setminus \{j_1,j_2,j_3\}$ due to score-unanimity, we infer that $F(\bar \mathI^k)_{j_2}=1-F(\bar \mathI^k)_{j_1}-F(\bar \mathI^k)_{j_3}=1-\frac{1}{n}\sum_{i=1}^{k+1} s_{i, j_1}$. 
    
    Next, consider the instances $\bar \mathI^k$ and $\mathI^{k+1}$, and candidate $j_2$. By independence, $F(\mathI^{k+1})_{j_2}=F(\bar \mathI^k)_{j_2}=1-\frac{1}{n}\sum_{i=1}^{k+1} s_{i, j_1}$. Moreover, by score-unanimity it holds that
    $F(\mathI^{k+1})_{j}=0$ for all $j\in [m]\setminus \{j_1,j_2\}$, and hence $F(\mathI^{k+1})_{j_1}=\frac{1}{n}\sum_{i=1}^{k+1} s_{i, j_1}$. 
    This completes the induction step. 
    It follows that $F(\mathI)_{j_1}=F(\mathI^n)_{j_1}=\frac{1}{n}\sum_{i\in N} s_{i,j_1}$, and we conclude that $F$ coincides with \avg{}.\medskip

    \noindent\textbf{Claim 2}: Let $F$ be a coordinate-wise aggregation rule that satisfies anonymity, continuity, and score-unanimity. 
For each $j\in [m]$, let $f_j$ denote the coordinate-aggregation function of $F$ for the $j$-th coordinate. 
In a slight abuse of notation, we will write $f_j(\mathI)$ to mean $f_j(s_{1,j},\dots, s_{n,j})$. 
Since $F$ is scale-invariant, in the sense that its output does not change if we multiply all functions $f_j$ by a constant, we can assume without loss of generality that $f_1(0.5,\dots,0.5)=0.5$.

In the remainder of the proof, we will show that for each $n\ge 2$, $m\ge 4$
and each $n$-agent $m$-candidate instance $\mathI$, it holds that
$f_j(\mathI)=\frac{1}{n}\sum_{i\in N} s_{i,j}$ for each $j\in [m]$. To this end, we first prove this claim for the case where $c_j$ receives the same score $\gamma \in [0,1)$ from all agents.
In the second step, we then use our first characterization of \avg{} to show that $f_j(\mathI)=\frac{1}{n}\sum_{i\in N} s_{i,j}$ for all instances $\mathI$ and candidates $c_j$ with $\max_{i\in N} s_{i,j}<1$. Finally, in the last step we use continuity to infer that $F$ corresponds to \avg{}. 
\medskip

\emph{Step 1}: As the first step, we show that $f_j(\gamma,\dots, \gamma)=\gamma$ for all $j\in [m]$ and $\gamma\in [0,1)$.
The general outline of our argument is as follows. Consider an instance $\mathI$ where $s_{i, j}=\gamma$ for all $i\in N$.
By score-unanimity we have 
$F(\mathI)_j=\gamma$ and, furthermore, 
$F(\mathI)_j = \frac{f_j(\mathI)}{\sum_{j'\in[m]} f_{j'}(\mathI)}$.
Thus, we need to establish that 
$\sum_{j'\in[m]} f_{j'}(\mathI)=1$.
To this end, it suffices to show that 
$F(\mathI)_k = f_k(\mathI)$ for some $k\in [m]$.

To start, observe that $f_j(0,\dots, 0)=0$ for all $j\in [m]$ because of score-unanimity. 
Next, we will argue that $f_j(0.5,\dots, 0.5)=0.5$ for all $j\in [m]$. 
To this end, fix some candidate $c_j\neq c_1$ and consider the instance $\mathI^1$ where 
for each $i\in N$ it holds that $s_{i,1}^1=s_{i,j}^1=0.5$ and $s_{i,j'}=0$ for all $j'\in [m]\setminus \{1,j\}$. 
By score-unanimity, we have $F(\mathI^1)_1=F(\mathI^1)_j=0.5$. 
Since we assume that $f_1(0.5,\dots, 0.5)=0.5$, we obtain
$f_1(\mathI^1)=F(\mathI^1)_1$.
As argued in the previous paragraph, 
this implies $f_j(\mathI^1)=F(\mathI^1)_j=0.5$, and therefore
$f_j(0.5,\dots,0.5)=0.5$ for all $j\in [m]$.

Next, we show that $f_j(\gamma,\dots,\gamma)=\gamma$ for all $\gamma\in (0,0.5)$ and $j\in [m]$. 
To this end, we fix three distinct candidates $c_{j}$, $c_{k}$, and $c_{\ell}$, and consider the instance $\mathI^2$ such that $s_{i,j}^2=\gamma$, $s_{i,k}^2=0.5$, and $s_{i, \ell}^2=0.5-\gamma$ for all $i\in N$. All other candidates receive score $0$ from all agents. 
By score-unanimity we obtain 
$F(\mathI^2)_j=\gamma$ and $F(\mathI^2)_k=0.5$, and
by the analysis in the previous paragraph
we have $f_k(\mathI^2)=0.5=F(\mathI^2)_k$.
As argued at the start of the proof, this implies $f_j(\mathI^2)=F(\mathI^2)_j$, so we conclude that $f_{j}(\gamma,\dots,\gamma)=\gamma$.

Finally, we consider the case $\gamma\in (0.5,1)$. 
Fix two candidates $c_{j}$ and $c_{k}$, and consider the instance~$\mathI^3$ where $s_{i,j}^3=\gamma$ and $s_{i,k}^3=1-\gamma$ for all $i\in N$. All other candidates again obtain score $0$ from all agents. 
By score-unanimity, we have 
$F(\mathI^3)_{j}=\gamma$ and
$F(\mathI^3)_{k}=1-\gamma$. Moreover, 
since $1-\gamma<0.5$, the argument in the previous paragraph implies that $f_k(\mathI^3) = 1-\gamma=F(\mathI^3)_{k}$. As argued at the start of the proof, this implies
$f_j(\mathI^3) = F(\mathI^3)_{j}$, so we infer that $f_{j}(\gamma,\dots,\gamma)=\gamma$.
This completes the proof for Step 1.\medskip

\emph{Step 2}: Next, we prove that $f_j(\mathI)=\frac{1}{n}\sum_{i\in N} s_{i,j}$ for all instances $\mathI$ and candidates $c_j$ with $\max_{i\in N} s_{i,j}<1$. Fix an arbitrary instance $\mathI^*$ and a candidate $c_j$ that satisfy our requirements. Moreover, define $\epsilon=1-\max_{i\in N} s^*_{i,j}$, and note that $\epsilon>0$ by our choice of $j$. 
If $\epsilon=1$, then in $\mathI^*$ all agents assign score $0$ to $c_{j}$, and score-unanimity immediately implies that $f_{j}(\mathI^*)=0=\frac{1}{n}\sum_{i\in N} s_{i,j}^*$. 

Hence, we assume that $\epsilon<1$, and consider three other candidates $c_k, c_\ell, c_t$. 
Furthermore, we define another aggregation rule $G$ for the candidates $\{c_j, c_k, c_\ell\}\subsetneq C$ as follows:  
Given an instance~$\mathI$ on these three candidates, we construct an extended instance $\mathI^E$ on $C$ with the same set of agents $N$ by setting, for each $i\in N$, 
$$
s_{i,j'}^E=
\begin{cases}
(1-\epsilon)s_{i,j'} &\text{for $j'\in \{j,k,\ell\}$},\\ 
\epsilon &\text{for $j'=t$},\\ 
0 &\text{for $j'\in [m]\setminus \{j,k,\ell,t\}$.}
\end{cases}
$$
Then, $G(\mathI)_{j'}=\frac{1}{1-\epsilon} \cdot F(\mathI^E)_{j'}$ for all $j'\in \{j,k,\ell\}$. 

Our goal is to show that $G$ coincides with \avg{}. Before we prove this, let us show why this implies that $f_{j}(\mathI^*)=\frac{1}{n}\sum_{i\in N} s_{i,j}^*$.
Indeed, if $G$ is equal to \avg{}, for each instance $\mathI$ on $\{c_{j}, c_{k}, c_{\ell}\}$
we have 
$$
F(\mathI^E)_{j}=(1-\epsilon) G(\mathI)_{j}=(1-\epsilon)\cdot\frac{1}{n}\sum_{i\in N} s_{i, j} = \frac{1}{n}\sum_{i\in N} s_{i, j}^E.
$$
 Furthermore, by score-unanimity and Step~1, we have $F(\mathI^E)_{t}=\epsilon=f_{t}(\mathI^E)$, which implies that $\sum_{j'\in [m]} f_{j'}(\mathI^E)=1$. In turn, this means that $F(\mathI^E)_{j'}=f_{j'}(\mathI^E)$ for all $j'\in [m]$, so it follows that $f_{j}(\mathI^E)=\frac{1}{n} \sum_{i\in N} s^{E}_{i, j}$ for all instances $\mathI$ on $\{c_{j}, c_{k}, c_{\ell}\}$. 
Now, consider an instance $\mathI$
on $\{c_{j}, c_{k}, c_{\ell}\}$ such that for each $i\in N$ we have $s_{i, j} = \frac{s^*_{i,j}}{1-\epsilon}$ and $s_{i, k}=s_{i, \ell}=\frac{1-s_{i, j}}{2}$; note that $s^*_{i, j}\le 1-\epsilon$ and hence $s_{i, j}\le 1$, so $\mathI$ is well-defined.
We have $s^*_{i,j}=s^E_{i,j}$ for all $i\in N$, so we conclude that $f_{j}(\mathI^*)=\frac{1}{n}\sum_{i\in N} s_{i, j}^*$. 

It remains to show that $G$ is indeed \avg{}. 
To this end, we aim to employ our first characterization and show that $G$ is a well-defined aggregation rule that satisfies anonymity, score-unanimity, and independence.
First, it is easy to verify that $G$ is well-defined. Indeed,
for each instance $\mathI$ it holds that $F(\mathI^E)_{j'}\ge 0$
for all $j'\in \{j,k,\ell\}$, and hence $G(\mathI)_{j'}\geq 0$ for all  $j'\in \{j,k,\ell\}$.  
Moreover, score-unanimity implies that $F(\mathI^E)_{t}=\epsilon$ and $F(\mathI^E)_{j'}=0$ for all $j'\in[m] \setminus \{j,k,\ell,t\}$, so $F(\mathI^E)_{j}+F(\mathI^E)_{k}+F(\mathI^E)_{\ell}=1-\epsilon$. 
It follows that $G(\mathI)_{j}+G(\mathI)_{k}+G(\mathI)_{\ell}=1$ for all instances $\mathI$. 

Next, we show that $G$ is anonymous. 
Consider two instances $\mathI$ and $\hat \mathI$ such that $\hat \mathI$ is derived from~$\mathI$ by permuting the agents. 
Consequently, the instances $\mathI^E$ and $\hat \mathI^E$ can be derived from each other by permuting the agents, so the anonymity of $F$ implies that $F(\mathI^E)=F(\hat \mathI^E)$. 
This also means that $G(\mathI)=G(\hat \mathI)$, and so $G$ is anonymous. 

To show that $G$ is score-unanimous, we consider an instance $\mathI$ such that all agents $i\in N$ assign the same score $\gamma$ to some candidate $c_{j'}$ where $j'\in\{j, k, \ell\}$. 
Hence, all agents assign score $(1-\epsilon)\gamma$ to $c_{j'}$ in the extended instance $\mathI^E$. 
Score-unanimity of $F$ then implies that $F(\mathI^E)_{j'}=(1-\epsilon)\gamma$, which entails that 
$G(\mathI)_{j'}=\frac{1}{1-\epsilon}\cdot F(\mathI^E)_{j'}=\gamma$. 
It follows that $G$ satisfies this axiom, too. 

Finally, we show that $G$ satisfies independence. 
To this end, consider two instances $\mathI$, $\hat \mathI$ on $\{c_{j}, c_{k}, c_{\ell}\}$, and 
$j'\in\{j, k, \ell\}$ such that $s_{i, j'}=\hat s_{i,j'}$ for all $i\in N$.
This means that $s^E_{i,j'}=\hat s^E_{i,j'}$ for all $i\in N$, so $f_{j'}(\mathI^E)=f_{j'}(\hat \mathI^E)$. 
Moreover, by score-unanimity and Step~1, it holds that $F(\mathI^E)_{t}=\epsilon = f_{t}(\mathI^E)$ and $ F(\hat \mathI^E)_{t} = \epsilon = f_{t}(\hat \mathI^E)$. 
We hence infer that $\sum_{t\in [m]} f_{t}(\mathI^E)=\sum_{t\in [m]} f_{t}(\hat \mathI^E)=1$. 
This means that $F(\mathI^E)_{j'}=f_{j'}(\mathI^E)=f_{j'}(\hat \mathI^E)=F(\hat \mathI^E)_{j'}$, so $G$ satisfies independence
because $G(\mathI)_{j'}=\frac{1}{1-\epsilon}\cdot F(\mathI^E)_{j'}=\frac{1}{1-\epsilon}\cdot F(\hat\mathI^E)_{j'}=G(\hat \mathI)_{j'}$. 

Since the rule $G$ satisfies all axioms of Claim~1, we conclude that it coincides with \avg{}, which completes the proof for this step.\medskip

\emph{Step 3}: By the insights of Step 2, we obtain $f_j(\mathI)=\frac{1}{n}\sum_{i\in N} s_{i,j}$ for all $n$-agent instances $\mathI$ and candidates $c_j$ such that $\max_{i\in N} s_{i,j}<1$. 
This means that $F$ is equal to \avg{} for all instances $\mathI$ with $\max_{i\in N,\, j\in [m]} s_{i,j}<1$. 
In order to extend the result to instances where some agents assign score $1$ to some candidate, we use the continuity of $F$. 
Specifically, consider an instance $\mathI^*$ such that $s^*_{i,j}=1$ for 
some $i\in N, j\in [m]$.
Let $\mathI$ denote the instance where every agent assigns score $\frac{1}{m}$ to every candidate. 
We can now consider the sequence of instances $\mathI^k$ defined by $s_{i,j}^k=\frac{1}{2^k} s_{i,j} + (1-\frac{1}{2^k})s_{i,j}^*$ for all $i\in N$, $j\in[m]$. 
Clearly, this sequence converges to $\mathI^*$.
Moreover, for every instance $\mathI^k$ and all $j\in [m]$, it holds that $F(\mathI^k)_j=\frac{1}{n}\sum_{i\in N} s_{i,j}^k$ due to Step 2. 
Hence, we can infer by continuity that $F(\mathI^*)_j=\lim_{k\to\infty} F(\mathI^k)_j=\frac{1}{n}\sum_{i\in N} s_{i,j}^*$.
This shows that $F$ coincides with \avg{}, as desired.
\end{proof}

\begin{remark}
    For both of our characterizations, all axioms are necessary. Every dictatorial aggregation rule (i.e., a rule that always returns the score vector of a specific agent) satisfies all given axioms except anonymity. Every constant aggregation rule satisfies all given axioms except score-unanimity. \util{} satisfies all given axioms except independence and coordinate-wiseness. Furthermore, the coordinate-wise rule defined by the coordinate-aggregation function $f_j(\mathI)=1$ if there is an agent $i$ with $s_{i,j}=1$ and $f_j(\mathI)=\frac{1}{n}\sum_{i\in N} s_{i,j}$ otherwise satisfies all axioms except independence and continuity.
    Finally, the condition that $m\geq 3$ is necessary for our first characterization, as otherwise independence becomes trivial and, e.g., \util{} satisfies all given axioms (anonymity is immediate, and for score-unanimity and independence, see Theorem~\ref{thm:efficiencyprops} and Theorem~\ref{thm:consistencyprops}, respectively). Similarly, the condition that $m\geq 4$ is necessary for the second characterization because \med{} satisfies all conditions if $m\leq 3$ (anonymity and continuity are immediate, and for score-unanimity this is shown in Theorem~\ref{thm:efficiencyprops}). 
\end{remark}

\begin{remark}
    One can check that, in our first characterization, it is possible to replace score-unanimity and anonymity with score-representation. That is, \avg{} is the only aggregation rule that satisfies score-representation and independence. To see this, note that for instances of the form $\tilde \mathI^1$ (see Table~\ref{tab:tilde}), score-representation implies that candidate $c_{j_1}$ receives probability at least $\frac{\gamma}{n}$, candidate $c_{j_2}$ receives probability at least $1-\gamma$, and candidate $c_{j_3}$ receives probability at least $\frac{(n-1)\gamma}{n}$.
    Since the probabilities must sum up to $1$, this is only possible if all of these bounds are tight.
    By independence and the fact that this argument does not depend on the identities of agents or candidates, this immediately completes the proof of Steps~1 and 2.
    For Step~3, we used score-unanimity to conclude that every candidate that receives score $0$ from all agents is assigned probability $0$; this also follows from score-representation in conjunction with independence.
    
    Perhaps surprisingly, this claim does not hold for our second characterization. Indeed, consider the coordinate-wise rule $G$ whose coordinate-aggregation functions $g_j$ assign to each candidate the minimum probability that meets the requirements imposed by score-representation.
    Note that since \avg{} satisfies score-representation, for every instance $\mathI$
    and all $j\in [m]$ we have $g_j(\mathI)\le \avg{}(\mathI)_j$, and hence $\sum_{j\in [m]}g_j(\mathI)\le \sum_{j\in [m]}\avg{}(\mathI)_j=1$, which implies that $G(\mathI)_j\ge g_j(\mathI)$ for each $j\in [m]$.
    Therefore, $G$ satisfies score-representation, anonymity, and continuity. 
\end{remark}

\begin{remark}
    Another natural way to characterize \avg{} is to rely on convexity. Specifically, we say that an aggregation rule $F$ is \emph{weakly convex} if $F(\mathI'')=F(\mathI)$ for all instances $\mathI$, $\mathI'$, and $\mathI''$ such that $F(\mathI)=F(\mathI')$ and there exists $\lambda\in (0,1)$ with $s_{i,j}''=\lambda s_{i,j} + (1-\lambda) s_{i,j}'$ for all $i\in N$, $j\in [m]$. Then, one can show that \avg{} is the only aggregation rule that satisfies weak convexity, anonymity, and score-unanimity. The idea is that, given an instance $\mathI$, we can consider all possible permutations $\pi:N\rightarrow N$, apply each of them to $\mathI$, 
    and then take the average of all permuted instances. This results in an instance $\mathI^*$ where for each $j\in [m]$ it holds that
    all agents assign score $\frac{1}{n}\sum_{i\in N} s_{i,j}$ to $c_j$. Hence, score-unanimity requires $F(\mathI^*)_j=\frac{1}{n}\sum_{i\in N} s_{i,j}$ for each $j\in [m]$, and weak convexity and anonymity imply that the same holds for $\mathI$. 
\end{remark}

\begin{remark}
    All coordinate-wise (and anonymous) rules from \Cref{tab:summary} violate Pareto optimality. 
    Intuitively, this does not come as a surprise, as coordinate-wise rules are unable to take into account correlations between scores and Pareto improvements.
    Formally, the first two steps in the proof of Claim~2 of \Cref{lem:indavg} show that any coordinate-wise, anonymous, and score-unanimous (which is necessary for Pareto optimality) rule has to coincide with \avg{} on all instances with $s_{i,j} \neq 1$ for all $i\in N$, $j\in[m]$. It is not difficult to see---e.g., by adapting the instance $\mathcal{I}^1$ in the proof of \Cref{thm:efficiencyprops}---that no such rule satisfies Pareto optimality. 
    Since independence implies coordinate-wiseness, this shows that any Pareto optimal and anonymous rule necessarily fails independence.
\end{remark}

\section{Conclusion}

In this paper, we have analyzed aggregation rules for portioning with cardinal preferences from an axiomatic perspective.
Specifically, we considered a natural model in which each agent reports her ideal distribution of a homogeneous resource over a set of candidates, and her disutility for a distribution corresponds to the $\ell_1$ distance from her ideal distribution.
We investigated rules based on coordinate-wise aggregation or welfare aggregation as well as the independent markets rule of \citet{freeman2021truthfulbudget} with respect to efficiency, fairness, consistency, and incentive properties.
Our results, which are summarized in \Cref{tab:summary}, show that the rule that simply returns the average of the agents' reports satisfies most of the studied axioms.
In particular, even though this rule violates strategyproofness and Pareto optimality,\footnote{Note also that rules that satisfy strategyproofness under $\ell_1$ utilities may fail to do so under different utility models---see, e.g., Endnote~1 in the work of \citet{brandt2024optimal}.
Moreover, the average rule satisfies Pareto optimality under $\ell_2$ utilities, as it minimizes the sum of squared $\ell_2$ disutilities \citep[p.~129]{dodge1999multivariate}.} it is the only rule among the ones we consider that fulfills the strong fairness notion of score-representation as well as the strong consistency property of independence.
To further strengthen this point, we provided two characterizations demonstrating that the average rule is the only rule within large classes of rules that satisfies, for example, independence and score-unanimity at the same time.

We believe that our paper can serve as a basis for extensive future research in the domain of cardinal portioning.
For instance, it could be worthwhile to derive additional characterizations of the average rule, and the insights that our findings offer may also be helpful toward characterizations of further rules.
One could verify whether rules that ``approximate'' the average rule preserve approximate versions of the properties that the average rule satisfies.
It would also be interesting to examine additional axioms, especially those concerning fairness, which is important but arguably not yet well-understood in this setting.
One such axiom is membership in the \emph{core}, which intuitively means that no subset of agents can guarantee a better outcome (in the sense of Pareto improvement) for themselves by using their proportional share of the resource.
The core strengthens both Pareto optimality (since the latter only imposes this requirement on the set of all agents) and single-minded proportionality, so \Cref{tab:summary} immediately implies that none of the rules we consider always returns an outcome in the core.
Whether any such rule exists is therefore an intriguing question, which we leave for future work.

\section*{Acknowledgments}
 This work was partially supported by the AI Programme of The Alan Turing Institute, by the Deutsche Forschungsgemeinschaft under grants BR 2312/11-2 and BR 2312/12-1, by the Singapore Ministry of Education under grant number MOE-T2EP20221-0001, by the NSF-CSIRO grant on ``Fair Sequential Collective Decision-Making'' (RG230833), and by an NUS Start-up Grant.
 Most of this research was done while the second author was at the Technical University of Munich.
We thank the anonymous reviewers of ECAI 2023 and Artificial Intelligence Journal for constructive feedback, and Erel Segal-Halevi for insightful comments.

\bibliographystyle{plainnat}
\bibliography{abb,main,group}

\appendix

\section{Omitted Proofs}
\label{app:omitted-proofs}

In this appendix, we prove two claims regarding computational aspects that were made in the body of the paper, as well as a technical lemma regarding medians of two sets.

First, we show that \egal{} (with leximin tie-breaking) can be computed in polynomial time. 

\begin{proposition}\label{thm:egal_compute}
    \emph{\egal{}} can be computed in polynomial time.
\end{proposition}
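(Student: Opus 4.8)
The plan is to reduce the computation of \egal{} to solving a polynomially bounded sequence of linear programs, following the standard ``saturation'' (iterated max--min) approach to leximin optimization.

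First I would linearize the disutilities. Since $d_i(\mathbf{x})=\sum_{j\in[m]}|s_{i,j}-x_j|$, for each agent $i$ and candidate $j$ I introduce an auxiliary variable $z_{i,j}$ with the constraints $z_{i,j}\ge s_{i,j}-x_j$ and $z_{i,j}\ge x_j-s_{i,j}$; then $\sum_{j}z_{i,j}$ upper-bounds $d_i(\mathbf{x})$ and equals it whenever it is being minimized. The feasible region $\{\mathbf{x}\colon x_j\ge 0,\ \sum_j x_j=1\}$ together with these constraints is a polytope whose description has size polynomial in $n$ and $m$. Minimizing a fresh variable $t$ subject to $\sum_{j}z_{i,j}\le t$ for all $i\in N$ therefore computes $\min_{\mathbf{x}}\max_{i\in N}d_i(\mathbf{x})$ via a single LP.

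Next I would run the saturation loop, maintaining a set $S$ of ``fixed'' agents together with committed disutility levels $\hat t_i$, initialized with $S=\emptyset$. In each round I solve the LP that minimizes $t$ subject to the polytope constraints, $\sum_{j}z_{i,j}\le t$ for every unfixed agent $i\in N\setminus S$, and $\sum_{j}z_{i,j}\le \hat t_i$ for every fixed agent $i\in S$; call the optimum $t^*$. I then identify the unfixed agents whose disutility is forced to equal $t^*$ in every optimal solution: for each $i\in N\setminus S$, I fix the objective threshold to $t^*$ and minimize $\sum_{j}z_{i,j}$ subject to the same constraints; if this minimum equals $t^*$, then agent~$i$ is binding, so I set $\hat t_i:=t^*$ and add $i$ to $S$. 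I repeat until $S=N$ and output any feasible $\mathbf{x}$ satisfying all committed constraints (read off from the final LP). For the running-time bound, at least one agent is fixed in every round, so the loop executes at most $n$ times; each round solves one ``$\min t$'' LP and at most $n$ auxiliary ``$\min\sum_j z_{i,j}$'' LPs, all of polynomial size, and linear programs are solvable in polynomial time. Hence \egal{} runs in polynomial time. The case $n=2$ is immediate, since \egal{} then coincides with \avg{} by Proposition~\ref{prop:welfare-egal_is_sum}.

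The main obstacle I anticipate is the correctness argument rather than the complexity bound: I must show that the committed profile produced by the loop is exactly the leximin-minimal sorted disutility vector $v(\mathbf{x})$, and that the final LP genuinely admits a single point $\mathbf{x}$ realizing all committed levels simultaneously. The two key facts are (i) that the binding set is nonempty in each round and is correctly detected by the auxiliary minimizations, and (ii) that fixing these agents at $t^*$ and recursing preserves the leximin order, i.e.\ no agent fixed in a later round can be made better off without worsening an already-fixed (and hence worse-off) agent. This is the classical correctness of the leximin saturation algorithm over a convex feasible region; I would either invoke it directly or supply the inductive exchange argument showing that the produced disutility profile lexicographically dominates that of every other feasible outcome, which matches the tie-breaking requirement imposed on \egal{}.
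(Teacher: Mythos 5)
Your proposal is correct and follows essentially the same route as the paper: linearize the $\ell_1$ disutilities with auxiliary variables $z_{i,j}$, minimize the maximum disutility by a single LP, then iteratively detect and fix saturated agents using $O(n)$ auxiliary LPs per round, for $O(n^2)$ LPs in total. The only cosmetic difference is how binding agents are identified (the paper maximizes a slack $\delta_i$ and fixes the agent with $\delta_{i^*}=0$, whereas you minimize $\sum_j z_{i,j}$ at the threshold $t^*$ and fix all binding agents at once); both rely on the same convexity/averaging argument for why the binding set is nonempty.
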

\begin{proof}
    Similarly to \citet[Alg.~1]{airiau2019portioning}, we formulate a series of linear programs (LP) for finding an \egal{} outcome. Let the objective function be
    \begin{equation*}
        \text{minimize } \xi
    \end{equation*}
    \noindent
    subject to the following constraints:
    \begin{enumerate}[label=(\arabic*),topsep=4pt,itemsep=0pt]
    \item $\sum_{j \in [m]} x_j = 1$;
        
    \item $x_j \geq 0 \text{ for each $j\in [m]$}$;
    
    \item $z_{i,j} \geq s_{i,j} - x_j \text{ and } z_{i,j} \geq x_j - s_{i,j}$ for each $i\in N$, $j\in [m]$;
    
    \item $\sum_{j \in [m]} z_{i,j} \leq \xi$ for each $i\in N$.
    \end{enumerate}
    Note that $z_{i,j}$ is an upper bound on $i$'s disutility for candidate $c_j$, so in an optimal solution it holds that $z_{i, j}=|s_{i, j}-x_j|$ for all $i\in N$, $j\in [m]$.
    
    This allows us to minimize the largest disutility~$\xi$. There is an agent $i$
    that has disutility $\xi$ in every leximin outcome: 
    indeed, if for every $i\in N$ there is an outcome in which $i$ incurs disutility less than $\xi$ and every other agent incurs disutility at most $\xi$, then by averaging these outcomes across all $i\in N$, we obtain an outcome in which every agent's disutility is less than $\xi$, contradicting the choice of $\xi$.
    To find such an agent $i^*$, for each $i\in N$ we formulate an LP that computes the maximum $\delta_i$ such that there exists an outcome in which agent~$i$ incurs disutility at most $\xi - \delta_i$ while every other agent incurs disutility at most $\xi$; for $i^*$ we have $\delta_{i^*}=0$.
    We fix the disutility of $i^*$ to $\xi$, and continue by finding the second largest disutility, and so on. At each iteration we fix the disutility of one agent by solving $O(n)$ LPs, so
    the total number of LPs is $O(n^2)$.
\end{proof}

Next, we prove that we can check in polynomial time whether an outcome is Pareto optimal. 

\begin{proposition} \label{thm:po_check}
    Determining whether an outcome~$\mathbf{x}'$ is Pareto optimal can be done in polynomial time.
\end{proposition}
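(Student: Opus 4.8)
The plan is to reduce the question to solving a single linear program, mirroring the LP machinery used for \egal{} in \Cref{thm:egal_compute}. Write $d_i^* := d_i(\mathbf{x}') = \sum_{j\in[m]}|s_{i,j}-x_j'|$ for the (known, easily computed) disutility that each agent~$i$ derives from the candidate outcome~$\mathbf{x}'$. Deciding Pareto optimality amounts to testing whether there exists some $\mathbf{y}\in\Delta^m$ with $d_i(\mathbf{y})\le d_i^*$ for all $i\in N$ and $d_i(\mathbf{y})<d_i^*$ for at least one~$i$. To handle the $\ell_1$ disutilities linearly, I would introduce variables $y_j$ for $j\in[m]$ together with auxiliary variables $z_{i,j}$ that serve as upper bounds on $|s_{i,j}-y_j|$, exactly as in the proof of \Cref{thm:egal_compute}.

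Concretely, I would solve the LP that minimizes $\sum_{i\in N}\sum_{j\in[m]} z_{i,j}$ subject to the constraints $\sum_{j\in[m]} y_j = 1$; $y_j\ge 0$ for each $j\in[m]$; $z_{i,j}\ge s_{i,j}-y_j$ and $z_{i,j}\ge y_j-s_{i,j}$ for each $i\in N$, $j\in[m]$; and the per-agent caps $\sum_{j\in[m]} z_{i,j}\le d_i^*$ for each $i\in N$. Since the objective drives each $z_{i,j}$ to its lower bound, at any optimal solution we have $z_{i,j}=|s_{i,j}-y_j|$, so $\sum_{i,j} z_{i,j}=\sum_{i\in N} d_i(\mathbf{y})$; the optimum value $V$ is thus the smallest total disutility attainable by any outcome under which no agent is worse off than in $\mathbf{x}'$. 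Note that $\mathbf{y}=\mathbf{x}'$ is feasible (setting $z_{i,j}=|s_{i,j}-x_j'|$), witnessing $V\le \sum_{i\in N} d_i^*$ and guaranteeing feasibility and boundedness of the LP.

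The correctness claim I would then establish is that $\mathbf{x}'$ is Pareto optimal if and only if $V=\sum_{i\in N} d_i^*$. If $V<\sum_{i\in N} d_i^*$, the minimizing $\mathbf{y}$ satisfies $d_i(\mathbf{y})\le d_i^*$ for all $i$ while $\sum_{i\in N} d_i(\mathbf{y})<\sum_{i\in N} d_i^*$, which forces a strict inequality for at least one agent, so $\mathbf{y}$ Pareto dominates $\mathbf{x}'$. Conversely, any $\mathbf{y}$ Pareto dominating $\mathbf{x}'$ is feasible for the LP (since $d_i(\mathbf{y})\le d_i^*$ for all~$i$) and has strictly smaller total disutility, forcing $V<\sum_{i\in N} d_i^*$. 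The main subtlety to get right is precisely this equivalence: it would be tempting to search for a strict improvement for a \emph{fixed} named agent, but by instead minimizing the aggregate of the $z_{i,j}$ under the per-agent ``no worse'' caps, a strict decrease in the total is equivalent to a strict decrease for \emph{some} agent, which is exactly Pareto domination. Since the LP has $O(nm)$ variables and $O(nm)$ constraints and linear programs are solvable in polynomial time, this yields a polynomial-time decision procedure.
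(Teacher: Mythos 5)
Your proposal is correct and follows essentially the same route as the paper: the identical LP (minimize the sum of the $z_{i,j}$ subject to the simplex constraints, the absolute-value linearization, and the per-agent caps $\sum_j z_{i,j}\le d_i(\mathbf{x}')$), with the same observation that optimality forces $z_{i,j}=|s_{i,j}-y_j|$ and that a strictly smaller objective value is equivalent to the existence of a Pareto-dominating outcome. Your explicit feasibility check and the remark about why minimizing the aggregate (rather than targeting a fixed agent) captures domination by \emph{some} agent are correct refinements of the same argument.
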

\begin{proof}
    Suppose we are given an outcome $\mathbf{x}'$ for an instance $\mathcal{I} = (\mathbf{s}_1,\dots, \mathbf{s}_n)$ and we want to determine whether ${\mathbf x}'$ is Pareto optimal.
    For each $i \in N$ and $j \in [m]$, let $z'_{i,j} = |x'_j - s_{i,j}|$. The quantities $z'_{i,j}$ can be computed from the input, and will appear in the constraints of the linear program below.
    
    We formulate a linear program as follows.
    \begin{equation*}
        \text{minimize } \sum_{i \in N} \sum_{j \in [m]} z_{i,j}, 
    \end{equation*}    
    subject to the following constraints:
    \begin{enumerate}[label=(\arabic*),topsep=4pt,itemsep=0pt]
    \item $\sum_{j \in [m]} x_j = 1$;
    
    \item $x_j \geq 0$ for each $j\in [m]$;
    
    \item $z_{i,j} \geq s_{i,j} - x_j \text{ and } z_{i,j} \geq x_j - s_{i,j}$ for each $i \in N$, $j \in [m]$;
    
    \item $\sum_{j \in [m]} z_{i,j} \leq \sum_{j \in [m]} z'_{i,j}$ for each $i\in N$.
    \end{enumerate}
    
    Just as in the proof of Proposition~\ref{thm:egal_compute}, 
    in an optimal solution it holds that $z_{i, j}=|s_{i, j}-x_j|$ for all $i\in N$, $j\in [m]$. Thus, condition~(4)
    can be interpreted as requiring that $d_i({\mathbf x})\le d_i({\mathbf x}')$ for all $i\in N$, i.e., optimal solutions to our LP correspond to outcomes that are at least as good as ${\mathbf x}'$ for all agents. 
    It follows that a solution that Pareto dominates ${\mathbf x}'$ corresponds to an outcome for which the value of the objective function is strictly less than $\sum_{i \in N} \sum_{j \in [m]} z'_{i,j}$.
    Therefore, to check whether ${\mathbf x}'$ is Pareto dominated by some other outcome, it suffices to solve our LP and compare the optimal value to $\sum_{i \in N} \sum_{j \in [m]} z'_{i,j}$. 
\end{proof}

Finally, we present the proof of Lemma~5.3.

\medians*

     \begin{proof}
     Let $p=|A|$ and $q=|B|$.
     
     If $p$ is odd, there exists some $a\in A$ with $\text{med}(A)= a$,
         and two disjoint multisets $A^-, A^+\subseteq A\setminus \{a\}$ such that $|A^-|=|A^+|=\frac{p-1}{2}$,  $a'\le a$ for all $a'\in A^-$, 
         and $a''\ge a$ for all $a''\in A^+$.
         
     If $p$ is even, there exist $a^-, a^+\in A$ with $a^-\le a^+$ and $\text{med}(A)= (a^-+a^+)/2$,
         and two disjoint multisets $A^-, A^+\subseteq A\setminus \{a^-, a^+\}$ such that $|A^-|=|A^+|=\frac{p-2}{2}$, $a'\le a^-$ for all $a'\in A^-$, 
         and $a''\ge a^+$ for all $a''\in A^+$. 
         
     Similarly, we can represent $B$ as $B=B^-\cup\{b\}\cup B^+$ (if $q$ is odd) or
     $B=B^-\cup\{b^-, b^+\}\cup B^+$ (if $q$ is even).
     This representation has the property that $|B^-|=|B^+|$ 
     and furthermore, $b'\le b$ for all $b'\in B^-$ and $b''\ge b$ for all $b''\in B^+$
     (if $q$ is odd), and 
     $b'\le b^-$ for all $b'\in B^-$ and $b''\ge b^+$ for all $b''\in B^+$
     (if $q$ is even).   
     We consider the following cases.

     \medskip
         \underline{Case 1}: $p$ and $q$ are both odd. Then $a=b=z$, and we have 
         $c\le z$ for each $c\in A^-\cup B^-$ and $c\ge z$ for each $c\in A^+\cup B^+$.
         Thus, the median of $A\cup B$ is $(a+b)/2 = z$.

         \medskip
         \underline{Case 2}: $p$ is odd and $q$ is even. 
         Then $a=(b^-+b^+)/2=z$, and hence $b^-\le a\le b^+$.
         For each $c\in A^-\cup B^-\cup\{b^-\}$ we have $c\le a$, and 
         for each $c\in A^+\cup B^+\cup\{b^+\}$ we have $c\ge a$. It follows that 
         the median of $A\cup B$ is $a=z$.

         \medskip
         \underline{Case 3}: $p$ is even and $q$ is odd. In this case, the median is $b=z$; the argument is the same as in the previous case with the roles of $A$ and $B$ reversed.

         \medskip
         \underline{Case 4}: $p$ and $q$ are both even. Then $(a^-+a^+)/2=(b^-+b^+)/2=z$. Assume without loss of generality that $b^-\le a^-$, so $b^+\ge a^+$.
         For each $c\in A^-\cup B^-\cup\{b^-\}$ we have $c\le a^-$, and
          for each $c\in A^+\cup B^+\cup\{b^+\}$ we have $c\ge a^+$. It follows that 
         the median of $A\cup B$ is $(a^-+a^+)/2=z$. 

     \medskip
     This completes the proof of the lemma.
     \end{proof}

\section{Moving Phantoms Rules}
\label{app:moving-phantoms}

In this appendix, we consider two additional moving phantoms rules, namely, the \ladder{} rule of \citet{freeman2024project} and the piecewise uniform (\pu{}) rule of \citet{caragiannis2024truthful}.
The moving phantoms of these rules are defined as follows: 
\begin{equation*}
    f_k^\ladder{}(t) = \max \left(t - \frac{k}{n}, 0\right) \text{ for all } k \in \{0,\dots,n\} \text{ and all } t \in [0,1]
\end{equation*}
and 
\begin{equation*}
    f_k^\pu{}(t) = 
    \begin{cases}
        0 & \text{if } t < \frac{1}{2} \text{ and } \frac{k}{n} < \frac{1}{2}\\
        \frac{4tk}{n}-2t & \text{if } t < \frac{1}{2} \text{ and } \frac{k}{n} \geq \frac{1}{2} \\
        \frac{k(2t-1)}{n} & \text{if } t \geq \frac{1}{2} \text{ and } \frac{k}{n} < \frac{1}{2} \\
        \frac{k(3-2t)}{n}-2+2t & \text{if } t \geq \frac{1}{2} \text{ and } \frac{k}{n} \geq \frac{1}{2} 
    \end{cases}\qquad
    \text{ for all } k \in \{0,\dots,n\} \text{ and all } t \in [0,1].
\end{equation*}

\citet[Thms.~2 and 3]{freeman2021truthfulbudget} have shown that every moving phantoms rule is strategyproof and score-monotone, so both \ladder{} and \pu{} satisfy these axioms. Furthermore, \citet[p.~9709]{freeman2024project} and \citet[Thm.~3]{caragiannis2024truthful} proved that both rules satisfy single-minded proportionality. 
We shall therefore investigate these rules with respect to our remaining axioms.
As it turns out, these rules exhibit similar behavior to \im{}---a noteworthy exception is that \pu{} fails reinforcement while \im{} and \ladder{} satisfy this property. 

To begin with, we show that \im{}, \ladder{}, and \pu{} all reduce to the \emph{uniform phantom} rule of \citet{caragiannis2016truthful} when there are only $m=2$ candidates.
Consequently, the three rules coincide and exhibit the same properties in this case. 
\begin{proposition}\label{prop:movingphantomm=2}
    \im{}, \ladder{}, and \pu{} coincide when $m = 2$. 
\end{proposition}
\begin{proof}
    To prove this claim, we recall Proposition 1 of \citet{freeman2021truthfulbudget}, which states that the uniform phantom rule is the only aggregation rule for $m=2$ that satisfies anonymity, continuity, single-minded proportionality, and strategyproofness. 
    Moreover, these authors observed that, in addition to satisfying strategyproofness, every moving phantoms rule is anonymous and continuous \citep[p.~10]{freeman2021truthfulbudget}. Finally, as we mentioned earlier, \im{}, \ladder{}, and \pu{} all satisfy single-minded proportionality. 
    It follows that when $m = 2$, all three rules coincide with the uniform phantom rule, and therefore with one another.
\end{proof}

We now analyze \ladder{} and \pu{} separately, starting with \ladder{}.

\begin{theorem}
    The following claims hold.
    \begin{enumerate}[label=(\arabic*),topsep=4pt,itemsep=0pt]
        \item \ladder{} satisfies range-respect (and therefore Pareto optimality and score-unanimity) when $m=2$, but fails score-unanimity (and therefore range-respect and Pareto optimality) when $m \geq 3$ and $n \geq 2$.
        \item \ladder{} satisfies score-representation when $m=2$, but fails to do so for all $m \geq 3$ and $n \geq 2$. 
        \item \ladder{} satisfies independence when $m= 2$, but fails to do so for all $m\geq 3$ and $n\geq 2$. 
        \item \ladder{} satisfies reinforcement.
        \item \ladder{} satisfies participation.
    \end{enumerate}
\end{theorem}
\begin{proof}
    We prove each of the claims separately. 
    Moreover, we omit the proofs for the case $m=2$, as all corresponding claims follow from \Cref{prop:movingphantomm=2} together with the analysis of \im{}.\medskip

    \noindent \textbf{Claim 1}: We only need to show that \ladder{} fails score-unanimity when $m\geq 3$ and $n\geq 2$. To this end, consider the instance $\mathcal{I}^5$ from the proof of \Cref{thm:efficiencyprops}, which we reproduce below for convenience. All candidates $c_j$ with $j\geq 4$ receive a score of $0$ from all agents and can thus be ignored.
 
\begin{center}
        \begin{tabular}{ c | c c c c }
          $\mathcal{I}^5$ & $s_{i,1}$ & $s_{i,2}$ & $s_{i,3}$  \\ 
         \hline \hline
         $1$ & $\frac{n+1}{n+2}$ & $\frac{1}{n+2}$ & $0$ \\ 
         $i\in \{2,\dots, n\}$ &  $\frac{n+1}{n+2}$ & $0$ & $\frac{1}{n+2}$ 
        \end{tabular}       
    \end{center}
    For this instance, score-unanimity requires that $x_1 = \frac{n+1}{n+2}$. 
    
    If $n=2$, we claim that \ladder{} assigns a probability of $\frac{2}{3}$ to $c_1$. 
    Indeed, for $t^* = \frac{2}{3}$, we have $f_0^\ladder(t^*) = \frac{2}{3}$, $f_1^\ladder(t^*)=\frac{1}{6}$, and $f_2^\ladder{}(t^*)=0$. 
    Hence, the medians for candidates $c_1$, $c_2$, $c_3$ are $\frac{2}{3}$, $\frac{1}{6}$, $\frac{1}{6}$, respectively.
    Since the medians sum up to $1$, this means that \ladder{} returns $(\frac{2}{3}, \frac{1}{6}, \frac{1}{6})$, which violates score-unanimity.
    
    If $n \geq 3$, we claim that \ladder{} assigns a probability of $\frac{n^2+n-1}{n(n+2)}=\frac{n+1-\frac{1}{n}}{n+2}$ to $c_1$. 
    Indeed, for $t^*=\frac{n^2+n-1}{n(n+2)}$, it holds for all $k\in \{0,\dots, n\}$ that $f_k^\ladder(t^*)\leq f_0^\ladder{}(t^*)=\frac{n^2+n-1}{n(n+2)}<\frac{n+1}{n+2}$, so the median for $c_1$ is $\frac{n^2+n-1}{n(n+2)}$. 
    Next, for $c_2$, we observe that
    \begin{align*}
    f_{n-1}^\ladder(t^*)=\frac{n^2+n-1}{n(n+2)}-\frac{n-1}{n}=\frac{n^2+n-1-(n-1)(n+2)}{n(n+2)}=\frac{1}{n(n+2)}<\frac{1}{n+2}.
    \end{align*}
    Since $n-1$ agents assign a score of $0$ to $c_2$, this means that the median for this candidate is $\frac{1}{n(n+2)}$. 
    Finally, for $c_3$, we note that $f^\ladder_{n-2}(t^*)=\frac{1}{n(n+2)}+\frac{1}{n}>\frac{1}{n+2}$, so the median for this candidate is $\frac{1}{n+2}$. 
    Since the three medians sum up to $1$, \ladder{} returns $(\frac{n^2+n-1}{n(n+2)}, \frac{1}{n(n+2)}, \frac{1}{n+2})$, which violates score-unanimity.\medskip

    \noindent \textbf{Claim 2}: We only need to show that \ladder{} fails score-representation for all $m\geq 3$ and $n\geq 2$. To this end, consider again the instance $\mathcal{I}^5$ in the proof of Claim~1. In this instance, all agents assign score $\frac{n+1}{n+2}$ to $c_1$, but \ladder{} assigns a score strictly less than $\frac{n+1}{n+2}$ to this candidate, so score-representation is violated.\medskip

    \noindent \textbf{Claim 3}: To see that \ladder{} fails independence for $m\geq 3$ and $n\geq 2$, observe that \ladder{} is unanimous, i.e., if every agent reports the same score vector ${\mathbf s} = (s_1, \dots, s_m)$, the rule outputs $\mathbf s$ (see the proof of Claim~1 of \Cref{thm:consistencyprops}).
    On the other hand, we have shown in Claim~1 that \ladder{} fails score-unanimity for all $m\geq 3$ and $n\geq 2$. 
    These two observations together suffice to conclude that \ladder{} fails independence.\medskip

    \noindent \textbf{Claim 4}: We next show that \ladder{} satisfies reinforcement. 
    Assume for contradiction that this is not true.
    This means that there exist two instances $\mathI^1$ and $\mathI^2$ with disjoint electorates $N^1$ and $N^2$ such that $\ladder$ outputs a vector $\mathbf{x}$ for both $\mathI^1$ and $\mathI^2$ but a different vector $\mathbf{y}$ for the profile $\mathI^3$ which concatenates $\mathI^1$ and $\mathI^2$. 
    Let $n_1=|N^1|$, $n_2=|N^2|$, $n_3=|N^1\cup N^2|=n_1+n_2$, and for $j\in [3]$ and $i\in \{0,\dots,n_j\}$, let $p_i^j$ denote the position of the $i$-th lowest phantom at the time when $\mathbf{x}$ (for $j\in[2]$) or $\mathbf{y}$ (for $j=3$) is returned for $\mathI^j$. 
    In particular, this means that \ladder{} chooses for each instance $\mathI^j$ and each candidate $c_k$ the median of the multiset $\{s_{i,k}^j\colon i\in N^j\}\cup \{p_i^j\colon i\in \{0,\dots, n_j\}\}$. 
    For any $z\in [0,1]$ and $j\in[3]$, let $p^j(z):=|\{i\in \{0,\dots, n_j\}\colon p_i^j\leq z\}|$ be the number of phantoms that are at most~$z$ for profile $\mathI^j$, and for any $k\in[m]$, let $q^j_k(z):=|\{i\in N^j\colon s_{i,k}^j\leq z\}|$ be the number of agents who report a score of at most $z$ for candidate $c_k$. 
    Since $\mathI^3$ is a concatenation of $\mathI^1$ and $\mathI^2$, we have $q^3_k(z)=q^1_k(z)+q^2_k(z)$ for all $k\in [m]$ and $z\in [0,1]$. 

    Since $\mathbf{x}\neq\mathbf{y}$ and each of these vectors sums up to $1$, there must exist an index $k\in [m]$ such that $y_k<x_k$. 
    Because \ladder{} returns $\mathbf{x}$ for $\mathI^1$ and $\mathI^2$, this means that $p^j(x_k)+q^j_k(x_k)\geq n_j+1$ and $p^j(y_k)+q^j_k(y_k)\leq n_j$ for $j\in \{1,2\}$. 
    Moreover, it holds that $p^3(y_k)+q^3_k(y_k)\geq n_3+1=n_1+n_2+1$. 
    Since $q^3_k(y_k)=q^1_k(y_k)+q^2_k(y_k)$, we infer that
    \[p^3(y_k)\geq n_1+n_2+1-q_k^1(y_k)-q_k^2(y_k)\geq p^1(y_k)+p^2(y_k)+1.\]
    We will next show that 
    \begin{align}
    p^3(z)\geq p^1(z)+p^2(z)-1 \label{eq:phantoms}
    \end{align}
    for all $z \in [0,1]$. 
    To this end, let $z^*\in [0,1]$ be such that $p^3(z^*)=p^3(y_k)$ and $p^3(z^*-\epsilon)<p^3(y_k)$ for all $\epsilon>0$, which means that $z^*\leq y_k$. 
    Since $p^j(z^*)\leq p^j(y_k)$ for $j\in [2]$, we have $p^3(z^*)\geq p^1(z^*)+p^2(z^*)+1$. 
    Furthermore, by definition of $z^*$, there must be a phantom at $z^*$ for $\mathI^3$. 
    
    Fix any $z\in [0,1]$.
    Since we have shown that \eqref{eq:phantoms} holds for $z = z^*$, we may assume that $z \ne z^*$.
    We consider two cases.
    \medskip

    \underline{Case~1}: 
    $z<z^*$. 
    By definition of \ladder{}, we have $p^3(z)=p^3(z^*)-\lceil{(z^*-z)n_3}\rceil$. 
    On the other hand, it holds that $p^j(z)\leq p^j(z^*)-\lfloor (z^*-z)n_j\rfloor$ for $j\in [2]$, as there is at least one phantom in each interval of the form $[r, r+\frac{1}{n_j}]$ for any~$r$. 
    Combining these with the fact that $p^3(z^*)\geq p^1(z^*)+p^2(z^*)+1$, we get 
    \begin{align*}
        p^3(z)&=p^3(z^*)-\lceil{(z^*-z)n_3}\rceil\\
        &\geq  p^1(z^*)+p^2(z^*)+1-\lceil{(z^*-z)n_3}\rceil\\
        &\geq p^1(z)+p^2(z)+1+\lfloor (z^*-z)n_1\rfloor+\lfloor (z^*-z)n_2\rfloor-\lceil (z^*-z)n_3\rceil.
    \end{align*}
    
    Hence, in order to establish \eqref{eq:phantoms}, it suffices to show that $\lfloor (z^*-z)n_1\rfloor+\lfloor (z^*-z)n_2\rfloor-\lceil (z^*-z)n_3\rceil\geq -2$. 
    To this end, we first recall that $n_3=n_1+n_2$, so we need to show that $\lfloor (z^*-z)n_1\rfloor+\lfloor (z^*-z)n_2\rfloor-\lceil (z^*-z)(n_1+n_2)\rceil\geq -2$. 
    If $(z^*-z)(n_1+n_2)$ is an integer, we can simply drop the ceiling function and infer that $\lfloor (z^*-z)n_1\rfloor-(z^*-z)n_1+\lfloor (z^*-z)n_2\rfloor- (z^*-z)n_2\geq -2$. 
    Suppose now that $(z^*-z)(n_1+n_2)$ is not an integer, so we have $\lceil (z^*-z)(n_1+n_2)\rceil= \lfloor (z^*-z)(n_1+n_2)\rfloor+1$.
    Thus, it remains to prove that $\lfloor (z^*-z)n_1\rfloor+\lfloor (z^*-z)n_2\rfloor-\lfloor (z^*-z)(n_1+n_2)\rfloor\geq -1$. 
    Letting $\epsilon = (z^*-z)(n_1+n_2)-  \lfloor (z^*-z)(n_1+n_2)\rfloor > 0$, we have 
    \begin{align*}
    \lfloor (z^*-z)n_1\rfloor&+\lfloor (z^*-z)n_2\rfloor-\lfloor (z^*-z)(n_1+n_2)\rfloor\\
    &\geq (z^*-z)n_1-1 + (z^*-z)n_2 -1 - (z^*-z)(n_1+n_2)+\epsilon
    >-2.
     \end{align*}
    Since all terms in the first expression are integers, it holds that $\lfloor (z^*-z)n_1\rfloor +\lfloor (z^*-z)n_2\rfloor-\lfloor (z^*-z)(n_1+n_2)\rfloor \ge -1$, as desired.

    \medskip

    \underline{Case 2}: $z>z^*$. 
    First, if $p^3(z)=n_3+1$ (i.e., all phantoms in $\mathI^3$ are at most $z$), then $p^3(z)\geq p^1(z)+p^2(z)-1$ holds because $n_3=n_1+n_2$, $p^1(z)\leq n_1+1$, and $p^2(z)\leq n_2+1$.
    Assume therefore that $p^3(z)<n_3+1$, which means that at least one phantom is above $z$. 
    By definition of \ladder{}, we have $p^3(z)=p^3(z^*)+\lfloor(z-z^*)n_3\rfloor$. On the other hand, it holds that $p^j(z)\leq p^j(z^*)+\lceil(z-z^*)n_j\rceil$ for $j\in [2]$. 
    Combining these with the fact that $p^3(z^*)\geq p^1(z^*)+p^2(z^*)+1$, we get 
    \begin{align*}
        p^3(z)&=p^3(z^*)+\lfloor(z-z^*)n_3\rfloor\\
        &\geq p^1(z^*)+p^2(z^*)+1+\lfloor(z-z^*)n_3\rfloor\\
        &\geq p^1(z)+p^2(z) +1  - \lceil(z-z^*)n_1\rceil - \lceil(z-z^*)n_2\rceil+\lfloor(z-z^*)n_3\rfloor. 
    \end{align*}
    
    Hence, in order to establish \eqref{eq:phantoms}, it suffices to show that $- \lceil(z-z^*)n_1\rceil - \lceil(z-z^*)n_2\rceil+\lfloor(z-z^*)n_3\rfloor\geq -2$. 
    To this end, we recall that $n_3=n_1+n_2$. 
    If $(z-z^*)(n_1+n_2)$ is an integer, we can simply drop the floor function and infer that $- \lceil(z-z^*)n_1\rceil + (z-z^*)n_1 - \lceil(z-z^*)n_2\rceil+ (z-z^*)n_2 \geq -2$.
    Suppose now that $(z-z^*)(n_1+n_2)$ is not an integer, so we have $\lfloor(z-z^*)(n_1+n_2)\rfloor=\lceil(z-z^*)(n_1+n_2)\rceil-1$. 
    Thus, it remains to prove that $- \lceil(z-z^*)n_1\rceil - \lceil(z-z^*)n_2\rceil+\lceil(z-z^*)(n_1+n_2)\rceil\geq -1$. 
    Letting $\epsilon=\lceil(z-z^*)(n_1+n_2)\rceil-(z-z^*)(n_1+n_2)>0$, we have
    \begin{align*}
        - \lceil(z-z^*)n_1\rceil &- \lceil(z-z^*)n_2\rceil+\lceil(z-z^*)(n_1+n_2)\rceil\\
        &\geq -(z-z^*)n_1-1-(z-z^*)n_2-1+(z-z^*)(n_1+n_2)+\epsilon
        >-2. 
    \end{align*}
    Since all terms in the first expression are integers, it holds that $- \lceil(z-z^*)n_1\rceil - \lceil(z-z^*)n_2\rceil+\lceil(z-z^*)(n_1+n_2)\rceil\geq -1$, as desired.
    \medskip
    
    It follows that \eqref{eq:phantoms} holds for both Cases~1 and 2.

    \medskip

    We will now derive a contradiction by showing that $y_\ell\leq x_\ell$ for all $\ell\in [m]$. 
    Indeed, since $y_k<x_k$ by assumption, this implies that $\sum_{\ell\in [m]} y_\ell<\sum_{\ell\in [m]} x_\ell$, so one of our two output vectors is not a feasible score vector, thereby yielding a contradiction. 
    Fix an index $\ell\in [m]$. 
    Since \ladder{} returns the score $x_\ell$ for candidate $c_\ell$ in $\mathcal{I}^1$ and $\mathcal{I}^2$, we have $p^1(x_\ell)+q^1_\ell(x_\ell)\geq n_1+1$ and $p^2(x_\ell)+q^2_\ell(x_\ell)\geq n_2+1$. 
    As $q^3_\ell(x_\ell)=q^1_\ell(x_\ell)+q^2_\ell(x_\ell)$ and $p^3(x_\ell)\geq p^1(x_\ell)+p^2(x_\ell)-1$, we conclude that 
    \begin{align*}
        p^3(x_\ell)+q^3_\ell(x_\ell)\geq p^1(x_\ell)+q^1_\ell(x_\ell)+p^2(x_\ell)+q^2_\ell(x_\ell)-1\geq n_1+n_2+1 = n_3+1. 
    \end{align*}
    Since $y_\ell$ is the smallest value $y$ in the multiset $\{s_{i,k}^3\colon i\in N^3\}\cup \{p_i^3\colon i\in \{0,\dots, n_3\}\}$ such that $p^3(y)+q^3_\ell(y)\geq n_3+1$, it follows that $y_\ell \le x_\ell$, as desired.\medskip

    \noindent \textbf{Claim 5}: Finally, we show that \ladder{} satisfies participation. 
    Assume for contradiction that this is not true. 
    Thus, there exist two instances $\mathI$ and $\mathI'$ with corresponding outcomes $\mathbf{x}$ and $\mathbf{x'}$ such that $\mathI'$ can be obtained from $\mathI$ by adding a single agent $i$ and $d_i(\mathbf{x})<d_i(\mathbf{x}')$. 
    Let $\mathI''$ be the instance derived from $\mathI'$ by letting agent $i$ report $\mathbf{x}$. 
    By reinforcement (Claim~4), \ladder{} returns $\mathbf{x}$ for $\mathI''$. 
    However, this means that agent $i$ can manipulate in $\mathI'$ by reporting $\mathbf{x}$, thereby contradicting the strategyproofness of \ladder{}.
\end{proof}

Lastly, we proceed to the analysis of \pu{}.

\begin{theorem}
    The following claims hold.
    \begin{enumerate}[label=(\arabic*),topsep=4pt,itemsep=0pt]
        \item \pu{} satisfies range-respect (and therefore score-unanimity) if $m\leq 3$ or  $n= 2$ or $(m,n) = (4,4)$, but fails score-unanimity (and therefore range-respect) if $m \ge 4$ and $n\ge 3$ and $(m,n) \ne (4,4)$.
        It satisfies Pareto optimality when $m=2$ or $n=2$, but fails to do so for all $m\geq 3$ and $n\geq 3$.
        \item \pu{} satisfies score-representation when $m=2$, but fails to do so for all $m \geq 3$ and $n\geq 2$.
        \item \pu{} satisfies independence when $m= 2$, but fails to do so for all $m\geq 3$ and $n\geq 2$. 
        \item \pu{} satisfies reinforcement when $m=2$, but fails to do so for all $m\geq 3$.
        \item \pu{} satisfies participation.
    \end{enumerate}
\end{theorem}

\begin{proof}
    We prove each of the claims separately. 
    Further, just as for \ladder{}, we omit the proofs for $m=2$, as all corresponding claims follow from \Cref{prop:movingphantomm=2} and the analysis of \im{}.\medskip

    \noindent \textbf{Claim 1}:
    First, observe that if $n = 2$, one phantom of \pu{} reaches $1$ before the remaining two phantoms move away from $0$.
    When the first phantom reaches~$1$ (at time $t = 1/2$), the medians for the $m$ candidates are $\min(s_{1,1},s_{2,1}), \min(s_{1,2},s_{2,2}),\dots,\min(s_{1,m},s_{2,m})$, which sum to at most $\sum_{j\in [m]}s_{1,j} = 1$.
    Hence, we may assume that normalization occurs at time $t^* \ge 1/2$.
    At time $t^*$, one phantom is at~$1$ while another phantom is at~$0$, so for every $j\in [m]$, we have $\min(s_{1,j},s_{2,j}) \le x_j \le \max(s_{1,j},s_{2,j})$.
    Thus, range-respect is satisfied. By \Cref{thm:efficiency_implications}, this further means that \pu{} is Pareto optimal when $n=2$. 

    Next, consider the case $m = 3$.
    Since there is always a phantom at~$0$, it holds that $x_j \le \max_{i\in N}s_{i,j}$ for all $j\in [3]$.
    If normalization occurs at time $t^* \ge 1/2$, then one phantom is at~$1$ and $x_j \ge \min_{i\in N}s_{i,j}$ for all $j\in [3]$, so range-respect is satisfied.
    Thus, assume that normalization occurs at time $t^* < 1/2$, and suppose for contradiction that $x_j < \min_{i\in N}s_{i,j}$ for some $j\in [3]$. 
    Without loss of generality, let $j = 1$.
    Note that $\lfloor \frac{n}{2}\rfloor + 1$ phantoms remain at~$0$ for $t^* < 1/2$, and recall that for each $j\in [3]$, $x_j$ is the $(n+1)$-th smallest value among the $n+1$ phantoms and the $n$ agents' scores for $c_j$.
    Since $(n+1) - (\lfloor \frac{n}{2}\rfloor + 1) = \lceil \frac{n}{2}\rceil$, we have that $x_j$ is at most the $\lceil \frac{n}{2}\rceil$-th smallest value among $s_{1,j},\dots,s_{n,j}$.
    This implies for each $j\in \{2,3\}$ that at most $\lceil \frac{n}{2}\rceil - 1$ agents $i\in [n]$ satisfy $s_{i,j} < x_j$.
    Since $n - 2(\lceil \frac{n}{2}\rceil - 1) = 2+n - 2\lceil \frac{n}{2}\rceil \ge 1$, there exists an index $i^*\in [n]$ such that $x_2 \le s_{i^*,2}$ and $x_3 \le s_{i^*,3}$.
    Moreover, since $x_1 < \min_{i\in N}s_{i,1}$, it holds that $x_1 < s_{i^*,1}$.
    Putting these together yields $x_1+x_2+x_3 < s_{i^*,1}+s_{i^*,2}+s_{i^*,3} = 1$, a contradiction.
    Hence, \pu{} satisfies range-respect when $m = 3$.

    The proof that PU satisfies range-respect when $(m,n) = (4,4)$ is similar.
    We assume that normalization occurs at time $t^* < 1/2$ and $x_1 < \min_{i\in N}s_{i,1}$.
    For each $j\in\{2,3,4\}$, at most one index $i\in [4]$ has the property that $s_{i,j} < x_j$. In particular, when $t^*<\frac{1}{2}$, three phantoms are at $0$, so there exists an index $i^*\in[4]$ such that $x_j \le s_{i^*,j}$ for all $j\in\{2,3,4\}$.
    Since $x_1 < s_{i^*,1}$, it follows that $\sum_{j\in[4]} x_j < \sum_{j\in[4]} s_{i^*,j} = 1$, a contradiction.

    We now turn to score-unanimity and first show that \pu{} fails this property whenever $m \geq 4$, $n \geq 3$, and $n\neq 4$. 
    Consider the following instance~$\mathcal{I}^{18}$, where all candidates $c_j$ with $j\ge 5$ receive score $0$ from all agents and can be ignored.
    \begin{center}
        \begin{tabular}{ c | c c c c }
          $\mathcal{I}^{18}$ & $s_{i,1}$ & $s_{i,2}$ & $s_{i,3}$ & $s_{i,4}$  \\ 
         \hline \hline
         $i\in \{1,\dots,\lceil\frac{n}{3}\rceil\}$ & $\frac{n-1}{n}$ & $\frac{1}{2n}$ & $\frac{1}{2n}$ & $0$ \\ 
         $i\in \{\lceil\frac{n}{3}\rceil+1,\dots,\lceil\frac{2n}{3}\rceil\}$ & $\frac{n-1}{n}$ & $\frac{1}{2n}$ & $0$ & $\frac{1}{2n}$\\ 
         $i\in \{\lceil\frac{2n}{3}\rceil+1,\dots,n\}$ & $\frac{n-1}{n}$ & $0$ & $\frac{1}{2n}$ & $\frac{1}{2n}$\\ 
        \end{tabular}
\end{center}
    We claim that \pu{} returns the vector $(1-\frac{3}{2n}, \frac{1}{2n},  \frac{1}{2n},  \frac{1}{2n})$ for this instance, which violates score-unanimity for $c_1$. 
    To see this, let $t^*=\frac{1}{2}-\frac{3}{4n}$. 
    First, it holds that $f_k^\pu{}(t^*)\leq f_n^\pu{}(t^*)=1-\frac{3}{2n}$ for all $k\in \{0,\dots, n\}$, so the median of the multiset $\{f_0^\pu{}(t^*), \dots, f_n^\pu(t^*), s_{1,1},\dots, s_{n,1}\}$ is $1-\frac{3}{2n}$. 
    Next, because $t^*<\frac{1}{2}$, there are $\lceil\frac{n+1}{2}\rceil$ phantoms at~$0$. 
    Moreover, for each $j\in \{2,3,4\}$, there are at most $\lceil{\frac{n}{3}}\rceil$ agents~$i$ with $s_{i,j}=0$. 
    One can check that $\lceil\frac{n+1}{2}\rceil+\lceil{\frac{n}{3}}\rceil\leq n$ for all $n\geq 3$ with $n\neq 4$.
    Now, observe that if $k>\frac{n}{2}$, then $k\geq \frac{n}{2}+\frac{1}{2}$ and consequently $f_k^\pu{}(t^*)=\frac{4t^*k}{n}-2t^*\geq \frac{4t^*(n/2+1/2)}{n}-2t^*=\frac{2t^*}{n}$. 
    By definition of~$t^*$, we have $\frac{2t^*}{n}=\frac{1}{n}-\frac{3}{2n^2}\geq \frac{1}{n}-\frac{1}{2n} = \frac{1}{2n}$. 
    Hence, all phantoms $f_k^\pu{}(t^*)$ with $k>\frac{n}{2}$ are at or above the values submitted by the agents.
    It follows that $x_j=\frac{1}{2n}$ for all $j\in \{2,3,4\}$. 

    Next, we show that \pu{} also fails score-unanimity when $m \geq 5$ and $n = 4$. 
    Consider the following instance $\mathcal{I}^{19}$, where all candidates $c_j$ with $j\geq 6$ receive score $0$ from all agents and can be ignored.
    \begin{center}
        \begin{tabular}{ c | c c c c c }
          $\mathcal{I}^{19}$ & $s_{i,1}$ & $s_{i,2}$ & $s_{i,3}$ & $s_{i,4}$ & $s_{i,5}$  \\ 
         \hline \hline
         $1$ & $\frac{7}{10}$ & $\frac{1}{10}$ & $\frac{1}{10}$ & $\frac{1}{10}$ & $0$\\
         $2$ & $\frac{7}{10}$ & $\frac{1}{10}$ & $\frac{1}{10}$ & $0$ & $\frac{1}{10}$\\
         $3$ & $\frac{7}{10}$ & $\frac{1}{10}$ & $0$ & $\frac{1}{10}$ & $\frac{1}{10}$\\
         $4$ & $\frac{7}{10}$ & $0$ & $\frac{1}{10}$ & $\frac{1}{10}$ & $\frac{1}{10}$ 
        \end{tabular}       
    \end{center}    
    For this instance, score-unanimity requires that $x_1 = \frac{7}{10}$. 
    However, it can be verified that \pu{} returns the vector $(\frac{3}{5}, \frac{1}{10}, \frac{1}{10}, \frac{1}{10}, \frac{1}{10})$ at time $t^*=\frac{3}{10}$.

    Finally, since Pareto optimality is equivalent to range respect if $n = 2$ (\Cref{thm:efficiency_implications}), it remains to show that \pu{} fails Pareto optimality when $m\geq 3$ and $n\geq 3$. 
    Consider the instance $\mathI^1$, which was used in \Cref{thm:efficiencyprops} to show that \avg{} fails Pareto optimality. Additional candidates receive score~$0$ from every agent. 
    \begin{center}
        \begin{tabular}{ c | c c c c}
          $\mathcal{I}^1$ & $s_{i,1}$ & $s_{i,2}$ & $s_{i,3}$ \\ 
         \hline \hline
         $1$ & $0$ & $\frac{1}{2}$ & $\frac{1}{2}$\\ 
         $2$ & $\frac{1}{2}$ & $\frac{1}{2}$ & $0$\\ 
         $i\in\{3,\dots,n\}$ & $0$ & $0$ & $1$ \\
        \end{tabular}       
    \end{center}
    We distinguish between the following three cases.
    
    \medskip
    \underline{Case~1}: $n = 3$.
    In this case, \pu{} returns $\mathbf{x}=(\frac{1}{9}, \frac{4}{9}, \frac{4}{9})$ at time $t^*=\frac{2}{3}$. 
    However, $\mathbf{x}$ is not Pareto optimal, as every agent weakly prefers the vector $\mathbf{y}=(\frac{1}{18}, \frac{1}{2},\frac{4}{9})$ and agent~$1$ strictly prefers $\mathbf{y}$ to $\mathbf{x}$.

    \medskip
    \underline{Case~2}: $n = 4$.
    In this case, \pu{} returns $\mathbf{x}=(\frac{1}{6}, \frac{1}{3}, \frac{1}{2})$ at time $t^*=\frac{5}{6}$. 
    However, $\mathbf{x}$ is not Pareto optimal, as every agent weakly prefers the vector $\mathbf{y} = (0,\frac{1}{2},\frac{1}{2})$ and agent~$1$ strictly prefers $\mathbf{y}$ to $\mathbf{x}$.

    \medskip
    \underline{Case~3}: $n \ge 5$.
    We claim that \pu{} returns $\mathbf{x}=(\frac{1}{n}\cdot \frac{4}{5}, \frac{2}{n}\cdot \frac{4}{5}, 1-\frac{3}{n}\cdot\frac{4}{5})$ at time $t^*=\frac{9}{10}$. 
    First, the median for candidate $c_1$ is $f_1^\pu(t^*)=\frac{1}{n}\cdot (2\cdot\frac{9}{10}-1)=\frac{1}{n}\cdot\frac{4}{5}$, because $n-1$ agents report $0$, $f_0^\pu{}(t^*)=0$, and $f_1^\pu(t^*)<\frac{1}{2}$. 
    Next, for $c_2$, the median is $f_2^\pu{}(t^*)=\frac{2}{n}\cdot (2\cdot\frac{9}{10}-1)=\frac{2}{n}\cdot\frac{4}{5}$, because $n-2$ agents report~$0$ and $f_2^\pu{}(t^*) < \frac{1}{2}$. 
    Finally, for $c_3$, the median is $f^{\pu{}}_{n-2}(t^*)=\frac{n-2}{n}\cdot (3-2\cdot \frac{9}{10})-2+2\cdot\frac{9}{10}=1-\frac{3}{n}\cdot \frac{4}{5}$, because $n-2$ agents report~$1$, $f^{\pu{}}_{n-2}(t^*)<f^{\pu{}}_{n-1}(t^*)<f^{\pu{}}_{n}(t^*)$, and $f^{\pu{}}_{n-2}(t^*)>\frac{1}{2}$. 
    Since the medians sum up to $1$, it follows that \pu{} indeed returns $\mathbf{x}$. 
    However, $\mathbf{x}$ is not Pareto optimal, as every agent weakly prefers $\mathbf{y}=(0, \frac{3}{n}\cdot \frac{4}{5}, 1-\frac{3}{n}\cdot \frac{4}{5})$ and agent $1$ strictly prefers $\mathbf{y}$ to $\mathbf{x}$. 
    
    \medskip
    This completes the proof of Claim~1.
    
    \medskip

    \noindent \textbf{Claim 2}: 
    Next, we show that \pu{} fails score-representation for all $m \geq 3$ and $n \geq 2$. 
    Consider the following instance $\mathI^{20}$, where all candidates $c_j$ with $j\geq 4$ receive score $0$ from all agents and can be ignored. 
    \begin{center}
        \begin{tabular}{ c | c c c }
          $\mathcal{I}^{20}$ & $s_{i,1}$ & $s_{i,2}$ & $s_{i,3}$   \\ 
         \hline \hline
         $1$ & $0$ & $\frac{1}{2}$ & $\frac{1}{2}$\\ 
         $i \in \{2,\dots,n\}$ & $1$ & $0$ & $0$\\ 
        \end{tabular}
    \end{center}
    We claim that \pu{} returns $\mathbf{x} = (1-\frac{4}{3n}, \frac{2}{3n}, \frac{2}{3n})$.
    To see this, let $t^*=\frac{5}{6}$. 
    Because $n-1$ agents report~$1$ for candidate $c_1$, the median for this candidate is $f_{n-1}^\pu{}(t^*)=\frac{n-1}{n}\cdot(3-2\cdot\frac{5}{6})-2+2\cdot\frac{5}{6}=1-\frac{4}{3n}$. 
    Next, for both $c_2$ and $c_3$, because $n-1$ agents report $0$, the median is $f_1(t^*)=\frac{1}{n}(2\cdot\frac{5}{6}-1)=\frac{2}{3n}$; note that $f_1(t^*)<0.5$ as $n\geq 2$. 
    Since the medians sum up to $1$, it follows that \pu{} indeed returns $\mathbf{x}$. 
    Now, because $1-\frac{4}{3n}<\frac{n-1}{n}=1\cdot \frac{\mathcal{N}(\mathI^{20}, c_1, 1)}{n}$, score-representation is violated for $c_1$. 
    
    \medskip

    \noindent \textbf{Claim 3}:
    For our third claim, we turn to independence. 
    When $m=3$, \pu{} satisfies score-unanimity (Claim~1) and anonymity \citep{freeman2021truthfulbudget} for all $n\geq 2$. 
    Hence, it must fail independence in these cases, as the average rule is the only rule that satisfies anonymity, score-unanimity, and independence when $m\geq 3$ and $n\geq 2$ (Claim~1 of \Cref{lem:indavg}), and the average rule is different from \pu{} in these cases.\footnote{For example, the average rule fails strategyproofness for all $n\ge 2$ and $m\ge 2$ (Claim~2 of \Cref{thm:strategyproofness}), while \pu{} satisfies strategyproofness for all $n,m$ \citep{freeman2021truthfulbudget}.} 
    Moreover, we can extend this conclusion to any $m\geq 4$ by noting that \pu{} is invariant under adding candidates that receive score~$0$ from all agents. 
    \medskip

    \noindent \textbf{Claim 4}: 
    We now show that \pu{} fails reinforcement when $m\geq 3$. 
    Consider the following instances $\mathI^{21}$ and $\mathI^{22}$. 
    As usual, all candidates $c_j$ with $j\geq 4$ obtain score $0$ from all agents and can thus be ignored.  

    \begin{center}
    \begin{tabular}{ c | c c c c }
          $\mathcal{I}^{21}$ & $s_{i,1}$ & $s_{i,2}$ & $s_{i,3}$ \\ 
         \hline \hline
         $1$ & $1$ & $0$ & $0$ \\
         $2$ & $1$ & $0$ & $0$ \\
         $3$ & $0$ & $1$ & $0$\\
         $4$ & $0$ & $0$ & $1$\\
         $5$ & $0$ & $0$ & $1$\\
    \end{tabular}
    $\qquad\qquad\qquad\qquad$
    \begin{tabular}{ c | c c c c }
          $\mathcal{I}^{22}$ & $s_{i,1}$ & $s_{i,2}$ & $s_{i,3}$ \\ 
         \hline \hline
         $1$ & $1$ & $0$ & $0$ \\
         $2$ & $0$ & $1$ & $0$ \\
         $3$ & $\frac{1}{2}$ & $0$ & $\frac{1}{2}$ \\
         $4$ & $\frac{1}{2}$ & $0$ & $\frac{1}{2}$ \\
         $5$ & $0$ & $\frac{1}{2}$ & $\frac{1}{2}$ \\
    \end{tabular}
    \end{center}

    Note that \pu{} returns $\mathbf{x}=(\frac{2}{5}, \frac{1}{5}, \frac{2}{5})$ for $\mathI^{21}$ due to single-minded proportionality. 
    We claim that \pu{} also returns $\mathbf{x}$ for $\mathI^{22}$. To see this, let $t^*=\frac{3}{4}$, which implies that the phantom locations are $\{0, \frac{1}{10}, \frac{1}{5}, \frac{2}{5}, \frac{7}{10}, 1\}$.
    Hence, the medians can be computed as follows:
    \begin{align*}
        &\left\{0, 0, 0, \frac{1}{10}, \frac{1}{5}, \mathbf{\frac{2}{5}}, \frac{1}{2}, \frac{1}{2}, \frac{7}{10}, 1, 1\right\}, \\
        &\left\{0, 0, 0, 0, \frac{1}{10}, \mathbf{\frac{1}{5}}, \frac{2}{5}, \frac{1}{2}, \frac{7}{10}, 1, 1\right\}, \\
        &\left\{0, 0, 0, \frac{1}{10}, \frac{1}{5}, \mathbf{\frac{2}{5}}, \frac{1}{2}, \frac{1}{2}, \frac{1}{2}, \frac{7}{10}, 1\right\}.
    \end{align*}
    This shows that \pu{} indeed returns $\mathbf{x}$ for $\mathI^{22}$.

    Now, consider the instance $\mathI^*$ that concatenates $\mathI^{21}$ and $\mathI^{22}$. 
    Let $t^* = \frac{23}{26}$, which implies that the $k$-th phantom is at $\frac{k}{10}\cdot \frac{10}{13}=\frac{k}{13}$ if $k<5$ and $\frac{k}{10}(3-\frac{23}{13})-2+\frac{23}{13}=\frac{k}{10}\cdot \frac{16}{13} - \frac{3}{13}$ if $k\geq 5$. 
    More explicitly, this means that the phantoms are at $\{0, \frac{1}{13}, \frac{2}{13}, \frac{3}{13}, \frac{4}{13}, \frac{5}{13}, \frac{66}{130}, \frac{82}{130}, \frac{98}{130}, \frac{114}{130}, 1\}$. 
    Hence, \pu{} returns the outcome $\mathbf{y}=(\frac{5}{13}, \frac{3}{13}, \frac{5}{13})$ for $\mathI^*$, as witnessed by the following computation of the medians, where the values reported by agents are gray and the phantom values are black: 
    \begin{align*}
        &\left\{\color{gray}0,0,0,0,0, \color{black}0,\frac{1}{13}, \frac{2}{13}, \frac{3}{13}, \frac{4}{13}, \mathbf{\frac{5}{13}}, \color{gray}\frac{1}{2}, \frac{1}{2}, \color{black}\frac{66}{130}, \frac{82}{130}, \frac{98}{130}, \frac{114}{130}, 1, \color{gray}1,1, 1\right\},\\
        &\left\{\color{gray}0,0,0,0,0, 0, 0, \color{black}0,\frac{1}{13}, \frac{2}{13}, \mathbf{\frac{3}{13}}, \frac{4}{13}, {\frac{5}{13}}, \color{gray}\frac{1}{2}, \color{black}\frac{66}{130}, \frac{82}{130}, \frac{98}{130}, \frac{114}{130}, 1, \color{gray}1,1\right\},\\
        &\left\{\color{gray}0,0,0,0,0, \color{black}0,\frac{1}{13}, \frac{2}{13}, \frac{3}{13}, \frac{4}{13}, \mathbf{\frac{5}{13}}, \color{gray}\frac{1}{2}, \frac{1}{2}, \frac{1}{2}, \color{black}\frac{66}{130}, \frac{82}{130}, \frac{98}{130}, \frac{114}{130}, 1, \color{gray}1,1\right\}.
    \end{align*}
    Since $\mathbf{y} \ne\mathbf{x}$, it follows that \pu{} fails reinforcement.
    \medskip

    \noindent \textbf{Claim 5}: Finally, to show that \pu{} satisfies participation, we will make use of the following general statement.
    For a moving phantoms rule~$F$, we denote by $f_{k,n}^F{}(t)$ the position of the $k$-th phantom at time~$t$ when there are $n$ agents.

    \begin{lemma}
    \label{lem:phantom-participation}
    Suppose that $F$ is a moving phantoms rule such that $f_{k-1,n}^F(t) \le f_{k-1,n-1}^F(t) \le f_{k,n}^F(t)$ for all $n\geq 2$, $k\in[n]$, and $t\in[0,1]$.
    Then, $F$ satisfies participation.
    \end{lemma}
    
    \begin{innerproof}
    Consider any instance $\mathI$ with $n-1$ agents for some $n\ge 2$, and let $F(\mathI) = \mathbf{x}$.
    We will show that if we add an agent~$i'$ who reports~$\mathbf{x}$ to obtain an instance $\mathI'$, then $F(\mathI') = \mathbf{x}$ as well.
    This suffices to establish the participation of $F$.
    Indeed, if agent $i'$ instead reports some other vector $\mathbf{y}$ and the outcome changes in such a way that $d_{i'}(\mathbf{x}) < d_{i'}(F(\mathI'))$, then agent~$i'$ can manipulate by reporting $\mathbf{x}$ in $\mathI'$, which contradicts the fact that $F$ is strategyproof \citep{freeman2021truthfulbudget}.
    
    Next, let $t$ be the time when $\mathbf{x}$ is returned for $\mathI$.
    We claim that at the same time $t$, the outcome $\mathbf{x}$ is also returned for $\mathI'$.
    Consider any $r\in [m]$.
    In the instance $\mathI$, at time $t$, the number of phantoms that are at most $x_r$ and the number of agents who report a score of at most $x_r$ for candidate $c_r$ sum up to at least~$n$.
    Now, at the same time $t$ in $\mathI'$, the number of phantoms that are at most $x_r$ does not decrease because $f_{k-1,n}^F(t) \le f_{k-1,n-1}^F(t)$ for all $k\in[n]$, and the number of agents who report a score of at most~$x_r$ increases by~$1$ due to agent~$i'$.
    This means that the two quantities sum up to at least $n+1$ for $\mathI'$.
    By similar arguments using the fact that $f_{k-1,n-1}^F(t) \le f_{k,n}^F(t)$ for all $k\in[n]$, the number of phantoms that are at least $x_r$ and the number of agents who report a score of at least $x_r$ sum up to at least $n+1$ for $\mathI'$.
    Hence, the median for candidate $c_r$ at time~$t$ remains $x_r$, and therefore $F(\mathI') = \mathbf{x}$.
    \end{innerproof}
    
    It remains to show that \pu{} satisfies the condition of \Cref{lem:phantom-participation}.
    We divide our analysis into two cases depending on the value of $t$.
    \medskip
    
    \underline{Case~1}: $t < \frac{1}{2}$. First, if $\frac{k-1}{n-1} < \frac{1}{2}$, then $f_{k-1,n-1}^\pu{}(t) = 0 \le f_{k,n}^\pu{}(t)$ by the definition of \pu{}. 
    Since $\frac{k-1}{n} \le \frac{k-1}{n-1} < \frac{1}{2}$, it further follows that $f_{k-1,n}^\pu{}(t) = 0 = f_{k-1,n-1}^\pu{}(t)$, so our desired inequality holds.

    Next, assume that $\frac{k-1}{n-1} \ge \frac{1}{2}$, so $f_{k-1,n-1}^\pu{}(t) = \frac{4t(k-1)}{n-1} - 2t$.
    Since $\frac{k}{n} \ge \frac{k-1}{n-1} \ge \frac{1}{2}$, we have $f_{k,n}^\pu{}(t) = \frac{4tk}{n} - 2t \ge \frac{4t(k-1)}{n-1} - 2t = f_{k-1,n-1}^\pu{}(t)$.
    Next, if $\frac{k-1}{n} \ge \frac{1}{2}$, it holds that $f_{k-1,n}^\pu{}(t) = \frac{4t(k-1)}{n} - 2t \le \frac{4t(k-1)}{n-1} - 2t = f_{k-1,n-1}^\pu{}(t)$.
    On the other hand, if $\frac{k-1}{n} < \frac{1}{2}$, then $f_{k-1,n}^\pu{}(t) = 0 \le f_{k-1,n-1}^\pu{}(t)$.
    \medskip
    
    \underline{Case~2}:
    $t \ge \frac{1}{2}$.
    Assume first that $\frac{k-1}{n-1} < \frac{1}{2}$, which means that $f_{k-1,n-1}^\pu{}(t) = \frac{(k-1)(2t-1)}{n-1}$.
    Since $\frac{k-1}{n} \le \frac{k-1}{n-1} < \frac{1}{2}$, we have $f_{k-1,n}^\pu{}(t) = \frac{(k-1)(2t-1)}{n} \le \frac{(k-1)(2t-1)}{n-1} = f_{k-1,n-1}^\pu{}(t)$.
    Next, if $\frac{k}{n} < \frac{1}{2}$, it holds that $f_{k,n}^\pu{}(t) = \frac{k(2t-1)}{n} \ge \frac{(k-1)(2t-1)}{n-1} = f_{k-1,n-1}^\pu{}(t)$.
    On the other hand, if $\frac{k}{n}\ge \frac{1}{2}$, then since $\frac{k-1}{n-1} < \frac{1}{2}$, it must be that $\frac{k}{n} = \frac{1}{2}$.
    Hence, we have $f_{k,n}^\pu{}(t) = t - \frac{1}{2} \ge \frac{(k-1)(2t-1)}{n-1} = f_{k-1,n-1}^\pu{}(t)$.

    Assume now that $\frac{k-1}{n-1} \ge \frac{1}{2}$, so $f_{k-1,n-1}^\pu{}(t) = \frac{(k-1)(3-2t)}{n-1} - 2 + 2t$.
    Since $\frac{k}{n} \ge \frac{k-1}{n-1} \ge \frac{1}{2}$, we have $f_{k,n}^\pu{}(t) = \frac{k(3-2t)}{n} - 2 + 2t \ge \frac{(k-1)(3-2t)}{n-1} - 2 + 2t = f_{k-1,n-1}^\pu{}(t)$.
    Next, if $\frac{k-1}{n} \ge \frac{1}{2}$, it holds that $f_{k-1,n}^\pu{}(t) = \frac{(k-1)(3-2t)}{n} - 2 + 2t \le \frac{(k-1)(3-2t)}{n-1} - 2 + 2t = f_{k-1,n-1}^\pu{}(t)$.
    On the other hand, if $\frac{k-1}{n} < \frac{1}{2}$, then since $\frac{k-1}{n-1} \ge \frac{1}{2}$, it must be that $\frac{k-1}{n-1} = \frac{1}{2}$.
    Hence, we have $f_{k-1,n-1}^\pu{}(t) = t - \frac{1}{2} \ge \frac{(k-1)(2t-1)}{n} = f_{k-1,n}^\pu{}(t)$.
    \medskip

    In both cases, it holds that $f_{k-1,n}^\pu{}(t) \le f_{k-1,n-1}^\pu{}(t) \le f_{k,n}^\pu{}(t)$, as desired.
\end{proof}

\end{document}